\documentclass[11pt,letterpaper]{article}
\usepackage[T1]{fontenc}
\usepackage{lineno}
\usepackage{amsfonts,amsmath,amsthm,amssymb,dsfont,mathtools}
\usepackage[margin=1in]{geometry}
\usepackage[colorlinks,citecolor=blue,linkcolor=blue,urlcolor=red]{hyperref}
\usepackage{url}
\usepackage[linesnumbered,boxed,ruled,vlined]{algorithm2e}
\usepackage{thm-restate}
\usepackage{enumitem}
\usepackage{etoolbox}
\usepackage{graphicx}
\usepackage{mathdots}
\usepackage[normalem]{ulem}
\usepackage{xspace}
\usepackage{authblk}
\xspaceaddexceptions{]\}}
\usepackage{comment}
\usepackage{tabu}
\usepackage{framed}
\usepackage{float,wrapfig}
\usepackage{subfigure}
\usepackage{tikz}
\usepackage[draft]{fixme}
\usepackage[capitalise]{cleveref}

\newcommand*\linenomathpatch[1]{%
  \cspreto{#1}{\linenomath}%
  \cspreto{#1*}{\linenomath}%
  \csappto{end#1}{\endlinenomath}%
  \csappto{end#1*}{\endlinenomath}%
}

\linenomathpatch{equation}
\linenomathpatch{gather}
\linenomathpatch{multline}
\linenomathpatch{align}
\linenomathpatch{alignat}
\linenomathpatch{flalign}

\usetikzlibrary{arrows,arrows.meta,decorations.pathmorphing}

\theoremstyle{plain}
\newtheorem{theorem}{Theorem}[section]
\newtheorem{lemma}[theorem]{Lemma}

\newtheorem{fact}[theorem]{Fact}
\newtheorem{prob}{Problem}

\theoremstyle{definition}
\newtheorem{definition}[theorem]{Definition}

\fxsetup{mode=multiuser,theme=color, layout=inline}
\FXRegisterAuthor{tk}{atk}{\color{brown} {\bf Tomasz}}

\setlist[enumerate]{nosep, topsep=1ex}
\setlist[itemize]{nosep, topsep=1ex}
\setlist[description]{nosep}
\allowdisplaybreaks
\lineskiplimit=0pt

\def\ShowAuthNotes{1}
\ifnum\ShowAuthNotes=1
\newcommand{\authnote}[2]{\ \\ \textcolor{red}{\parbox{0.9\linewidth}{[{\footnotesize {\bf #1:} { {#2}}}]}}\newline}
\else
\newcommand{\authnote}[2]{}
\fi


\newcommand{\poly}{\operatorname{\mathrm{poly}}}
\newcommand{\polylog}{\operatorname{\mathrm{polylog}}}

\newcommand{\Oh}{\mathcal{O}}
\newcommand{\Ohtilde}{\widetilde{\Oh}}

\newcommand{\caA}{\mathcal{A}}
\newcommand{\caB}{\mathcal{B}}

\def\fragmentoc#1#2{(#1\dd #2]}

\def\position#1{[#1]}

\newcommand{\dd}{\mathinner{.\,.\allowbreak}}

\newcommand{\ed}{\mathsf{ed}}

\newcommand{\LZ}{\mathsf{LZ}}
\newcommand{\rev}[1]{\overline{#1}}

\DeclareMathOperator*{\argmin}{arg\,min}
\DeclareMathOperator{\LF}{LF}
\DeclareMathOperator{\BWT}{BWT}
\DeclareMathOperator{\SA}{SA}
\DeclareMathOperator{\LCP}{LCP}
\DeclareMathOperator{\LCE}{LCE}
\DeclareMathOperator{\ISA}{ISA}
\DeclareMathOperator{\DSA}{DSA}

\def\twoheadleadsto{\tikz[baseline=(a.base)]{\draw[%
    decorate,decoration={zigzag,segment length=4, amplitude=.9},%
    ] (0,0) -- (.25, 0);%
    \draw[%
    -{Classical TikZ Rightarrow}.{Classical TikZ Rightarrow},%
    ] (.25, 0) -- (.4, 0);%
    \node (a) at (.4/2,-.03) {\phantom{\(\leadsto\)}};%
}}

\newcommand{\onto}{\twoheadleadsto}

\title{Near-Optimal Quantum Algorithms for Bounded Edit Distance and Lempel--Ziv Factorization}

\author[1]{Daniel Gibney}
\author[2]{Ce Jin}
\author[3]{Tomasz Kociumaka}
\author[4]{Sharma V. Thankachan}

\affil[1]{University of Texas at Dallas, Richardson, TX, U.S.; \href{mailto:daniel.j.gibney@gmail.com}{daniel.j.gibney@gmail.com}}
\affil[2]{Massachusetts Institute of Technology, Cambridge, MA, U.S.; \href{mailto:cejin@mit.edu}{cejin@mit.edu}}
\affil[3]{Max Planck Institute for Informatics, SIC, Saarbr\"ucken, Germany; \href{mailto:tomasz.kociumaka@mpi-inf.mpg.de}{tomasz.kociumaka@mpi-inf.mpg.de}}
\affil[4]{North Carolina State University, Raleigh, NC, U.S.; \href{mailto:sharma.thankachan@ncsu.edu}{sharma.thankachan@ncsu.edu}}

\date{\vspace{-1cm}}

\begin{document}
	\maketitle
	\thispagestyle{empty}
\begin{abstract}
Measuring sequence similarity and compressing texts are among the most fundamental tasks in string algorithms. In this work, we develop near-optimal \emph{quantum} algorithms for the central problems in these two areas: computing the edit distance of two strings [Levenshtein, 1965] and building the Lempel--Ziv factorization of a string [Ziv \& Lempel, 1977], respectively.

Classically, the edit distance of two length-$n$ strings can be computed in $\Oh(n^2)$ time and there is little hope for a significantly faster algorithm: an $\Oh(n^{2-\epsilon})$-time procedure would falsify the Strong Exponential Time Hypothesis.
Quantum computers might circumvent this lower bound, but even $3$-approximation of edit distance is not known to admit an $\Oh(n^{2-\epsilon})$-time quantum algorithm.
In the \emph{bounded} setting, where the complexity is parameterized by the value $k$ of the edit distance, there is an $\Oh(n+k^2)$-time classical algorithm [Myers, 1986; Landau \& Vishkin, 1988], which is optimal (up to sub-polynomial factors and conditioned on SETH) as a function of $n$ and $k$. Our first main contribution is a quantum $\Ohtilde(\sqrt{nk}+k^2)$-time algorithm that uses $\Ohtilde(\sqrt{nk})$ queries, where the $\Ohtilde(\cdot)$ notation hides polylogarithmic factors.
This query complexity is unconditionally optimal, and any significant improvement in the time complexity would break the quadratic barrier for the unbounded~\mbox{setting}.
Interestingly, our divide-and-conquer quantum algorithm reduces the bounded edit distance problem to the special case where the two input strings have small Lempel--Ziv factorizations. Then, it combines our quantum LZ compression algorithm with a classical subroutine computing edit distance between compressed strings.
The LZ factorization problem can be classically solved in $\Oh(n)$ time, which is unconditionally optimal in the quantum setting (even for computing just the size $z$ of the factorization).
We can, however, hope for a quantum speedup if we parameterize the complexity in terms of~$z$.
Already a generic oracle identification algorithm [Kothari 2014] yields the optimal query complexity of \smash{$\Ohtilde(\sqrt{nz})$} at the price of exponential running time. Our second main contribution is a quantum algorithm that also achieves the optimal time complexity of \smash{$\Ohtilde(\sqrt{nz})$}.
The key insight is the introduction of a novel LZ-like factorization of size $\Oh(z \log^2 n)$, which allows us to efficiently compute each new factor through a combination of classical and quantum algorithmic techniques. From this, we obtain the desired LZ factorization.
Using existing results [Kempa \& Kociumaka, 2020], we can then obtain the string's run-length encoded Burrows-Wheeler Transform (BWT)---another classical compressor~[Burrows \& Wheeler, 1994], and a  structure for longest common extensions (LCE) queries in \smash{$\Ohtilde(z)$} extra time~[I, 2017; Nishimoto et al., 2016].

Lastly, we obtain efficient indexes of size $\Ohtilde(z)$ for counting and reporting the occurrences of a given pattern and for supporting more general suffix array and inverse suffix array queries, based on the recent \emph{r-index} [Gagie, Navarro, and Prezza, 2020].
These indexes can be constructed in $\Ohtilde(\sqrt{nz})$ quantum time, which allows us to solve many fundamental problems, like longest common substring, maximal unique matches, and Lyndon factorization, in time~$\Ohtilde(\sqrt{nz})$.

\end{abstract}
	\newpage

\section{Introduction}\label{sec:intro}

String processing constitutes one of the oldest fields within theoretical computer science. Fifty years after the discovery of some of its foundations, such as the suffix tree~\cite{Weiner1973} and the linear-time exact pattern matching algorithm~\cite{Morris1970}, it remains a lively research area. Developments have been motivated both by the practical challenges of handling the rapidly growing volume of sequential 
data, especially in bioinformatics and data compression, and by the theoretical interest in demanding open questions.

More recently, the rapid progress in quantum computing brought increased attention to the development of quantum algorithms for fundamental string processing problems. As a precursor of this line of work, a paper of Hariharan and Vinay~\cite{DBLP:journals/jda/HariharanV03} demonstrated that a clever application of Grover search~\cite{DBLP:conf/stoc/Grover96} yields an $\Ohtilde(\sqrt{n})$-time\footnote{The $\Ohtilde(\cdot)$ notations hides factors polylogarithmic in the input size $n$. In particular $\Ohtilde(1)=\Oh(\log^{\Oh(1)} n)$.} quantum algorithm for exact pattern matching within a length-$n$ text; this time complexity is unconditionally optimal.
In the last few years, a series of works~\cite{DBLP:conf/innovations/GallS22,WY20,DBLP:conf/soda/AkmalJ22,JN23} resulted in a nearly-optimal quantum algorithm for the Longest Common Factor problem (also known as Longest Common Substring, it was the original motivation for the suffix trees) as well as other classic problems such as Lexicographically Minimal String Rotation; see also~\cite{boroujeni2021approximating,DBLP:conf/focs/AaronsonGS19,DBLP:conf/mfcs/AmbainisBIKKPSS20,quantdc} for quantum algorithms for some other string problems.
In this work, we develop quantum algorithms for two fundamental problems in string processing---Edit Distance and Lempel–Ziv (LZ77~\cite{DBLP:journals/tit/ZivL77}) Factorization---which are the central computational tasks in string similarity and text compression, respectively.

The edit distance (also known as the Levenshtein distance~\cite{Lev65}) between two strings is defined as the minimum number of character insertions, deletions, and substitutions (collectively called \emph{edits}) needed to transform one string into the other. 
Along with the Hamming distance (which allows substitutions only), it constitutes one of the two main measures of sequence (dis)similarity.
The edit distance of two strings of length at most $n$ can be classically computed in $\Oh(n^2)$ time using a textbook dynamic-programming algorithm~\cite{Vin68,NW70,Sel74,WF74}. One of the celebrated results of fine-grained complexity~\cite{BI18} is that any truly-subquadratic-time algorithm (working in $\Oh(n^{2-\epsilon})$ time for some constant $\epsilon > 0$) would violate the Strong Exponential Time Hypothesis (SETH)~\cite{IP01}. 
A quantum counterpart of SETH only excludes $\Oh(n^{1.5-\epsilon})$-time quantum algorithms~\cite{BPS21}, but no quantum speed-up is known for edit distance, and bridging the $\Omega(n^{1.5-\epsilon})$ lower bound with the $\Oh(n^2)$ upper bound remains a major open question~\cite{Rub19}.

One of the earliest ways to circumvent the quadratic complexity of Edit Distance is to parameterize the running time in terms of the value $k$ of the edit distance. A series of works from 1980s~\cite{Ukk85,Mye86,LV88} resulted in an $\Oh(n+k^2)$-time classical algorithm for this \emph{Bounded} Edit Distance problem.
This running time is optimal, up to subpolynomial factors and conditioned on SETH, as a function of $n$ and $k$.
More precisely, if the $\Oh(k^2)$ term can be reduced to $\Oh(k^{2-\epsilon})$ for some $k=\Theta(n^\kappa)$, with $\frac12<\kappa\le 1$, then 
a straightforward reduction translates this to an $\Oh(n^{2-\epsilon})$-time algorithm for unbounded Edit Distance.
The Bounded Edit Distance problem has been extensively studied: there are efficient sketching \& streaming algorithms~\cite{BZ16,JNW21,KPS21,BK23}, 
algorithms for compressed strings~\cite{GKLS22}, approximation algorithms~\cite{GKS19,KS20,BCFN22a}, and algorithms for preprocessed data~\cite{GRS20,BCFN22b}, to name just a few settings.
Many of these results are based on the general scheme of~\cite{LV88}, whose central component is an efficient implementation of longest common extension (LCE) queries\footnote{The LCE of two positions is (the length of) the longest substring that occurs at both positions.}.
Thus, when Jin and Nogler~\cite{JN23} obtained the optimal quantum trade-off for LCE queries, they asked whether their result could be applied to Bounded Edit Distance.
The best one could reasonably hope for would be $\Ohtilde(\sqrt{kn})$ query complexity and $\Ohtilde(\sqrt{kn}+k^2)$ time complexity. This is because already computing the number of $\texttt{1}$s in a length-$n$ binary string (which is its edit distance with $\texttt{0}^n$) requires $\Omega(\sqrt{kn})$ queries~\cite{Pat92,BBCMW01} and, through the aforementioned reduction, any improvement upon the $\Ohtilde(k^2)$ term (whenever it dominates, that is, for $k=\Theta(n^\kappa)$ with $\frac13 < \kappa \le 1$) would improve upon the $\Oh(n^2)$-time quantum algorithm for unbounded Edit Distance.
The first main contribution of this work is a quantum algorithm with the desired query and time complexities:

\begin{theorem}[Bounded Edit Distance]\label{thm:edit-main}
    There is a quantum algorithm that, given quantum oracle access to strings $X,Y\in \Sigma^{\le n}$, computes their edit distance $k:=\ed(X,Y)$, along with a sequence of $k$ edits transforming $X$ into $Y$, in query complexity $\Ohtilde(\sqrt{n+nk})$ and time complexity $\Ohtilde(\sqrt{n+nk} + k^2)$.
\end{theorem}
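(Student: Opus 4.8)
The plan is to follow the classical Landau--Vishkin scheme but to replace every use of LCE queries by a quantum LCE data structure and, crucially, to avoid the $\Omega(n)$ cost of reading the input by recursing on a divide-and-conquer decomposition in which the subinstances are compressible. First I would handle the regime $k=\Oh(1)$ (indeed any $k$ up to $\sqrt{n}$) directly: the Landau--Vishkin dynamic program over the $\Oh(k)$ diagonals needs only $\Oh(k^2)$ LCE queries, and each LCE query can be answered after an $\Ohtilde(\sqrt{n})$-time quantum preprocessing of $X$ and $Y$ (via the string-synchronizing-set / sparse-suffix machinery underlying~\cite{JN23}), so the total is $\Ohtilde(\sqrt n + k^2)$, matching the bound since $\sqrt{nk}\ge\sqrt n$. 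The substance is therefore the regime $\sqrt n \le k$, where $k^2\ge n$ and we must realize the $\Ohtilde(\sqrt{nk})$ query bound, i.e. we are allowed to read only an $\Ohtilde(\sqrt{nk})$-size portion of the input but still certify an exact distance.

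The key structural step is the reduction to the small-LZ special case. Split $X$ into $\Theta(k)$ blocks of length $\Oh(n/k)$; an optimal alignment of cost $k$ touches block boundaries at $\Oh(k)$ places, so it decomposes into $\Oh(k)$ sub-alignments, each aligning one block of $X$ against a fragment of $Y$ of length $\Oh(n/k + k_i)$, where $\sum_i k_i = k$. For each block I would first compute, using Grover search / quantum minimum-finding over the candidate $Y$-windows together with cheap LCE probes, the approximate position in $Y$ it aligns to and a crude upper bound on its local cost; this is the part that must be done within $\Ohtilde(\sqrt{nk})$ queries in total, and it is the main obstacle. Once a block of $X$ and its corresponding window of $Y$ are localized, both fragments have LZ factorization of size $\Oh(k_i + \text{(size of the relevant part of } \LZ(X),\LZ(Y)))$; I would invoke the quantum LZ-factorization algorithm (the $\Ohtilde(\sqrt{nz})$ result) to compute these factorizations, and then feed the compressed representations into the classical edit-distance-between-compressed-strings subroutine of~\cite{GKLS22}, which runs in time polynomial in the compressed size and in $k_i$. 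Summing over blocks and balancing gives query complexity $\Ohtilde(\sqrt{nk})$ and time $\Ohtilde(\sqrt{nk}+k^2)$; recovering the witnessing sequence of $k$ edits is done by concatenating the per-block alignments produced by the classical subroutine.

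Two technical points need care. First, we do not know $k$ in advance, so I would run the standard exponential-search wrapper, trying $k=1,2,4,8,\dots$; for a guess $\hat k$ the algorithm either certifies $\ed(X,Y)\le \hat k$ (and returns the alignment) or reports failure, and the geometric sum of costs $\sum \Ohtilde(\sqrt{n\hat k}+\hat k^2)$ is dominated by the last term, preserving the bound. Second, the block-localization step must be made robust: a block of $X$ need not align to a contiguous window of $Y$ of predictable length, so I would instead guess, for each block boundary, the corresponding position in $Y$ via a single global Grover-type search over the $\Oh(k)$-dimensional space of ``alignment break points'' constrained to lie within the $\Oh(k)$-width diagonal band --- this is exactly where a naive approach would spend $\Omega(k\cdot\sqrt{n/k})=\Omega(\sqrt{nk})$ queries per level and blow up by a $\log$ factor only, so the recursion depth must be kept at $\Oh(1)$ rather than $\Oh(\log n)$. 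I expect the bookkeeping that keeps the recursion shallow while still shrinking the instances to compressible pieces to be the most delicate part of the argument; everything downstream is a black-box call to the LZ algorithm and to~\cite{GKLS22}.
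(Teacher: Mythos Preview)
Your proposal has two genuine gaps, and both concern steps you flag as ``the main obstacle'' or ``the most delicate part'' but do not actually resolve.

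\textbf{The small-$k$ case is not as easy as claimed.} You assert that after $\Ohtilde(\sqrt n)$ quantum preprocessing each LCE query costs $\Ohtilde(1)$, yielding $\Ohtilde(\sqrt n + k^2)$ total. No such LCE data structure is known; the optimal quantum LCE tradeoff of~\cite{JN23} gives preprocessing $\Ohtilde(n/\tau)$ with per-query cost $\Ohtilde(\sqrt\tau)$, so $\Ohtilde(\sqrt n)$ preprocessing forces $\Ohtilde(n^{1/4})$ per query and hence $\Ohtilde(k^2 n^{1/4})$ for Landau--Vishkin. The paper states explicitly that the LCE route ``seems to get stuck at the query complexity of $\tilde\Theta(k\sqrt n)$'', which is strictly worse than $\sqrt{nk}$ for all $k\ge 2$. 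Their algorithm uses neither quantum LCE nor string synchronizing sets.

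\textbf{Fixed-length blocks are not compressible.} The heart of your plan is to cut $X$ into $\Theta(k)$ blocks of length $\Theta(n/k)$, compress each (with its $Y$-window) via \cref{thm:lz}, and feed the pieces to the classical compressed-edit-distance routine. But an arbitrary length-$(n/k)$ substring of $X$ can have LZ size $\Theta(n/k)$, so $\sum_i z_i$ may be $\Theta(n)$ and your total cost $\sum_i\Ohtilde(\sqrt{z_i\cdot n/k})$ becomes $\Ohtilde(n)$, not $\Ohtilde(\sqrt{nk})$. Your parenthetical ``$\Oh(k_i + \text{size of the relevant part of }\LZ(X))$'' is exactly the unjustified step: nothing bounds the LZ size of a block in terms of~$k$.

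The paper fixes this by turning the logic around. Instead of fixing block lengths and hoping for compressibility, it fixes $x=\lceil|X|/2\rceil$ and \emph{defines} the window $(i\dd j]$ around $x$ to be the maximal one with $|\LZ(X(x\dd j])|\le O(k)$ and $|\LZ(\rev{X(i\dd x]})|\le O(k)$. Compressibility is then by construction, and the boundaries $i,j$ are found by binary search plus \cref{thm:lz} in $\Ohtilde(\sqrt{kn})$. The nontrivial content is \cref{lem:intersect}: if this window is maximal (so the next character would push the LZ size above the threshold), then any optimal global alignment must intersect the optimal local alignment on both sides of $x$, hence the locally computed anchor $(x,y)$ is globally valid. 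The proof goes through a ``disjoint alignments imply compression'' lemma: two non-intersecting cost-$\le k$ alignments onto the same $Y$-fragment yield an LZ-like factorization of size $\Oh(k)$, contradicting maximality.

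Finally, because the algorithm is genuinely recursive (depth $\lceil\log|X|\rceil$), the naive exponential-search wrapper you describe would accumulate a constant-factor overhead per level and blow up polynomially. The paper handles this with a pause/resume scheduler: it tries thresholds $r^{2i}$ (with $r=\Theta(\log n)$), pauses a recursive call once its token budget is exhausted, re-tests the current anchor under the next threshold via \cref{lem:keepordiscard-algo}, and either resumes or restarts. This keeps the wasted work to a $1/\mathrm{polylog}$ fraction of the useful work.
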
 

Surprisingly, our algorithm uses neither quantum LCE queries nor the underlying technique of quantum string synchronizing sets~\cite{JN23};
that approach seems to get stuck at the query complexity of $\tilde{\Theta}(k\sqrt{n})$.
Instead of an LCE-based dynamic-programming procedure, we design a novel divide-and-conquer algorithm that crucially uses \emph{compressibility}.
If the input strings are compressible, we can retrieve their compressed representations and then solve the problem classically (the folklore implementation combines the Landau--Vishkin algorithm~\cite{LV88} with LCE queries on compressed strings~\cite{I17}; see~\cite{GKLS22}).
Otherwise, we exploit the locality of edit distance: in order to optimally align a given character, it suffices to compute the edit distance locally, on the largest \emph{compressible} context of that character (reusing the procedure mentioned above).
 Once we fix the alignment of a single character, the instance naturally decomposes into two \emph{independent} sub-instances asking to compute the edit distance to the left and the right of the aligned characters.
In the aforementioned statements, the \emph{compressibility} is quantified in terms of an upper bound $\Ohtilde(k)$ on the size of the LZ77 factorization.
Concurrently to this work, a similar divide-and-conquer strategy has been applied in a classical algorithm for \emph{weighted} bounded edit distance~\cite{CKW23}, where the LCE-based DP is incorrect~\cite{DGHKS23}.
Prior to that, compression has been used for computing edit distance only in the sketching algorithms of~\cite{KPS21,BK23},
where it comes more naturally because the sketches need to squeeze the input strings into $\Ohtilde(k^2)$ bits each so that their edit distance can be recovered.

The efficiency of our edit-distance algorithm thus depends on the construction of the LZ77 factorization~\cite{DBLP:journals/tit/ZivL77}.
The factorization is defined through a left-to-right scan of the text such that each new factor is either the leftmost occurrence of a symbol or an occurrence of the longest substring that also occurs earlier in the text; see Section~\ref{sec:prelim} for a formal definition and an example.

Finding the LZ factorization of a text is a fundamental problem on its own. In particular, it forms the basis of many practical compression algorithms, such as \texttt{zip}, \texttt{p7zip}, \texttt{gzip}, and \texttt{arj}.
This problem admits a classic linear-time solution~\cite{DBLP:journals/jacm/RodehPE81}, and it has been considered in a variety of other settings, including the external memory setting~\cite{DBLP:conf/dcc/KarkkainenKP14}, the dynamic setting~\cite{DBLP:journals/dam/NishimotoIIBT20}, and the packed setting~\cite{DBLP:conf/soda/BelazzouguiP16,DBLP:conf/soda/MunroNN17}. 
Our second main contribution is a novel quantum algorithm for LZ factorization, whose time complexity is optimal up to logarithmic factors.

\begin{restatable}[LZ77 factorization]{theorem}{lz}\label{thm:lz}
There is a quantum algorithm that, given a quantum oracle access to an unknown string $X\in \Sigma^n$, computes the LZ factorization $\LZ(X)$ using $\Ohtilde(\sqrt{zn})$ query and time complexity, where $z := |\LZ(X)|$ is the size of the factorization.
\end{restatable}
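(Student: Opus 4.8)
The matching lower bound is immediate: computing even the single number $z$ already requires $\Omega(\sqrt{zn})$ queries --- it subsumes counting the $\mathtt{1}$s in a binary string, whose edit distance with $\mathtt{0}^n$ is that count (cf.~\cref{sec:intro}). Thus $\Ohtilde(\sqrt{zn})$ is the right target; a generic oracle-identification procedure already attains this \emph{query} complexity, but in exponential \emph{time}, and the whole point is to match it in time. As $z$ is unknown, I would wrap everything in a doubling search over a guess $\hat z\in\{1,2,4,\dots\}$: run the construction below with parameter $\hat z$ and, whenever it certifies $\hat z<z$ (too many phrases emitted, or a search under the $\hat z$-sized budget fails), double $\hat z$ and restart. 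The cost telescopes to that of the last, successful guess $\hat z=\Theta(z)$, so from now on I treat $z$ as known up to a constant.

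\emph{Step 1: a relaxed, ``greedy up to rounding'' factorization.} Rather than $\LZ(X)$ directly, I would first build an \emph{LZ-like} factorization $X=f_1f_2\cdots f_{z'}$, where each $f_j$ is a single character or a copy $f_j=X\fragmentco{i}{i+\ell}$ of an earlier (possibly self-overlapping) occurrence $X\fragmentco{j'}{j'+\ell}$ with $j'<i$, and which is \emph{greedy up to rounding}: at the current position $i$ one emits a copy phrase whose length is a power of two and at least half the true longest-previous-factor length $\ell^\ast(i)$ (a literal of length $1$ when $\ell^\ast(i)=0$). The combinatorial heart is that any such factorization has only $z'=\Oh(z\log^2 n)$ phrases. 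One $\log n$ is the usual ``greedy up to a constant factor'' overhead: if $X\fragmentco{p_t}{p_{t+1}}$ is a copy phrase of $\LZ(X)$ with source at $j<p_t$, then for every $i\in\fragmentco{p_t}{p_{t+1}}$ the suffix $X\fragmentco{i}{p_{t+1}}$ occurs at $j+(i-p_t)<i$, so $\ell^\ast(i)\ge p_{t+1}-i$; hence each relaxed phrase swallows at least half of the unparsed remainder of the current $\LZ$ phrase, and at most $\Oh(\log n)$ relaxed phrases fit inside one $\LZ$ phrase. The second $\log n$ pays for locating the source of each copy phrase only within a dyadically specified region (a suffix of $X\fragmentco{0}{i}$ of length a power of two), which costs one further halving.

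\emph{Step 2: producing and then un-relaxing the factorization, in $\Ohtilde(\sqrt{zn})$ quantum time.} As $i$ sweeps left to right I would maintain a set of $\Ohtilde(z)$ \emph{anchors}: sampled substrings of the parsed prefix $X\fragmentco{0}{i}$, one family per dyadic length scale, picked by a locally consistent / string-synchronizing-set rule so that they fit in $\Ohtilde(z)$ words and every substring with an earlier occurrence contains, at each scale below its length, an anchor at a predictable offset. Quantum search builds this index: at each of the $\Oh(\log n)$ scales, Grover-type searches over the $\Oh(n)$ positions detect the $\Ohtilde(z)$ positions where a new anchor type first appears and a witnessing earlier occurrence of each, at amortized cost $\Ohtilde(\sqrt{n/z})$ per anchor --- hence $\Ohtilde(\sqrt{zn})$ overall --- with $\Ohtilde(1)$ incremental work each time $i$ advances. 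Given the index, the next relaxed phrase at $i$ is found by testing each dyadic length $2^t$: using the anchor at the right offset of $X\fragmentco{i}{i+2^t}$, decide in $\Ohtilde(1)$ time (plus $\Ohtilde(1)$ oracle probes to verify fingerprint equalities) whether $X\fragmentco{i}{i+2^t}$ reoccurs in $X\fragmentco{0}{i}$ and where, and keep the largest successful $2^t$; summed over the $\Ohtilde(z)$ phrases this is a further $\Ohtilde(z)$ work. Finally I would recover $\LZ(X)$ classically from the $\Ohtilde(z)$-size LZ-like factorization: convert it to a balanced SLP of size $\Ohtilde(z)$, support longest-previous-factor queries in $\Ohtilde(1)$ time by bootstrapping the LCE/navigation structures of~\cite{I17,DBLP:journals/dam/NishimotoIIBT20} (cf.~\cite{KK20}), and run the Landau--Vishkin-style greedy scan in $\Ohtilde(z)$ extra time.

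\emph{Main obstacle.} The crux is the amortization in Step 2: harvesting $\Ohtilde(z)$ anchors spread over a length-$n$ text at amortized cost $\Ohtilde(\sqrt{n/z})$ each forces the anchor-sampling rule to be simultaneously sparse enough (so there are only $\Ohtilde(z)$ anchors and the Grover searches over the $n$ positions are not re-run too often) and expressive enough (so every recurring substring is still caught at every relevant scale). Pinning down one rule that meets both demands, meshing it with the incremental left-to-right maintenance, and controlling fingerprint collisions even though the number of implicit character comparisons far exceeds the $\Ohtilde(\sqrt{zn})$ query budget, is where the real difficulty lies; the remaining ingredients --- quantum pattern matching~\cite{DBLP:journals/jda/HariharanV03} and classical LZ-from-compressed-representation machinery --- are by now standard.
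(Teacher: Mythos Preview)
Your proposal has a genuine gap at exactly the point you flag as the ``main obstacle.'' You never specify the anchor-sampling rule, never prove that it yields only $\Ohtilde(z)$ anchors while still catching every recurring substring at every scale, and never justify the $\Ohtilde(\sqrt{n/z})$ amortized cost per anchor. String-synchronizing sets have density $\Theta(n/\tau)$ for a chosen $\tau$, not $\Theta(z)$; tying the density to the \emph{a priori unknown} compressibility $z$ while keeping the samples locally computable is not something the existing machinery provides, and you give no construction. Since this is the load-bearing step---without it there is no $\Ohtilde(\sqrt{zn})$ bound---the plan as written does not constitute a proof.

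The paper sidesteps this difficulty entirely through a different and much more concrete mechanism. It introduces a new relaxed factorization, \emph{LZ-End$+\tau$}: each phrase must have an earlier occurrence ending either at a previous phrase boundary or at a position divisible by $\tau$. This has two effects. First, the set $\caP$ of relevant prefix endpoints has size only $\Oh(z_{e+\tau}+n/\tau)$, and a dynamic LCE structure over the reversed parsed prefix maintains their co-lexicographic order in $\Ohtilde(1)$ time per insertion---no anchors, no synchronizing sets, no QRAM. Second, although LZ-End alone is not monotone (so exponential search fails), the extra $\tau$-spaced endpoints restore monotonicity via a ``$\tau$-far property'' (\cref{lem:tau_far_property}): if some $h\in[j-\tau\dd j]$ admits a valid source, the same is true with $j$ replaced by $j-1$. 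One can therefore exponential-search on $j$, and for a fixed $j$ test the property by merging $\tau+1$ co-lex ranges of $\caP$ (via \cref{lem:k_stat}) and running a single Grover-based rightmost-mismatch search, in $\Ohtilde(\tau+\sqrt{\ell_i})$ time per phrase. Setting $\tau=\sqrt{n/z}$ balances to $\Ohtilde(\sqrt{zn})$; an adaptation of Kempa--Saha shows $z_{e+\tau}=\Oh(z\log^2 n)$ (\cref{lem:LZ-End-tau}); and the final conversion to true LZ77 uses the leftmost-occurrence structure of~\cite{DBLP:journals/cacm/KempaK22} rather than the LCE/SLP route you sketch.
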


The size $z$ of the LZ77 factorization is known to be within a polylogarithmic factor away from numerous other compressibility measures~\cite{DBLP:journals/csur/Navarro21a}. This includes the sizes of the smallest context-free grammar~\cite{DBLP:journals/tit/CharikarLLPPSS05}, the smallest bidirectional macro scheme~\cite{DBLP:journals/jacm/StorerS82}, and the smallest string attractor~\cite{DBLP:conf/stoc/KempaP18}, all of which are, however, NP-hard to compute exactly. Recently, the size $r$ of the run-length-encoded Burrows--Wheeler transform (BWT)~\cite{burrows1994block}
has also been shown to satisfy $r = \Oh(z \log^2 \frac{n}{z})$~\cite{DBLP:journals/cacm/KempaK22} and $r=\Omega(z / \log\frac{n}{z})$~\cite{GNP18}. 
This measure is classically computable in linear time, and the underlying compressed string representation constitutes the basis of the practical \texttt{bzip2} algorithm. 
The run-length-encoded BWT can be constructed in $\Ohtilde(r)$ time from the LZ77 factorization~\cite{DBLP:journals/cacm/KempaK22},
so \cref{thm:lz} immediately yields the following important corollary:
\begin{restatable}[Run-length-encoded BWT]{corollary}{rlbwt}\label{cor:alg_rlbwt}
There is a quantum algorithm that, given a quantum oracle to an unknown string $X\in \Sigma^n$, computes the run-length-encoded Burrows--Wheeler transform of $X$ using $\Ohtilde(\sqrt{rn})$ query and time complexity, where $r$ is the number of runs~in~the~BWT.
\end{restatable}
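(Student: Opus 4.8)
The plan is to derive \cref{cor:alg_rlbwt} by composing \cref{thm:lz} with the classical conversion of Kempa and Kociumaka~\cite{DBLP:journals/cacm/KempaK22} from the LZ77 factorization to the run-length-encoded BWT. Concretely, I would first run the algorithm of \cref{thm:lz} on $X$ to obtain $\LZ(X)$ in $\Ohtilde(\sqrt{zn})$ query and time complexity, where $z:=|\LZ(X)|$; observe that this algorithm already handles $z$ being unknown a priori, so no separate doubling search over the parameter is needed here. Then I would feed $\LZ(X)$ into the procedure of~\cite{DBLP:journals/cacm/KempaK22}, which builds the run-length-encoded BWT of $X$ in $\Ohtilde(z+r)$ time and issues no further oracle queries, since the LZ parsing is itself a lossless $\Oh(z)$-space encoding of $X$. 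Together this costs $\Ohtilde(\sqrt{zn})$ queries and $\Ohtilde(\sqrt{zn}+z+r)$ time.

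It then remains to re-express these bounds in terms of $r$ alone. For this I would invoke the known relationship between the two compressibility measures, namely $r=\Omega(z/\log\tfrac nz)$~\cite{GNP18} (see also~\cite{DBLP:journals/csur/Navarro21a}), which rearranges to $z=\Oh(r\log\tfrac nz)$; since $\log\tfrac nz\le\log n$ is absorbed by $\Ohtilde(\cdot)$, we get $z=\Ohtilde(r)$ and hence $\sqrt{zn}=\Ohtilde(\sqrt{rn})$. Moreover $r\le n$ (the BWT of a length-$n$ string has at most $n$ runs), so $r=\sqrt{r\cdot r}\le\sqrt{rn}$, and likewise $z=\Ohtilde(r)=\Ohtilde(\sqrt{rn})$. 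Substituting into the bounds from the previous paragraph yields $\Ohtilde(\sqrt{rn})$ query complexity and $\Ohtilde(\sqrt{rn})$ time complexity, as claimed; note that the $\Omega(r)$ cost of merely writing down the output is already subsumed.

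The only point requiring care is that the parameter governing the cost of \cref{thm:lz} is $z$, whereas the statement is phrased in terms of $r$, and neither quantity is known to the algorithm in advance. The argument sidesteps this by treating the $z$-parameterized algorithm as a black box and only invoking the worst-case inequality $z=\Ohtilde(r)$ afterwards, in the analysis, rather than in the algorithm itself. One should also verify that the conversion of~\cite{DBLP:journals/cacm/KempaK22} operates solely on the $\Oh(z)$-space LZ representation (not on $X$), so that it contributes $0$ to the query complexity; this is immediate from the statement of that result. No quantum ingredient beyond \cref{thm:lz} is needed.
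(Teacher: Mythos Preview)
Your proposal is correct and follows essentially the same two-step route as the paper: compute $\LZ(X)$ via \cref{thm:lz} and then run the classical LZ-to-RLBWT conversion of~\cite{DBLP:journals/cacm/KempaK22}, with the final complexity re-expressed in terms of $r$ using the known polylogarithmic equivalence between $z$ and $r$. If anything, your invocation of $r=\Omega(z/\log\tfrac{n}{z})$ from~\cite{GNP18} to obtain $z=\Ohtilde(r)$ is the cleaner direction for this step; the paper's Section~\ref{sec:other_encodings} instead cites the bound $r=\Oh(z\log^2 n)$ from~\cite{DBLP:journals/cacm/KempaK22}, which on its face goes the other way, though the intended conclusion is the same.
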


For most applications, compressing the data is only half of the battle. We also need to be able to perform computation over this data quickly, which is the motivation behind compressed text indexing. 
Traditional indexes such as suffix arrays and suffix trees require linear-time preprocessing and, once constructed, occupy linear space.
A major achievement in compressed text indexing within the last two decades was the development of space-efficient 
representations of suffix trees/arrays in space close to ``optimal'' in terms of (higher-order) statistical entropy~\cite{FM05,DBLP:journals/siamcomp/GrossiV05,DBLP:journals/csur/NavarroM07,DBLP:journals/mst/Sadakane07}.
A recent breakthrough by Gagie, Navarro, and Prezza~\cite{DBLP:conf/soda/GagieNP18}, known as the \emph{r-index}, takes $\Oh(r)$ space and can answer pattern-matching queries (both counting and reporting the occurrences) in near-optimal time; also see the improvements in~\cite{NishimotoT21,NishimotoKT22}.
Its $\Oh(r\log\frac{n}{r})$-space version can support more general operations such as suffix array and inverse suffix array queries in $\Oh(\log\frac{n}{r})$ time~\cite{DBLP:journals/jacm/GagieNP20}. 
We show how to construct these indexes fast, as specified in the following result.

\begin{restatable}[Compressed Index]{theorem}{index}\label{thm:index}
There is an $\Ohtilde(\sqrt{rn})$-time quantum algorithm that, given a quantum oracle to an unknown string $X \in \Sigma^n$, constructs 
\begin{itemize}
    \item an $\Oh(r)$-space index that can count the occurrences of any length-$m$ pattern in $\Ohtilde(m)$ time and report these occurrences in time $\Ohtilde(m+\mathsf{occ})$, where $\mathsf{occ}$ is the number of occurrences;
    \item an $\Ohtilde(r)$-space index for suffix array and inverse suffix array queries in $\Ohtilde(1)$ time.
\end{itemize}
\end{restatable}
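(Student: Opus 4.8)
The plan is to separate the computation into a quantum phase that extracts a succinct compressed description of $X$ and a classical phase that assembles both indexes from this description without ever revisiting the uncompressed text. For the quantum phase I would simply invoke \cref{thm:lz} to compute $\LZ(X)$ in $\Ohtilde(\sqrt{zn})$ time, where $z=|\LZ(X)|$. Since $r=\Omega(z/\log\frac{n}{z})$ (recalled in \cref{sec:intro}) we have $z=\Oh(r\log n)$ and hence $\Ohtilde(\sqrt{zn})=\Ohtilde(\sqrt{rn})$; moreover $r\le n$, so any classical post-processing running in $\Ohtilde(z)=\Ohtilde(r)$ time is absorbed into the $\Ohtilde(\sqrt{rn})$ budget. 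Computing $\LZ(X)$ also exposes every distinct character of $X$ (each heads a fresh-symbol factor, and there are at most $z$ of these), so we may further assume $\Sigma\subseteq\{1,\dots,z\}$.

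In the classical phase I would first convert $\LZ(X)$ into the run-length-encoded BWT of $X$ through the Kempa--Kociumaka reduction~\cite{DBLP:journals/cacm/KempaK22} that underlies \cref{cor:alg_rlbwt}, which additionally produces the accompanying sampling data---most importantly the suffix-array values at the boundaries of the BWT runs; together with an $\Ohtilde(z)$-size LCE data structure~\cite{I17}, all of this is obtained in $\Ohtilde(z)$ time. These are exactly the ingredients of the r-index of Gagie, Navarro, and Prezza~\cite{DBLP:conf/soda/GagieNP18,DBLP:journals/jacm/GagieNP20}: the run-length BWT with rank/select support implements backward search, counting the occurrences of any length-$m$ pattern in $\Ohtilde(m)$ time; augmenting the backward search to maintain a ``toehold'' suffix-array sample of the current interval and then repeatedly applying the neighbour function $\phi^{-1}$ (each application costing $\Ohtilde(1)$ thanks to the run-boundary samples) reports all $\mathsf{occ}$ occurrences in $\Ohtilde(m+\mathsf{occ})$ time. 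This gives the first, $\Oh(r)$-space index.

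For the second index I would build the $\Oh(r\log\frac{n}{r})$-space variant of~\cite{DBLP:journals/jacm/GagieNP20}: on top of the run-length BWT and the run-boundary suffix-array samples we add predecessor structures over the sampled values of $\SA$, $\ISA$, $\phi$, and $\phi^{-1}$, which together answer $\SA$ and $\ISA$ queries in $\Oh(\log\frac{n}{r})=\Ohtilde(1)$ time. All of these structures are assembled from data already computed in the previous step, using $\Ohtilde(r)$ additional time and space, which again fits inside the overall $\Ohtilde(\sqrt{rn})$ bound.

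The hard part is the classical phase, not the quantum one: the r-index and its larger-space variant were originally presented together with $\Oh(n)$-time constructions from the plain text, so one must verify that every component---in particular the suffix-array samples at run boundaries and the ability to step between consecutive samples via $\LF$, $\phi$, and $\phi^{-1}$---can be produced directly from $\LZ(X)$ (equivalently, from the run-length BWT plus the $\Ohtilde(z)$-size LCE structure) in $\Ohtilde(r)$ time and $\Ohtilde(r)$ working space, with no step implicitly unpacking $X$ or touching $\Omega(n)$ memory. The bulk of this is precisely what the Kempa--Kociumaka reduction delivers; the residual work is a careful audit that its outputs suffice to instantiate both variants of the r-index within the claimed budgets.
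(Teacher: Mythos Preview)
Your plan correctly identifies the quantum phase, but the classical phase has a real gap that you yourself flag as ``residual work'' yet substantially underestimate.

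First, the assertion that the Kempa--Kociumaka conversion ``additionally produces the accompanying sampling data---most importantly the suffix-array values at the boundaries of the BWT runs'' is not justified. The cited result outputs the run-length BWT; the paper obtains the run-boundary $\SA$ samples \emph{separately}, by issuing $\Oh(r)$ $\SA$ queries to an auxiliary index (\cref{lem:slow_index}), each costing $\Ohtilde(\sqrt{n/r})$ time. You would need to exhibit these samples by some other means.

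Second, and more importantly, your description of the $\Oh(r\log\tfrac{n}{r})$-space index of~\cite{DBLP:journals/jacm/GagieNP20} as ``predecessor structures over the sampled values of $\SA$, $\ISA$, $\phi$, and $\phi^{-1}$'' misses its actual shape: it is a multi-level structure with $\Oh(\log\tfrac{n}{r})$ levels of $\Oh(r)$ nodes each, linked by the $q$-pointers of \cref{lem:q_pointers}. Computing one such pointer amounts to finding, for a given interval $[s\dd e]$, the smallest $k\ge 0$ such that $\LF^k([s\dd e])$ straddles a BWT run boundary, and $k$ may be $\Theta(n/r)$. With only the RL-BWT, run-boundary $\SA$ samples, and an LCE structure, you have no $\Ohtilde(1)$-time procedure for this, and iterating $\LF$ step by step is far too slow. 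The paper's solution is an intermediate index built in $\Ohtilde(\sqrt{rn})$ time via a prefix-doubling~/ alphabet-replacement computation of $\LF^\tau$ for $\tau=\sqrt{n/r}$ together with $n/\tau$ text-evenly-spaced $\SA$ samples (\cref{lem:slow_index}); this supports $\SA$/$\ISA$ queries in $\Ohtilde(\sqrt{n/r})$ time, which is then used in \cref{lem:LF_k} to extract each $q$-pointer in $\Ohtilde(\sqrt{n/r})$ time. Your proposal omits this entire mechanism and offers no substitute, so the ``careful audit'' you defer is in fact the heart of the proof.
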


To handle LCE queries, we can use the $\Oh(z\log\frac{n}{z})$ space data structure with query time $\Oh(\log n)$ by I~\cite{I17}, which can be constructed from LZ factorization in $\Ohtilde(z)$ time. 
This structure, combined with \cref{thm:index}, enables us to solve numerous other classical string problems. A few examples provided in this work include:

\begin{itemize}

\item Finding the longest common substring between two strings of total length $n$ in $\Ohtilde(\sqrt{zn})$ time, where $z$ is the number of factors in the LZ77 parse of their concatenation. 
For highly compressible strings, this beats the best known $\Ohtilde(n^{2/3})$ time quantum algorithm~\cite{DBLP:conf/soda/AkmalJ22,DBLP:conf/innovations/GallS22}.
Similar time bounds can be obtained for finding the set of maximal unique matches (MUMs); the longest repeating substring/shortest unique substring of a given string.  
\item Obtaining the Lyndon factorization of a string in $\Ohtilde(\sqrt{\ell n})$ time, where $\ell = \Ohtilde(z)$ is the number of its Lyndon factors. 
\item Determining the frequencies of all distinct substrings of length $q$ ($q$-grams) in time $\Ohtilde(\sqrt{zn} + d_q)$,  where $d_q=\Oh(zq)$ is the number of distinct $q$-grams.
\end{itemize}
\section{Preliminaries}\label{sec:prelim}
A string is a finite sequence of characters from the alphabet $\Sigma$,
which we assume to be of the form $[0\dd \sigma)$ for an integer $\sigma = n^{\Oh(1)}$, where $n$ is the input size.
We denote the length of a string $X$ as~$|X|$. For any $i\in [1\dd |X|]$,
the $i$th character of $X$ is $X[i]$. 
For $0\le i \le j \le |X|$, a string of the form $X[i+1]\cdots X[j-1]X[j]$ is a \emph{substring} of $X$.
Its \emph{occurrence} in $X$ ending at position $j$ is called a \emph{fragment} of $X$ and denoted with $X(i\dd j]$; 
this fragment can also be referred to as $X[i+1\dd j]$, $X[i+1\dd j+1)$, or $X(i\dd j+1)$. 
Prefixes and suffixes are fragments of the form $X[1\dd i]$ and $X(i\dd |X|]$, respectively.
The string $X[|X|]\cdots X[2]X[1]$, 
called 
the \emph{reverse} of $X$, is denoted by~$\rev{X}$.

\paragraph*{Quantum algorithms}
We assume the input string $X\in \Sigma^{n} $ is accessed via a quantum oracle $O_X\colon \lvert i,b \rangle \mapsto \lvert i,b\oplus X[i] \rangle$, for any index $i\in [n]$ and any $b\in [0\dd 2^{\lceil \log \sigma\rceil})$, where $\oplus$ denotes the XOR operation. 
This quantum query model \cite{ambainis2004quantum,DBLP:journals/tcs/BuhrmanW02} is standard in the literature of quantum algorithms. 
The \emph{query complexity} of a quantum algorithm (with success probability at least $2/3$) is the number of quantum queries it makes to the input oracles.

More specifically, it suffices for us to have a computational model that supports the following:
\begin{itemize}
    \item We have quantum query access to the input oracle (as described above).
    \item We can run quantum subroutines on $\Oh(\log n)$ qubits.
    \item We have a classical working memory with random access (classical-read and classical-write).
\end{itemize}
The \emph{time complexity} of our algorithm counts the number of quantum queries, the number of elementary gates that implement the quantum subroutines, and the number of classical random-access operations.
Note that we do not need to assume QRAM for working memory, which was required in previous quantum algorithms for some other string problems \cite{DBLP:conf/innovations/GallS22,DBLP:conf/soda/AkmalJ22,JN23} in order to obtain good time complexity.

The key quantum subroutine that we use is the Grover search algorithm.
\begin{theorem}[Grover search \cite{DBLP:conf/stoc/Grover96}]
    \label{thm:grover}
   There is a quantum algorithm that, given quantum access to a function $f\colon [1\dd n]\to \{0,1\}$,  finds an index $i\in [1\dd n]$ such that $f(i)=1$ or reports that no such $i$ exists. The algorithm has $2/3$ success probability, $\Oh(\sqrt{n})$ query complexity, $\Ohtilde(\sqrt{n})$ time complexity, and uses only $\Oh(\log n)$ qubits.
\end{theorem}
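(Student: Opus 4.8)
The statement is the textbook Grover/amplitude-amplification result, so the plan is to reconstruct the standard proof while paying attention to the three features that go slightly beyond the bare search primitive: an unknown number of marked indices, the ``no solution'' report, and the $\Oh(\log n)$-qubit / $\Ohtilde(\sqrt n)$-time budget. First I would assume $n$ is a power of two (pad the domain of $f$ with extra indices mapped to $0$, which at most doubles $n$), set $M:=f^{-1}(1)$ and $t:=|M|$, and note that a phase oracle $|i\rangle\mapsto(-1)^{f(i)}|i\rangle$ is obtained from the given oracle with one query and one ancillary qubit prepared in $|{-}\rangle$ (phase kickback).

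Next I would invoke the usual geometric picture. Starting from the uniform superposition $|\psi\rangle := \tfrac1{\sqrt n}\sum_{i} |i\rangle$ over $\lceil\log n\rceil$ qubits, one Grover iterate $G:=(2|\psi\rangle\langle\psi|-I)\,O_f$ is a product of two reflections and therefore acts, inside the two-dimensional real subspace of $\C^{n}$ spanned by $|\psi\rangle$ and its mirror image across the ``unmarked'' subspace, as a rotation by angle $2\theta$, where $\sin\theta=\sqrt{t/n}$. Hence after $r$ iterates a measurement of the index register returns a marked index with probability $\sin^2((2r+1)\theta)$, which is $\ge\tfrac12$ once $r=\Theta(\sqrt{n/t})$ is chosen appropriately; one extra query then verifies whether the returned index is in fact marked, so when $t=0$ we correctly detect the absence of a solution.

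The one genuine issue is that $t$ is unknown, so $r$ cannot be fixed in advance; here I would use the standard doubling search: repeatedly run the iterate a number of times chosen uniformly at random in $[0\dd m)$, verify the outcome with one query, and on failure replace $m$ by $\min(\lceil\lambda m\rceil,\lceil\sqrt n\rceil)$ for a fixed constant $\lambda\in(1,2)$, stopping after $\Oh(\log n)$ rounds. A routine second-moment calculation shows that once $m$ exceeds $\Theta(\sqrt{n/t})$ each round finds a marked index with constant probability, so the procedure makes $\Oh(\sqrt n)$ queries in the worst case, succeeds with probability $\ge 2/3$ whenever $t\ge 1$, and, when it exhausts all rounds without ever verifying a marked index, may safely output ``no marked index'' (if $t=0$ every verification fails with certainty, and if $t\ge 1$ the probability of falsely reporting ``none'' stays below $1/3$). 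I regard this bookkeeping as the main ``obstacle'', although it is entirely standard.

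Finally I would account for the resources. The register uses $\lceil\log n\rceil$ qubits plus $\Oh(1)$ ancillas; the diffusion operator $2|\psi\rangle\langle\psi|-I = H^{\otimes\lceil\log n\rceil}(2|0\rangle\langle0|-I)H^{\otimes\lceil\log n\rceil}$ is implemented by Hadamards together with a multiply-controlled phase flip on $\lceil\log n\rceil$ qubits, costing $\Ohtilde(1)$ elementary gates using a few ancillas. Thus each iterate costs one query and $\Ohtilde(1)$ time; since the total number of iterates across all rounds is $\Oh(\sqrt n)$, the algorithm runs in $\Oh(\sqrt n)$ queries and $\Ohtilde(\sqrt n)$ time with $\Oh(\log n)$ qubits, as claimed.
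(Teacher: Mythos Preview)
Your reconstruction is correct and standard (phase kickback, two-reflection rotation picture, BBHT-style random-length iteration for unknown $t$, and the gate-count/qubit bookkeeping are all handled properly). Note, however, that the paper does not prove this statement at all: \cref{thm:grover} is simply quoted as a black-box result with a citation to Grover, so there is no ``paper's own proof'' to compare against. Your write-up is exactly the kind of argument one would supply if a proof were required, and it goes somewhat beyond the bare Grover reference by incorporating the unknown-$t$ refinement and the explicit resource analysis.
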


A bounded-error algorithm can be boosted to have success probability $1-1/n^c$, for arbitrarily large constant $c$, by $\Oh(\log n)$ repetitions. 
In this paper, we do not optimize the $\poly \log(n)$ factors in the quantum query complexity (and time complexity) of our algorithms.

We can use 
\cref{thm:grover} to test the equality of two length-$\ell$ substrings of the input string(s) in $\Ohtilde(\sqrt{\ell})$ time.\footnote{For substrings $X'$ and $Y'$, define the function $f(i) = 1$ if $X'[i] \neq Y'[i]$ and $f(i) = 0$ otherwise.}
Combined with a binary search, this allows us to find the length of their longest common prefix (resp.,  suffix) in $\Ohtilde(\sqrt{\ell})$ time. We can then determine their
leftmost (resp, rightmost) position corresponding to a mismatch (if it exists) and hence their lexicographic (resp., co-lexicographic\footnote{Co-lexicographic order refers to the lexicographic order of the reversed strings.}) order in constant time.

\paragraph*{Edit Distance and Alignments}
The \emph{edit distance} (also known as \emph{Levenshtein distance}~\cite{Lev65}) between two
strings $X$ and $Y$, denoted by $\ed(X,Y)$, is the minimum number of
character insertions, deletions, and substitutions required to transform $X$ into~$Y$.
For a formal definition, we first rely on the notion of an \emph{alignment} between fragments of strings.
\begin{definition}[see~\cite{KPS21}]
    A sequence $\caA=(x_i,y_i)_{i=0}^{m}$ is an \emph{alignment}
    of $X\fragmentoc{x}{x'}$ onto $Y\fragmentoc{y}{y'}$, denoted by \(\caA: X\fragmentoc{x}{x'} \onto Y\fragmentoc{y}{y'}\),
    if it satisfies $(x_0,y_0)=(x,y)$, $(x_{i},y_{i})\in \{(x_{i-1}+1,\allowbreak y_{i-1}+1),(x_{i-1}+1,\allowbreak y_{i-1}),(x_{i-1},y_{i-1}+1)\}$ for $i\in [1\dd m]$, and $(x_m,y_m) =(x',y')$.
    \begin{itemize}
        \item If $(x_{i},y_{i})=(x_{i-1}+1,y_{i-1})$, we say that $\caA$ \emph{deletes}
            $X\position{x_i}$,
        \item If $(x_{i},y_{i})=(x_{i-1},y_{i-1}+1)$, we say that $\caA$ \emph{inserts} $Y\position{y_i}$,
        \item If $(x_{i},y_{i})=(x_{i-1}+1,y_{i-1}+1)$, we say that $\caA$ \emph{aligns}
            $X\position{x_i}$ and $Y\position{y_i}$. If~additionally $X\position{x_i}=
            Y\position{y_i}$, we say that $\caA$ \emph{matches} $X\position{x_i}$ and
            $Y\position{y_i}$; otherwise, $\caA$ \emph{substitutes} $Y\position{y_i}$ for
            $X\position{x_i}$.
    \end{itemize}
\end{definition}

The \emph{cost} of an alignment $\caA$ of $X\fragmentoc{x}{x'}$ onto $Y\fragmentoc{y}{y'}$,
denoted by $\ed_{\caA}(X\fragmentoc{x}{x'},Y\fragmentoc{y}{y'})$, is the total number of characters that $\caA$ inserts, deletes, or substitutes.
Now, we define the edit distance $\ed(X,Y)$ as the minimum cost of an alignment
of $X\fragmentoc{0}{|X|}$ onto~$Y\fragmentoc{0}{|Y|}$.
An alignment of $X$ onto $Y$ is \emph{optimal} if its cost is equal to $\ed(X, Y)$.

An alignment $\caA'':X\fragmentoc{x}{x'}\onto Z\fragmentoc{z}{z'}$ is a \emph{product} 
of alignments $\caA:X\fragmentoc{x}{x'}\onto Y\fragmentoc{y}{y'}$ and $\caA':Y\fragmentoc{y}{y'}\onto Z\fragmentoc{z}{z'}$ if, for every $(\bar{x},\bar{z})\in \caA''$, there is $\bar{y}\in [y\dd y']$ such that $(\bar{x},\bar{y})\in \caA$ and $(\bar{y},\bar{z})\in \caA'$. Note that such an alignment always exists and satisfies $\ed_{\caA''}(X\fragmentoc{x}{x'},Z\fragmentoc{z}{z'})\le \ed_{\caA}(X\fragmentoc{x}{x'},Y\fragmentoc{y}{y'})+\ed_{\caA'}(Y\fragmentoc{y}{y'},\allowbreak Z\fragmentoc{z}{z'})$. 
For an alignment $\caA:X\fragmentoc{x}{x'}\onto Y\fragmentoc{y}{y'}$
with \(\caA = (x_i,y_i)_{i=0}^m\),
we define the \emph{inverse alignment} $\caA^{-1} : Y\fragmentoc{y}{y'}\onto
X\fragmentoc{x}{x'}$ as $\caA^{-1} \coloneqq (y_i,x_i)_{i=0}^m$.
Note that $\ed_{\caA^{-1}}(Y\fragmentoc{y}{y'},X\fragmentoc{x}{x'})=\ed_{\caA}(X\fragmentoc{x}{x'},Y\fragmentoc{y}{y'})$.

\paragraph*{Lempel--Ziv Factorization}
We say that a fragment $X[i\dd i+\ell)$ is a \emph{previous factor}
if it has an earlier occurrence in $X$, i.e., $X[i\dd i+\ell)=X[i'\dd i'+\ell)$
holds for some $i'\in [1\dd i)$.  An \emph{LZ77-like factorization} of
$X$ is a factorization $X = F_1\cdots F_f$ into non-empty
\emph{phrases} such that each phrase $F_j$ with $|F_j|>1$ is a
previous factor.  In the underlying \emph{LZ77-like representation},
every phrase $F_j=X[i\dd i+\ell)$ that is a previous factor is
encoded as $(i',\ell)$, where $i'\in [1\dd i)$ satisfies
$X[i\dd i+\ell)=X[i'\dd i'+\ell)$ (and is chosen arbitrarily in case of multiple
possibilities); if $F_j=X[i]$ is not a previous factor, it is encoded
as $(X[i],0)$; see \cref{fig:lz77_example} for an example.

\begin{figure}[h!]
    \resizebox{\textwidth}{!}{%
        \centering
        \begin{tabular}{m{12mm} m{7mm}| m{7mm} | m{7mm} | m{7mm} | m{7mm} m{7mm} | m{7mm} m{7mm} m{7mm} m{7mm} m{7mm} | m{7mm} m{7mm} m{7mm} |m{7mm}}
    Index &1 & 2 & 3 & 4 & 5 & 6 & 7 & 8 & 9 & 10 & 11 & 12 & 13 & 14 & 15\\
    $X$ & $a$  & $b$ & $a$  & $c$ & $a$ & $b$ & $c$ & $a$ & $b$ & $c$ & $a$ & $a$ & $a$ & $a$ & $b$ \\
    \end{tabular}
    }
    \caption{The LZ77 factorization of a string $X = abacabcabcaaaab$ of length $n = 15$. The resulting encoding has $z=8$ elements: $(a, 0)$, $(b, 0)$, $(1,1)$, $(c, 0)$, $(1, 2)$, $(4, 5)$, $(11, 3)$, $(9,1)$.}
    \label{fig:lz77_example}
\end{figure}

The LZ77 factorization~\cite{DBLP:journals/tit/ZivL77} (or the LZ77 parsing) of a string
$X$, denoted $\LZ(X)$ is then just an LZ77-like factorization constructed by greedily
parsing $X$ from left to right into longest possible phrases.  More
precisely, the $j$th phrase $F_j=X[i\dd i+\ell)$ is the longest previous factor
starting at position $i$; if no previous factor starts there, then $F_j$ consists of a single character.  
This greedy approach is known to produce the shortest possible LZ77-like factorization.

The size $|\LZ(X)|$ of the LZ77 factorization of $X$ is closely related to other compressibility measures; see~\cite{DBLP:journals/csur/Navarro21a} for a survey. This includes \emph{substring complexity} $\delta(X)$, which is defined as $\max_{q=1}^{|X|} \frac{d_q(X)}{q}$, where $d_q(X)$ is the number of distinct length-$q$ substrings ($q$-grams) in $X$. 
It has been implicitly introduced in~\cite{DBLP:journals/algorithmica/RaskhodnikovaRRS13} and thoroughly studied in~\cite{KNP23}.
The substring complexity enjoys many desirable features such as invariance under string reversal, monotonicity with respect to taking substring (in comparison, $|\LZ(X)|$ is only monotone with respect to taking prefixes), sub-additivity with respect to concatenations (shared with $|\LZ(X)|$), and stability with respect to edits (that is, $|\delta(X)-\delta(X')|\le \ed(X,X')$). 
Due to the relation $\delta(X) \le |\LZ(X)| = \Oh\big(\delta(X) \log \frac{|X|}{\delta(X)}\big)$ proved in~\cite{DBLP:journals/algorithmica/RaskhodnikovaRRS13}, we derive the following fact about the LZ77 factorization size:
\begin{fact}\label{fct:lz}
    Strings $X,Y$ of length at most $n$ satisfy  $|\LZ(\rev{X})|= \Oh(|\LZ(X)|\log n)$,
    $|\LZ(XY)|=\Oh((|\LZ(X)|+|\LZ(Y)|)\log n)$, and $|\LZ(Y)|=\Oh((|\LZ(X)|+\ed(X,Y))\log n)$.
\end{fact}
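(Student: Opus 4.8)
The plan is to derive all three bounds from the known sandwich $\delta(W)\le |\LZ(W)| = \Oh\big(\delta(W)\log\frac{|W|}{\delta(W)}\big)$ together with the three structural properties of $\delta$ quoted in the text: reversal invariance, sub-additivity under concatenation, and stability under edits. The common scheme is: pass from $|\LZ(\cdot)|$ to $\delta(\cdot)$ using the left inequality, manipulate $\delta$ (which behaves well), and pass back to $|\LZ(\cdot)|$ using the right inequality, absorbing the logarithmic overhead into the claimed $\Oh(\cdot\;\log n)$.

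\begin{proof}
Throughout, write $\delta(\cdot)$ for the substring complexity and recall from~\cite{DBLP:journals/algorithmica/RaskhodnikovaRRS13} that every string $W$ with $1\le |W|\le n$ satisfies
\[
    \delta(W)\ \le\ |\LZ(W)|\ =\ \Oh\!\Big(\delta(W)\log\tfrac{|W|}{\delta(W)}\Big)\ =\ \Oh(\delta(W)\log n).
\]
For the first claim, reversal invariance of substring complexity gives $\delta(\rev{X})=\delta(X)$, so $|\LZ(\rev{X})| = \Oh(\delta(\rev X)\log n) = \Oh(\delta(X)\log n) = \Oh(|\LZ(X)|\log n)$, where the last step uses $\delta(X)\le |\LZ(X)|$. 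For the second claim, sub-additivity of $\delta$ under concatenation yields $\delta(XY)\le \delta(X)+\delta(Y)$, hence $|\LZ(XY)| = \Oh(\delta(XY)\log n) = \Oh((\delta(X)+\delta(Y))\log n) = \Oh((|\LZ(X)|+|\LZ(Y)|)\log n)$. For the third claim, stability under edits gives $\delta(Y)\le \delta(X)+\ed(X,Y)$, and therefore $|\LZ(Y)| = \Oh(\delta(Y)\log n) = \Oh((\delta(X)+\ed(X,Y))\log n) = \Oh((|\LZ(X)|+\ed(X,Y))\log n)$. Finally, if any of the strings involved is empty the statements hold trivially (the LZ77 factorization of the empty string has size $0$, and $\ed$ is nonnegative), so we may assume all lengths are at least $1$, which justifies invoking the displayed relation above.
\end{proof}

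I do not expect a genuine obstacle here: the only point requiring a little care is making sure the logarithmic factor is written as $\log n$ rather than $\log\frac{|W|}{\delta(W)}$ (the former dominates the latter up to a constant since $|W|\le n$ and $\delta(W)\ge 1$), and handling the degenerate empty-string cases so that the relation $\delta\le |\LZ| = \Oh(\delta\log\frac{|W|}{\delta})$ is applied only when $\delta(W)\ge 1$. Everything else is a direct substitution chain.
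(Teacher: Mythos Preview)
Your proof is correct and is exactly the derivation the paper has in mind: the paragraph immediately preceding \cref{fct:lz} lists precisely the three properties of $\delta$ you use (reversal invariance, sub-additivity under concatenation, stability under edits) together with the sandwich $\delta(W)\le |\LZ(W)| = \Oh(\delta(W)\log\frac{|W|}{\delta(W)})$, and states that the fact is ``derived'' from these, without spelling out the substitution chain you wrote. Your handling of the logarithmic factor and the empty-string edge cases is also fine.
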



LZ-End was introduced by Kreft and Narvarro~\cite{DBLP:journals/tcs/KreftN13} to speed up the extraction of substrings relative to traditional LZ77. Unlike LZ77, LZ-End forces any new phrase that is not a leftmost occurrence of a symbol to match an occurrence ending at a previous phrase boundary, i.e., phrase $X[i\dd i+\ell)$ is taken as the longest fragment that is a suffix of $X[1\dd j)$, where $j$ is the start of a previous phrase. 
Like LZ77, LZ-End can be computed in linear time~\cite{DBLP:conf/dcc/KempaK17,DBLP:conf/esa/KempaK17}. Moreover, the LZ-End encoding size is close to the size of LZ77 encoding, as shown in the following result: 

\begin{theorem}[Kempa \& Saha~\cite{DBLP:conf/soda/KempaS22}]
\label{lem:z_e_and_z}
For any string $X[1\dd n]$ with LZ77 factorization size $z$ and LZ-End factorization size $z_e$, we have $z_e = \Oh(z \log^2 \frac{n}{z})$.
\end{theorem}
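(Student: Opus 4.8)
The plan is to bound $z_e$ by grouping the LZ-End phrases according to length scales and reducing everything to a count of \emph{short} phrases. First, I would split the phrases of the LZ-End parsing of $X$ into those of length at least $n/z$ and those of length below $n/z$. Since LZ-End phrases are pairwise disjoint fragments of $X$, at most $z$ of them can have length $\ge n/z$, so these contribute only $\Oh(z)$ to $z_e$. Hence it suffices to show that, for every $k$ with $2^k< n/z$, the number of LZ-End phrases of length in $[2^k,2^{k+1})$ is $\Oh\bigl(z\log\tfrac{n}{z}\bigr)$; summing over the fewer than $\log_2\tfrac{n}{z}$ relevant scales then yields $z_e=\Oh\bigl(z\log^2\tfrac{n}{z}\bigr)$.

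Next, fix such a scale $k$ and work with $\LZ(X)=G_1\cdots G_z$ and its boundaries $0=p_0<p_1<\dots<p_z=n$. I would call an LZ-End phrase $X[i\dd i+\ell)$ of this scale \emph{tethered} if $i$ lies within distance $2^{k+2}$ of some $p_t$, and \emph{free} otherwise. A radius-$2^{k+2}$ neighborhood of a fixed boundary leaves room for only $\Oh(1)$ pairwise-disjoint fragments of length $\ge 2^k$, so there are just $\Oh(z)$ tethered phrases at this scale. For a free phrase, position $i$ sits deep inside its LZ77 phrase $G_t=X[p_{t-1}\dd p_t)$, so the length-$2^{k+3}$ window $W=X[i-2^{k+2}\dd i+2^{k+2})$ is copied from the source of $G_t$ and thus has an occurrence ending strictly to the left of $i$; following the chain of LZ77 sources relocates this occurrence until it either (a) ends within $\Oh(2^{k})$ of an LZ77 boundary, or (b) ends at an LZ-End phrase boundary lying before $i$. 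Case (b) is impossible: the greedy LZ-End rule would then have produced a phrase of length $\ge 2^{k+2}>2^{k+1}>\ell$ at position $i$. So every free phrase falls into case (a), and one wants a counting argument over the $\Oh(z)$ boundary neighborhoods that caps the number of free phrases of the scale, the extra $\log\tfrac{n}{z}$ factor being absorbed by the depth of the source chain (or by subdividing the scale according to how far the chain travels). This last count is the technical heart of the argument.

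The hard part will be exactly this count, because of the \emph{self-referential} character of the LZ-End source constraint. Unlike LZ77---where any earlier occurrence of a fragment already certifies that the greedy parser does not stop early---under LZ-End an earlier occurrence of $X[i\dd i+\ell)$ forbids a short phrase at $i$ only when that occurrence \emph{ends at an LZ-End phrase boundary}, and the phrase boundaries are precisely what we are trying to count. Turning the heuristic ``a copied window almost ends at a boundary'' into a rigorous bound requires an amortized/potential argument that tracks how LZ-End boundaries accumulate along the string, and it is there that the second logarithm genuinely appears; this is essentially the content of Kempa \& Saha's proof. The remaining issues---windows that run past the ends of $X$, and the exact off-by-one conventions in the definition of LZ-End---are routine and can be handled by padding and a minor case analysis.
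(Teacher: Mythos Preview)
The paper does not prove this statement---it is quoted as a result of Kempa and Saha and used as a black box. The only place where the paper engages with the argument is in the proof of \cref{lem:LZ-End-tau}, which \emph{outlines} Kempa and Saha's method: one restricts attention to \emph{special} LZ-End factors (those whose length is at least half of the preceding factor), assigns to each special factor of length~$\ell$ a family of~$\ell$ length-$2^k$ substrings at every dyadic scale~$k$, and then shows by contradiction (via the greedy property of LZ-End) that (i) each distinct length-$2^k$ substring is assigned to at most two factors and (ii) the substrings assigned to a single factor are pairwise distinct. These two facts bound the total length of special factors at each scale by the number of distinct length-$2^k$ substrings, and a substring-complexity estimate then gives the $\Oh(z\log^2\tfrac{n}{z})$ bound. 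Your plan is structurally different: you bucket LZ-End phrases by length, separate each bucket into ``tethered'' and ``free'' phrases relative to LZ77 boundaries, and try to control free phrases by chasing LZ77 source pointers for a centered window.

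The proposal is not a proof, and you essentially acknowledge this. The dichotomy (a)/(b) is illusory: following LZ77 source pointers for the window~$W$ always terminates in case~(a) (the first time the relocated copy straddles an LZ77 boundary), so ruling out~(b) buys nothing. What remains is the observation that every free phrase at scale~$k$ has its window appearing near \emph{some} LZ77 boundary---but this does not bound the number of free phrases, because arbitrarily many free phrases can relocate to the same boundary neighborhood (the relocated windows need not be disjoint or distinct). Your suggested fix, ``absorb a $\log\tfrac{n}{z}$ factor in the depth of the source chain,'' is not justified: LZ77 source chains can have depth~$\Theta(n)$ (consider $a^n$), and no amortization scheme is given. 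In short, the entire weight of the bound rests on the step you label ``the technical heart,'' for which you offer no mechanism and ultimately defer back to Kempa and Saha. If you want to complete this, you would need a genuinely new charging argument for the free phrases; the Kempa--Saha approach sketched above sidesteps the issue by working with substring counts rather than with LZ77 source chains.
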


\paragraph*{Suffix Trees, Suffix Arrays, and the Burrow Wheeler Transform}

We assume that the last symbol in $X[1\dd n]$ is a special $\$$ symbol that occurs only once and is lexicographically smaller than the other symbols in $X$. The suffix tree of a string $X$ is a compact trie constructed from all suffixes of $X$. The tree leaves  are labeled with the starting indices of the corresponding suffixes and are sorted by the lexicographic order of the suffixes. These values in this order define the suffix array $\SA$, i.e., $\SA[i]$ is such $X[\SA[i]\dd n]$ is the $i$th smallest suffix lexicographically. 
The inverse suffix array $\ISA$, is defined as $\ISA[\SA[i]] = i$; equivalently, $\ISA[i]$ is the lexicographic rank of the suffix $X[i\dd n]$. The Burrows--Wheeler Transform (BWT) of a text $X$ is a permutation of the symbols of $X$ such that 
$\BWT[i] = X[\SA[i]-1]$ if $\SA[i]\neq 1$ and is $\$$ otherwise.
The longest common extension of two suffixes $X[i\dd n]$, $X[j\dd n]$, denoted as $\LCE(i,j)$, is equal to the length of their longest common prefix. 
The suffix tree, suffix array, and the Burrows--Wheeler Transform can all be built in linear time for polynomially-sized integer alphabets~\cite{DBLP:conf/focs/Farach97}. While suffix trees and arrays require space $\Oh(n)$ space (equivalently, $\Oh(n\log n)$ bits), the BWT requires only $n \lceil\log \sigma\rceil$ bits. Further, we can apply run-length encoding to achieve $\Oh(r)$ space.

\paragraph*{FM-index and Repetition-Aware Suffix Trees}
The FM-index provides the ability to count and locate occurrences of a given pattern efficiently. 
It is constructed based on the BWT described previously and uses the \emph{LF-mapping} to perform pattern matching. The LF-mapping is defined as $\LF[i] = \ISA[\SA[i]-1]$ if $\SA[i] \neq 1$ and is $1$ otherwise. 
The FM-index was developed by Ferragina and Manzini~\cite{FM05} to be more space efficient than traditional suffix trees and suffix arrays. However, supporting location queries utilized sampling $\SA$ in evenly spaced intervals, in a way independent of the runs in the BWT of the text, preventing an $\Oh(r)$ data space structure with optimal (or near optimal) query time.

The r-index and subsequent fully functional text indexes were developed to utilize only $\Oh(r)$ or $\Oh(r\log \frac{n}{r})$ space. The r-index developed by Gagie, Navarro, and Prezza~\cite{DBLP:conf/soda/GagieNP18} was designed to occupy $\Oh(r)$ space and support counting and locating queries in near-optimal time. It was based on the observation that suffix array samples are necessary only for the run boundaries of the BWT and subsequent non-boundary suffix array values can be obtained in polylogarithmic time. The fully functional indexes~\cite{DBLP:journals/jacm/GagieNP20} use $\Oh(r\log \frac{n}{r})$ space and provide most of the capabilities of a suffix tree. The data structure allows one to determine in $\Oh(\log \frac{n}{r})$ time arbitrary $\SA$, $\ISA$, and $\LCE$ values, which in turn lets one determine properties of arbitrary nodes in suffix tree, such as subtree size.

\section{Technical Overview}\label{sec:technical_overview}

\subsection{Edit Distance}
\label{subsec:techoverview-edit}
Our algorithm for computing the edit distance between $X,Y\in \Sigma^{\le n}$ uses a divide-and-conquer approach: letting $x := \lceil |X|/2 \rceil$, we would like to \emph{optimally align} the middle character $X[x]$ to some character $Y[y]$.
More formally, we would like to find $y\in [0\dd |Y|]$ such that $ \ed(X,Y) = \ed(X[1\dd x),Y[1\dd y) ) + \ed(X[x\dd |X|],Y[y\dd |Y|] )$.
Such a pair $(x,y)$, called here an \emph{edit anchor}, allows us to decompose the problem of computing $\ed(X,Y)$ into two independent subproblems to be solved recursively. Therefore,  a crucial component of our divide-and-conquer algorithm is to efficiently find an edit anchor $(x,y)$ for $X,Y$, given the promise that $\ed(X,Y)\le k$. 

\paragraph*{Edit anchor and LZ compression}
Our plan for computing the edit anchor $(x,y)$ is to pick it from a \emph{locally optimal} alignment $\caA'$, which is hopefully faster to compute than a globally optimal alignment $\caA$. More specifically, we select a suitable window $(i\dd j]$ that contains~$x$ and define $\caA'$ as an optimal alignment  between the fragments $X':=X(i\dd j]$ and $Y':=Y(i\dd j+|Y|-|X|]$ (which satisfy $\ed(X',Y')\le \ed(X,Y)$).
To ensure correctness, this window $(i\dd j]$ should be long enough to eliminate ambiguity: any $(x,y) \in \caA'$ must be guaranteed to be a (global) edit anchor provided that $\ed(X,Y)\le k$.
On the other hand, for efficiency's sake, this window should not be too long. 

Interestingly, our criteria for selecting the window $(i\dd j]$ crucially rely on \emph{string compressibility}. 
Specifically, we define its right boundary as the largest $j \in [x\dd |X|]$ such that $|\LZ(X(x\dd j])| \le c\cdot k$ (for a sufficiently large constant $c$), and we define the left boundary $i$ symmetrically. 
Intuitively, $X'=X(i\dd j]$ is the maximal compressible context of $X[x]$.
Using our LZ-compression algorithm (\cref{thm:lz}), the window boundaries $i,j$ can be found in $\Ohtilde (\sqrt{kn})$ quantum time by binary search.
After we retrieve the LZ compression of the fragments $X'$ and $Y'$ with $\Ohtilde(k)$ phrases, we compute an anchor $(x,y)$ contained in an optimal alignment $\caA':X'\onto Y'$. Using the classical Landau--Vishkin algorithm~\cite{LV88} and appropriate LCE query implementation~\cite{I17} for compressed strings, this requires $\Ohtilde(k^2)$ additional time complexity:

\begin{theorem}[see also~\cite{GKLS22}]\label{thm:lzed}
  There exists an algorithm that, for any two strings $X,Y\in \Sigma^*$, given $\LZ(X)$ and $\LZ(Y)$, computes $k := \ed(X,Y)$, along with a sequence of $k$ edits transforming $X$ into $Y$, in $\Ohtilde(|\LZ(X)|+|\LZ(Y)|+k^2)$ time.
\end{theorem}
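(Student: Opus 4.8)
The plan is to reduce the problem to the classical Landau--Vishkin algorithm~\cite{LV88} run on top of a compressed representation of the inputs that answers longest common extension (LCE) queries and random-access queries in $\Ohtilde(1)$ time. First I would merge $\LZ(X)$ and $\LZ(Y)$ into a single LZ77-like factorization of $Z:=X\#Y$, where $\#$ is a fresh symbol not in $\Sigma$. Concatenating the phrase list of $\LZ(X)$, a singleton phrase for $\#$, and the phrase list of $\LZ(Y)$ with the source positions of the $Y$-phrases shifted by $|X|+1$ yields a valid LZ77-like factorization of $Z$: every multi-character phrase inherited from $\LZ(X)$ still has an earlier occurrence inside the prefix $X$, and every multi-character phrase inherited from $\LZ(Y)$ still has its earlier occurrence, now located $|X|+1$ positions to the right. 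Hence $Z$ admits an LZ77-like factorization of size $s:=|\LZ(X)|+|\LZ(Y)|+1$, constructible in $\Oh(s)$ time (the lengths $|X|,|Y|$ are also read off from $\LZ(X),\LZ(Y)$ in $\Oh(s)$ time). From this factorization, the data structure of I~\cite{I17} can be built in $\Ohtilde(s)$ time so that afterwards each LCE query on $Z$ and each random access to a character of $Z$ costs $\Ohtilde(1)$ time. Because $\#\notin\Sigma$, a query $\LCE_Z(i,\,|X|+1+j)$ for $i\in[1\dd|X|]$ and $j\in[1\dd|Y|]$ returns exactly the length of the longest common prefix of $X[i\dd|X|]$ and $Y[j\dd|Y|]$ (it is automatically capped by $\min(|X|-i+1,|Y|-j+1)$, since the $X$-side hits $\#$ and the $Y$-side ends), so this single structure supports all the cross-string LCE queries needed below.

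Next I would run the Landau--Vishkin algorithm~\cite{LV88} (equivalently, Myers' $\Oh(nd)$ diff algorithm~\cite{Mye86} sped up with LCE) on $X$ and $Y$. Recall that it fills, for $p=0,1,2,\dots$ and each diagonal $d$ with $|d|\le p$, the entry $R_{p,d}$ equal to the largest $r$ such that $X[1\dd r]$ can be aligned to $Y[1\dd r-d]$ with at most $p$ edits, via the recurrence that takes the best of $R_{p-1,d-1}$, $R_{p-1,d}+1$, $R_{p-1,d+1}+1$ and then extends the resulting match run using a single LCE query. The algorithm stops at the first $p$ with $R_{p,\,|Y|-|X|}=|X|$, at which point $k=\ed(X,Y)=p$; in particular, $k$ need not be known in advance. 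Only diagonals in $[-k\dd k]$ and values $p\le k$ are ever touched (any alignment of cost $\le k$ stays within these diagonals, and $\bigl||Y|-|X|\bigr|\le k$), so this costs $\Oh(k^2)$ arithmetic operations and $\Oh(k^2)$ LCE queries, i.e., $\Ohtilde(k^2)$ time in total. Storing back-pointers and tracing back through the $R$-table recovers an optimal alignment of $X$ onto $Y$; walking along that alignment, skipping match runs (whose lengths are recorded), and using random access to $Z$ to read the characters involved in each deletion, insertion, or substitution, outputs the required sequence of $k$ edits in $\Ohtilde(k)$ additional time. Altogether the running time is $\Ohtilde(s)+\Ohtilde(k^2)=\Ohtilde(|\LZ(X)|+|\LZ(Y)|+k^2)$.

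The parts that require care, rather than being purely routine, are exactly the two highlighted above: checking that the naive concatenation of the two phrase lists really is a legitimate LZ77-like factorization of $X\#Y$ of size $\Oh(|\LZ(X)|+|\LZ(Y)|)$ — this is where the separator $\#$ and the observation that previous occurrences survive concatenation are used — and invoking I's construction~\cite{I17} as a black box that converts any LZ77-like factorization of size $s$ of a length-$N$ string into an $\Ohtilde(1)$-time LCE-plus-random-access oracle after $\Ohtilde(s)$ preprocessing, so that the $\Oh(N)$-time suffix-tree/LCE preprocessing of textbook Landau--Vishkin is replaced by $\Ohtilde(s)$ preprocessing. The remainder is the standard Landau--Vishkin correctness and complexity analysis together with its usual alignment traceback; see also~\cite{GKLS22} for a detailed treatment.
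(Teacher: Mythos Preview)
Your proposal is correct and matches the approach the paper intends. In fact, the paper does not prove \cref{thm:lzed} at all: it states it as a known/folklore result (citing~\cite{GKLS22}) and, in the technical overview, explicitly describes it as the Landau--Vishkin algorithm~\cite{LV88} run on top of the compressed LCE structure of I~\cite{I17}---precisely your construction of an LZ77-like factorization of $X\#Y$ followed by $\Oh(k^2)$ LCE queries.
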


It remains to explain why the anchor $(x,y)$ derived from the optimal local alignment $\caA':X'\onto Y'$ is globally optimal for $X,Y$.
 This would follow from the following  key claim: any optimal global alignment $\caA \colon X\onto Y$ must intersect the alignment $\caA' \colon X'\onto Y'$ at two points $(x_\ell,y_\ell),(x_r,y_r)$ such that $(x_\ell,y_\ell)\preceq (x,y) \preceq (x_r,y_r)$.
Indeed, this claim implies that we can replace the part of $\caA$ between these two points with the corresponding part of $\caA'$, which contains the anchor $(x,y)$, without increasing the cost of the alignment. So $(x,y)$ is an edit anchor for $X, Y$.

To see why the claim above is true, here we focus on the intersection $(x_r,y_r)$ to the right of $(x,y)$, and without loss of generality assume the right boundary $j$ of the window achieves the equality $|\LZ(X(x\dd j])| = c\cdot k$. 
If $\caA'$ does not intersect $\caA$ at any point to the right of of $(x,y)$, then we can restrict both of them to the fragment $Y'_r:= Y(y\dd j+|Y|-|X|]$ and obtain two \emph{disjoint} $\Oh(k)$-cost alignments: $\caA'_r \colon X(x\dd j]\onto Y'_r$ and $\caA_r\colon X(\tilde x\dd \tilde{\jmath}] \onto Y'_r$ for some $\tilde x = x\pm \Oh(k)$ and $\tilde{\jmath} =j \pm \Oh (k)$. 
Without loss of generality, assume $\tilde x< x$. Then, the product $\caA_r^{-1}\circ \caA'_r\colon X(x\dd j] \onto X(\tilde x\dd \tilde{\jmath}]$ is 
 an $\Oh(k)$-cost alignment that, due to disjointness and $\tilde{x}<x$, matches each unedited character $X[t]$ to an earlier character \smash{$X[\tilde t]$} with $\tilde t<t$. 
This gives an LZ-like factorization of $X(x\dd j]$ into $\Oh(k)$ phrases, which contradicts the assumption that $|\LZ(X(x\dd j])| = c\cdot k$ (the constant $c$ is large enough).

We remark that a similar strategy, albeit with a different compressibility measure called \emph{self-edit distance}, has been concurrently applied to efficiently solve the bounded \emph{weighted} edit distance problem~\cite{CKW23}.
A related compression argument appeared in~\cite[Lemma III.10]{KPS21} in a different context of sketching edit distance. 
However, it was only applied to masked strings (with matched characters replaced by $\#$s) and achieved a weaker $\Oh(k^2)$ bound that does not suffice here. 

\paragraph*{Ideal analysis of divide and conquer}
So far, we have described a quantum algorithm which, given strings $X,Y$ with promise $\ed(X,Y) \le k$, finds an edit anchor $(x,y)$ in  
$T_0(|X|,|Y|,k):=\Ohtilde (\sqrt{k(|X|+|Y|)})$ query complexity and $\Ohtilde (\sqrt{k(|X|+|Y|)} + k^2)$ time complexity.
Let us try to analyze the query complexity of our divide-and-conquer approach based on this anchor-finding subroutine (the time complexity has a similar analysis, which we omit from this overview).

Suppose that the anchor $(x,y)$  (with $x= \lceil |X|/2\rceil $) decomposes the input strings into $X=X_1X_2$ and $Y=Y_1Y_2$, resulting in two subproblems $\ed(X_1,Y_1)$ and $\ed(X_2,Y_2)$.
Suppose that, before recursively solving these subproblems, we can somehow obtain upper bounds $k_1\ge \ed(X_1,Y_1)$ and $k_2\ge \ed(X_2,Y_2)$ such that $k_1+k_2=k$.  In this ideal scenario, the overall query complexity of is 
\[
    T(|X|,|Y|,k) = T(|X_1|, |Y_1|, k_1) + T(|X_2|, |Y_2|, k_2) + T_0(|X|,|Y|,k) 
    \le \Oh( T_0(|X|,|Y|,k)\cdot \log |X| ), \]
which can be shown by applying the Cauchy--Schwarz inequality to all the subproblems $(X_i,Y_i,k_i)$ at each of the $\lceil \log |X| \rceil $ levels of recursion (for vectors $(\sqrt{|X_i|+|Y_i|})_i$ and $(\sqrt{k_i})_i$).

This query complexity meets our target of $\tilde \Oh(\sqrt{kn})$ but relies on the unrealistic assumption about knowing the upper bounds of $\ed(X_1,Y_1)$ and $\ed(X_2,Y_2)$. To remove this assumption, we face the following situation: when solving each subproblem $(X,Y)$ in the recursion tree, we a priori do not know an upper bound on $\ed(X,Y)$, but we still want the query complexity spent on this subproblem to be bounded in terms of the true edit distance $k=\ed(X,Y)$.

\paragraph*{Reducing the overhead of exponential search}
A first attempt to resolve this issue is to estimate the distance $\ed(X,Y)$ with exponential search: We iteratively try a sequence of gradually increasing thresholds $k_1,k_2,\dots$ (where the usual choice is $k_i=2^i$) and, in the $i$-th iteration, pretend $k_i \ge \ed(X,Y)$. We apply the aforementioned anchor-finding subroutine in $T_0(|X|,|Y|,k_i)$ quantum query complexity, and then recurse on the subproblems defined by this anchor. 
In the first few iterations, where $k_i< \ed(X,Y)$, the found anchor might be incorrect, of course, 
causing the whole recursive call to eventually fail. 
But hopefully the total cost can be still bounded in terms of the cost of the successful iteration (i.e., the first one where $k_i \ge \ed(X,Y)$ holds).

Unfortunately, this standard exponential search idea no longer works in our recursive scenario: at each level, the wasted work incurs at least a constant-factor overhead, which accumulates multiplicatively across the $\lceil \log |X| \rceil$ levels of recursion, resulting in at least a polynomial-factor overall overhead.
One possible solution is to decrease the recursion depth by enlarging the branching factor, and hence decrease the overall overhead to a subpolynomial factor.
Instead of this generic idea, we use a more problem-specific approach to carefully implement the exponential search, so that a lot of redundant work can be avoided, and the total overhead is decreased to only a polylogarithmic~factor!

  As before, we iteratively try a sequence of increasing thresholds, where each iteration leads to a recursion based on the anchor found using the corresponding threshold $k_i$.
    Our goal is to reduce the cost of the wasted computations so that it becomes almost negligible compared to the correct recursive calls (i.e., based on the true edit-distance anchor).

The key insight here is that, in order to tell whether an anchor $(x_i,y_i)$, computed under a promise $\ed(X,Y)\le k_i$, is a correct anchor, we do not actually need to wait until its entire recursion finishes. Instead, we can pause this recursion after a certain amount of time and proceed to the larger threshold $k_{i+1}$.
A procedure similar to the aforementioned anchor-finding subroutine can tell us whether the earlier anchor $(x_i,y_i)$ is still correct under a weakened promise $\ed(X,Y)\le k_{i+1}$.
If it is, then we can continue running the earlier paused recursive call using $(x_i,y_i)$ (instead of starting from scratch using a new anchor); otherwise, we can abort that call, because we already know it is useless, and start a new recursive call using a new anchor $(x_{i+1},y_{i+1})$ instead.

This strategy allows us to control the total complexity contributed by recursive calls generated by incorrect anchors. We want their contribution to be at most a $1/\polylog n$ fraction of the complexity of the correct calls and, for this reason, we adjust the threshold sequence of the exponential search to $k_i=(\log n)^{2i}$ instead of $k_i=2^i$.
More details on implementing this strategy are given in \cref{subsec:edit-recur}.

Let us remark that the related divide-and-conquer procedure for bounded weighted edit distance~\cite{CKW23} faces an analogous issue.
However, since its target complexity is $\Ohtilde(k\sqrt{kn})$, recursive calls can afford measuring compressibility with respect to the global budget. As a result, all anchors are verified under the global promise of $\ed(X,Y)\le k$, and the failed recursive calls are avoided.
\subsection{LZ77 Factorization}

Let us derive our quantum LZ77 factorization algorithm by attempting a straightforward approach and seeing why it fails. Recall that the LZ77 factorization of a text $X$ can be found by processing text from left to right. Assume we are trying to determine $i$-th factor in the LZ77 factorization. We use $s_i$ to denote the starting location of this factor and suppose the factorization of the prefix $X[1\dd s_i)$ has been already computed. We next need to determine the largest $\ell_i \geq 0$ such that $X[s_i\dd s_i+\ell_i)$ has an occurrence starting at some position $p_i < s_i$.
To do this efficiently, we wish to maintain a data structure over the previously processed text. This data structure should (i) occupy $\Ohtilde(\sqrt{zn})$ space, (ii) allow us to determine the largest $\ell_i$ as described above efficiently, and (iii) support efficient insertions once we find the new factor. The target time complexity for adding a factor of length $\ell_i$ is $\Ohtilde(\sqrt{\ell_i})$. If we can achieve this, then the entire LZ77 factorization will be found in time polylogarithmic factors from $\sum_{i=1}^z \sqrt{\ell_i} \leq \sqrt{zn}$, where we used that $\sum_{i=1}^z \ell_i = n$.

\paragraph{Suffix-Array Approach}
As an initial attempt, consider maintaining the suffix array of the reversed prefix $\rev{X[1\dd s_i)}$. Doing so allows us to determine the $k$-th co-lexicographically largest prefix of $X[1\dd s_i)$ in constant time.  Based on this, we can try to find the \emph{non-overlapping} LZ77 factorization, a variation of LZ77 that requires $p_i+\ell_i\le s_i$ and results in a factorization size within a logarithmic factor away from the optimum. 
To determine the largest $\ell_i$ such that $X[s_i\dd s_i+\ell_i)$ matches a substring of $X[1\dd s_i)$, we combine exponential search on $\ell_i$ and binary search on this sorted set of prefixes. 
To check, for a particular $\ell_i$, whether $X[s_i\dd s_i+\ell_i)$ has an earlier occurrence, we start with the median of the sorted prefixes and apply Grover search (\cref{thm:grover}) in $\Ohtilde(\sqrt{\ell_i})$ time to determine if a mismatch exists between $X[s_i\dd s_i+\ell_i)$ and the median prefix. If a mismatch exists, we determine whether $X[s_i\dd s_i+\ell)$ is co-lexicographically larger or smaller, and continue the binary search accordingly. This technique allows us to determine the $i$-th factor in $\Ohtilde(\sqrt{\ell_i})$ time. 
Unfortunately, the time for updating the suffix array is $\Oh(\ell_i)$, resulting in a linear time overall.

\paragraph{LZ-End Approach}
A natural approach to overcome this is to use the LZ-End factorization rather than non-overlapping LZ77 factorization. Recall that LZ-End differs in that each new factor is either the first occurrence of a symbol or the longest substring whose earlier occurrence ends at the end of a previous factor. It was recently shown that the number of factors in the LZ-End factorization satisfies $z_e = \Oh(z\log^2 n)$~\cite{DBLP:conf/soda/KempaS22}. As discussed more in \cref{sec:other_encodings}, once the LZ-End factorization is computed, we can obtain the actual LZ77 factorization in $\Ohtilde(z)$ time. The reason we consider LZ-End is that, since each (non-trivial) factor has an occurrence ending at the end of a previous factor, we only need to maintain the co-lexicographically sorted order of the prefixes $X[1\dd s_{i'})$ for $i' \le i$. For a given $\ell$, this order suffices for checking in $\Ohtilde(\sqrt{\ell})$ time if $X[s_i\dd s_i+\ell)$ has a previous occurrence ending at the end of a previous factor. However, this idea alone does not work because exponential search on $\ell$ may fail due to the lack of monotonicity. In other words, $X[s_i\dd s_i+\ell)$ may have an earlier occurrence ending at a previous factor
while $X[s_i\dd s_i+\ell-1)$ does not have one.

\paragraph{\boldmath LZ-End+$\tau$ Approach}To facilitate using exponential search for finding the next factor length, we introduce a variation on LZ-End that we call LZ-End+$\tau$. We define LZ-End+$\tau$ by making each new factor $X[s_i\dd s_i+\ell_i)$ either the first occurrence of a new symbol or the longest substring
whose earlier occurrence is of the form $X[p_i\dd q_i)$, where $q_i \le s_i$ satisfies $q_i \bmod \tau = 0$ or $q_i = s_{i'}$ for $i'\le i$.
We let $z_{e+\tau}$ denote the number of LZ-End+$\tau$ factors. A crucial observation is that we still have $z_{e+\tau} = \Oh(z\log^2 n)$. Additionally, the number of prefixes that have to be checked for finding LZ-End+$\tau$ factors is $\Ohtilde(z_{e+\tau} + n/\tau)$. 
Suppose we are trying to find the next factor starting at index~$s_i$. 
We say the \emph{$\tau$-far property} holds for an index $j \ge s_i$  if there exists $h \in [\max(s_i, j-\tau)\dd  j]$ such that $X[s_i\dd h)$ has an earlier occurrence of the form $X[p\dd q)$ with $q\le s_i$ satisfying $q \bmod \tau = 0$ or $q=s_{i'}$ for $i'\le i$.
By definition, the $\tau$-far property is monotone, in that if it holds for $j>s_i$, then it also holds for $j-1$. Hence, the problem reduces to testing the $\tau$-far property for a given $j$ and maintaining the order of prefixes $X[1\dd q)$ for $q\le s_i$ satisfying $q \bmod \tau = 0$ or $q=s_{i'}$ for $i'\le i$.

We first consider the problem of checking whether the $\tau$-far property holds for a given $j \ge s_i$. Our algorithm utilizes the dynamic LCE data structure of Nishimoto et al.~\cite{DBLP:conf/mfcs/NishimotoIIBT16}.
Starting from an initially empty string, it supports insertions of individual characters and arbitrary substrings of the existing text in $\Ohtilde(1)$ time. 
In addition, it answers LCE queries between two arbitrary indices in $\Ohtilde(1)$ time. To describe the process of finding the next factor $X[s_i\dd s_i+\ell_i)$, we assume that we have the co-lexicographic ordering of the selected prefixes $X[1\dd q)$ (across all $q\le s_i$ such that $q \bmod \tau = 0$ or $q=s_{i'}$ for $i'\le i$). We denote this set of prefixes as $\mathcal{P}_{i-1}$.
We also assume that the dynamic LCE data structure has been constructed for $\rev{X[1\dd s_i)}$. 
Our solution first creates at most $\tau+1$ ranges of indices into $\mathcal{P}_{i-1}$, each corresponding to the prefixes in $\mathcal{P}_{i-1}$ that have $X[\max(s_i, j-\tau)\dd h)$ as a suffix. This is accomplished in $\Ohtilde(\tau)$ time by using the LCE data structure. Next, we observe that, for arbitrary $k$, we can find the $k$-th co-lexicographically largest prefix contained in these ranges in $\Ohtilde(\tau)$ time. Based on this observation, we can then apply binary search and the Grover search algorithm to determine if the $\tau$-far property holds for $j$, similar to the LZ-End case. Combining with exponential search on $j$, we can find the next factor of length $\ell_i$ in $\Ohtilde(\tau + \sqrt{\ell_i})$ time.

To update the sorted list of prefixes $\mathcal{P}_{i-1}$ to $\mathcal{P}_i$, we again utilize the dynamic LCE data structure. Assume we just determined the new factor $X[s_i\dd s_{i+1})$. Then, this new factor is either a previously occurring substring with a location determined in the previous step or the first occurrence of a symbol. As mentioned, the LCE data structure supports appending such a substring in $\Ohtilde(1)$ time. To insert into the sorted order the new prefixes $X[1\dd q)$ with $q \in (s_i\dd s_{i+1}]$ such that $q \bmod \tau = 0$ or $q = s_{i+1}$, we apply LCE queries to compare $X[1\dd q)$ to the prefixes in the currently sorted list as needed, resulting in $\Oh(\log n)$ queries and $\Ohtilde(1)$ time per inserted prefix.

Overall, the algorithm takes $\Ohtilde(\frac{n}{\tau}+\sqrt{z_{e+\tau}n}+z_{e+\tau}\tau)$ time, which is $\Ohtilde(\sqrt{z_{e+\tau}n})=\Oh(\sqrt{zn})$
for optimal $\tau$. To complete the proof of \cref{thm:lz}, we convert the LZ-End+$\tau$ factorization to an LZ77 factorization in $\Ohtilde(z_{e+\tau})=\Ohtilde(z)$ time using  data structure presented in~\cite{DBLP:journals/cacm/KempaK22}, which allows us to determine the leftmost occurrence of any substring in $\Ohtilde(1)$ time.

\subsection{Compressed Text Indexing and Applications}

We next outline the construction of the
indexes from \cref{thm:index}. 
Let  $\tau = \Theta(\sqrt{n/r})$. 
The first step is to obtain a (less efficient) index in time $\Ohtilde(n/\tau+\tau r) = \Ohtilde(\sqrt{nr})$ that can support $\SA$ and $\ISA$ queries in time $\Ohtilde(\tau)$.
This is done by preprocessing the RL-BWT encoding, obtained using \cref{cor:alg_rlbwt},
so that we can determine in $\Ohtilde(1)$ time the result of applying the $\LF$ mapping a total of $\tau$ times starting at position $i$, i.e., $\LF^\tau[i]$. The computation utilizes prefix doubling and alphabet replacement techniques. Thanks to the prefix doubling, we only need to perform $\Oh(\log \tau)$ alphabet replacement steps, making the overall computation time $\Ohtilde(\tau r)$. We next 
take $n/\tau$ $\SA$ samples, evenly spaced by text position, in $\Ohtilde(n/\tau)$ time. These  samples make any $\SA$ value computable in $\Ohtilde(\tau)$ time. Combined with the LCE data structure, this supports $\ISA$ queries in $\Ohtilde(\tau)$ time.

The $\Oh(r)$-space index for pattern matching can be constructed in $\Ohtilde(r)$ time, given the text positions corresponding to BWT-run boundaries~\cite{DBLP:conf/soda/GagieNP18}. The later can be achieved via $\Oh(r)$ $\SA$ queries using  the index described above. 
Next, we demonstrate that the suffix array index described in~\cite{DBLP:journals/jacm/GagieNP20} can also be constructed using $\Ohtilde(r)$ 
queries. The main technical challenge lies in efficiently determining, for a given range of $[s\dd e]$ of indices in RL-BWT, the smallest $k \geq 0$ such that $\LF^k([s\dd e])$ contains a BWT run-boundary as well as the interval $\LF^k([s\dd e])$ itself. We outline how to accomplish this using $\Ohtilde(1)$ queries on our less-efficient index. Thus, in both cases, the construction time is $\Ohtilde(\sqrt{rn})$. 
We also maintain the $\Oh(z\log \frac{n}{z})$ space LCE structure~\cite{I17} so that $\ISA$ queries can be supported in $\Ohtilde(1)$ time using binary search on $\SA$ values. 
These indexes allow us to solve several fundamental problems efficiently, as described in \cref{sec:applications}.
\section{Quantum Algorithm for Bounded Edit Distance}\label{sec:edit}

\subsection{Recursive algorithm}
\label{subsec:edit-recur}

Our quantum algorithm for computing edit distance can be viewed as a deterministic classical algorithm that makes the following oracle calls: (1) compute the LZ77 factorization of a substring of the input strings; (2) check the equality of two substrings of the input strings.  These two tasks can be efficiently solved using quantum subroutines, by \cref{thm:lz} and Grover search  (\cref{thm:grover}). 
By applying error reduction (via a logarithmic number of repetitions) and a global union bound, we assume that all these quantum subroutines work correctly, so in the following, we can ignore the analysis of failure probabilities.

We need the following notion.
\begin{definition}[Edit anchor]\label{defn:anchor}
 	We say that $(x,y) \in  [i\dd j]\times [i'\dd j']$ is an \emph{edit anchor} of fragments $X(i\dd j]$ and $Y(i'\dd j']$ if 
    $\ed(X(i\dd j],Y(i'\dd j'])=\ed(X(i \dd x],Y(i'\dd y])+\ed(X(x \dd j],Y(y\dd j'])$, that is, $(x,y)\in \caA$ for some optimal alignment $\caA:X(i\dd j]\onto Y(i'\dd j']$.

    Moreover, for an integer $k\ge 0$, we say that $(x,y)$ is a $k$-edit anchor of $X(i\dd j]$ and $Y(i'\dd j']$ if 
    $(x,y)$ is their edit anchor or  $\ed(X(i\dd j],Y(i'\dd j']) > k$.
\end{definition}
\SetKwFunction{FindAnchor}{FindAnchor}
\SetKwFunction{IsAnchor}{IsAnchor}

We will prove the following two lemmas in \cref{sec:anchor}.
\begin{restatable}[Finding an anchor: {$\protect\FindAnchor(X,Y,k,x)$}]{lemma}{find}\label{lem:findanchor-algo}
    There exists a quantum algorithm that, given strings $X,Y\in \Sigma^{\le n}$, an integer $k\ge 1$, and a position $x\in [0\dd |X|]$,
    finds a position $y\in [0\dd |Y|]$ such that $(x,y)$ is a $k$-edit anchor of $X,Y$. The algorithm has query complexity $\Ohtilde(\sqrt{kn})$ and time complexity $\Ohtilde(\sqrt{kn} + k^2)$.
\end{restatable}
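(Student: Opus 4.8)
The plan is to implement the anchor-finding strategy sketched in the technical overview: select a compressible window around position $x$, retrieve its LZ77 compression, solve edit distance locally on that window using the classical compressed algorithm (\cref{thm:lzed}), and output the position that the resulting local optimal alignment assigns to $X[x]$. Concretely, I would first locate the right window boundary $j\in[x\dd|X|]$ as the largest index with $|\LZ(X(x\dd j])|\le c\cdot k$, for a suitably large constant $c$, and symmetrically the left boundary $i\in[0\dd x]$ as the smallest index with $|\LZ(X(i\dd x])|\le c\cdot k$. Since $|\LZ(X(x\dd j])|$ is monotone in $j$ (LZ77 size is monotone under taking prefixes, per \cref{fct:lz} and the surrounding discussion), these boundaries can be found by binary search, each step invoking the quantum LZ77 algorithm of \cref{thm:lz} on a candidate fragment; this costs $\Ohtilde(\sqrt{kn})$ queries and time overall, since every invocation is on a fragment whose compression has size $\Oh(k)$ once we are in the relevant range, and we can cap the LZ computation at $c\cdot k{+}1$ phrases. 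Set $X':=X(i\dd j]$ and $Y':=Y(i\dd j+|Y|-|X|]$ (clamping the endpoints to $[0\dd|Y|]$); note $\ed(X',Y')\le\ed(X,Y)$.

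Next I would retrieve $\LZ(X')$ and $\LZ(Y')$, each of size $\Ohtilde(k)$ by \cref{fct:lz} (using $\ed(X',Y')\le k$ and $|\LZ(X')|=\Oh(k)$), again via \cref{thm:lz} in $\Ohtilde(\sqrt{kn})$ quantum time, and feed them to the algorithm of \cref{thm:lzed}. This returns $\ed(X',Y')$ together with an optimal alignment $\caA':X'\onto Y'$ in $\Ohtilde(k^2)$ additional time. If $\ed(X',Y')>k$, then since $\ed(X',Y')\le\ed(X,Y)$ we conclude $\ed(X,Y)>k$ and may output an arbitrary $y$ (the $k$-edit anchor condition is vacuously satisfied). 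Otherwise, define $y$ so that $(x,y)\in\caA'$ (reading off the pair of $\caA'$ whose first coordinate equals $x$, translated back to global coordinates), and output $y$. The query and time bounds are as claimed: $\Ohtilde(\sqrt{kn})$ queries for all the LZ computations and $\Ohtilde(\sqrt{kn}+k^2)$ time total.

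The heart of the proof — and the main obstacle — is the correctness argument: assuming $\ed(X,Y)\le k$, one must show the output $(x,y)$ is a genuine (global) edit anchor of $X,Y$. Following the overview, I would fix an arbitrary optimal global alignment $\caA:X\onto Y$ and prove that $\caA$ must cross $\caA'$ at points $(x_\ell,y_\ell)\preceq(x,y)\preceq(x_r,y_r)$; then the hybrid alignment obtained from $\caA$ by splicing in the segment of $\caA'$ between $(x_\ell,y_\ell)$ and $(x_r,y_r)$ has cost at most that of $\caA$ (by optimality of $\caA'$ on the window, plus the fact that $\caA$ restricted to $X(i\dd j]$ is an alignment onto some fragment of $Y$ whose cost is at least $\ed(X',Y')$ after a bounded shift), hence is itself optimal and passes through $(x,y)$. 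For the crossing claim on the right side, suppose for contradiction that $\caA$ and $\caA'$ are disjoint to the right of $(x,y)$ within the strip; restricting both to the common $Y$-fragment $Y'_r:=Y(y\dd j+|Y|-|X|]$ yields two disjoint $\Oh(k)$-cost alignments $\caA'_r:X(x\dd j]\onto Y'_r$ and $\caA_r:X(\tilde x\dd\tilde\jmath]\onto Y'_r$ with $\tilde x=x\pm\Oh(k)$, $\tilde\jmath=j\pm\Oh(k)$; WLOG $\tilde x<x$, and then the product $\caA_r^{-1}\circ\caA'_r:X(x\dd j]\onto X(\tilde x\dd\tilde\jmath]$ is an $\Oh(k)$-cost alignment that, by disjointness and $\tilde x<x$, matches each unedited $X[t]$ to some $X[\tilde t]$ with $\tilde t<t$, yielding an LZ77-like factorization of $X(x\dd j]$ into $\Oh(k)$ phrases — contradicting $|\LZ(X(x\dd j])|>c\cdot k$ once $c$ is large enough relative to the hidden constants. (If $j=|X|$, the right crossing is trivially $(|X|,|Y|)$, and symmetrically for the left boundary; the constant $c$ must be chosen uniformly to dominate the $\Oh(\cdot)$ constants arising in the product-alignment bound.) The delicate points I expect to need care are: bookkeeping the $\Oh(k)$-sized coordinate shifts between the $X$- and $Y$-windows so that all restricted alignments stay well-defined and low-cost; verifying that the spliced hybrid alignment is genuinely no costlier than $\caA$; and pinning down $c$ so the contradiction goes through on both sides simultaneously.
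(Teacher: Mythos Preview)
Your proposal is essentially the paper's proof: the algorithm (LZ-threshold window, retrieve compressed $X',Y'$ via \cref{thm:lz}, solve locally via \cref{thm:lzed}) and the disjoint-alignments-imply-compression correctness argument match \cref{lem:disjointalignmentLZ} and \cref{lem:intersect} in the paper.

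One point you glossed over with ``symmetrically'': LZ77 size is monotone only under taking \emph{prefixes}, so $|\LZ(X(i\dd x])|$ is not obviously monotone in $i$, and your binary search for the left boundary is not justified as written. The paper defines the left boundary as the minimum $i$ with $|\LZ(\rev{X(i\dd x]})|\le 6k+2$; this \emph{is} monotone because $\rev{X(i\dd x]}$ is a prefix of $\rev{X[1\dd x]}$, so binary search (running \cref{thm:lz} on substrings of $\rev{X}$) works. Correspondingly, the contradiction step on the left applies \cref{lem:disjointalignmentLZ} to the reversed fragments to bound $|\LZ(\rev{X(i\dd x]})|\le 6k+1$, which then contradicts minimality of $i$ (unless $i=0$, giving the trivial crossing at $(0,0)$). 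Also, your phrase ``contradicting $|\LZ(X(x\dd j])|>c\cdot k$'' is backwards---by your definition of $j$ this quantity is $\le c\cdot k$; the actual contradiction is with the \emph{maximality} of $j$ (equivalently $|\LZ(X(x\dd j])|=c\cdot k$ when $j<|X|$), which you do handle correctly in the $j=|X|$ boundary case.
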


\begin{restatable}[Testing an anchor: {$\protect\IsAnchor(X,Y,k,(x,y))$}]{lemma}{keepordiscard}\label{lem:keepordiscard-algo}
    There exists a quantum algorithm that, given strings $X,Y\in \Sigma^{\le n}$, an integer $k\ge 1$, and a pair $(x,y)\in [0\dd |X|]\times [0\dd |Y|]$,
    \begin{itemize}
        \item if $\ed(X,Y)\le k$, decides whether $(x,y)$ is an edit anchor of $X,Y$;
        \item otherwise, returns an arbitrary answer.
    \end{itemize}
    The algorithm has query complexity $\Ohtilde(\sqrt{kn})$ and time complexity $\Ohtilde(\sqrt{kn} + k^2)$.
\end{restatable}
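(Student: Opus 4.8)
The plan is to reduce the question to a \emph{local} computation on a short, highly compressible window around position~$x$, mirroring the algorithm behind \cref{lem:findanchor-algo}. Write $X=X_1X_2$ and $Y=Y_1Y_2$ with $X_1=X[1\dd x]$, $X_2=X(x\dd|X|]$, $Y_1=Y[1\dd y]$, $Y_2=Y(y\dd|Y|]$. Every alignment of $X$ onto $Y$ through $(x,y)$ has cost equal to the sum of the costs of its restrictions to $(X_1,Y_1)$ and $(X_2,Y_2)$, and gluing optimal alignments of $X_1$ onto $Y_1$ and of $X_2$ onto $Y_2$ produces an alignment of $X$ onto $Y$ through $(x,y)$; hence the minimum cost of an alignment of $X$ onto $Y$ through $(x,y)$ equals $\ed(X_1,Y_1)+\ed(X_2,Y_2)\ge\ed(X,Y)$, and $(x,y)$ is an edit anchor of $X,Y$ exactly when this inequality is tight. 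Consequently, under the promise $\ed(X,Y)\le k$, an edit anchor $(x,y)$ additionally satisfies $\ed(X_1,Y_1),\ed(X_2,Y_2)\le k$, and the $Y$-coordinate of any edit anchor at position~$x$ lies in $[x-k\dd x+k]$.

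The algorithm first locates the window, exactly as in the proof of \cref{lem:findanchor-algo}: using binary search together with a capped invocation of \cref{thm:lz}, it finds the maximal fragment $X'=X(i\dd j]$ containing $X[x]$ with $|\LZ(X(i\dd x])|\le ck$ and $|\LZ(X(x\dd j])|\le ck$ for a suitable constant~$c$; this costs $\Oh(\log n)$ calls of query and time complexity $\Ohtilde(\sqrt{kn})$ each. Let $Y'=Y(i'\dd j']$ be the companion fragment of~$Y$ defined as in that proof (so that, under the promise, $\ed(X',Y')=\Oh(k)$, the $Y$-coordinate of every possible edit anchor at position~$x$ lies in $[i'\dd j']$, and the boundary cases in which the window reaches an end of~$X$ are treated). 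Invoking \cref{thm:lz} a constant number of times, it computes $\LZ(X')$, $\LZ(Y')$, $\LZ(X(i\dd x])$, $\LZ(X(x\dd j])$, $\LZ(Y(i'\dd y])$, $\LZ(Y(y\dd j'])$, capping each factorization at $\Theta(k\polylog n)$ phrases; by \cref{fct:lz} no cap is exceeded when $(x,y)$ is an edit anchor and $\ed(X,Y)\le k$, so whenever some cap is exceeded the algorithm may (and does) answer ``no''. Finally it runs the classical procedure of \cref{thm:lzed} three times to obtain $d':=\ed(X',Y')$, $d_L:=\ed(X(i\dd x],Y(i'\dd y])$, and $d_R:=\ed(X(x\dd j],Y(y\dd j'])$ in $\Ohtilde(k^2)$ further time, and it answers ``yes'' iff $i'\le y\le j'$ and $d_L+d_R=d'$, i.e.\ iff $(x,y)$ is an edit anchor of $X'$ and~$Y'$. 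The query complexity is $\Ohtilde(\sqrt{kn})$ and the time complexity is $\Ohtilde(\sqrt{kn}+k^2)$.

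It remains to prove that, under the promise $\ed(X,Y)\le k$, the pair $(x,y)$ is an edit anchor of $X',Y'$ if and only if it is an edit anchor of $X,Y$; this is the heart of the argument. The direction ``anchor of $X',Y'$ $\Rightarrow$ anchor of $X,Y$'' is the key claim already used for \cref{lem:findanchor-algo}: by that claim any optimal alignment of $X$ onto $Y$ meets an optimal alignment $\caB$ of $X'$ onto $Y'$ through~$(x,y)$ at two points straddling $X$-coordinate~$x$, and splicing $\caB$'s central portion into the former keeps the cost optimal, so $(x,y)$ becomes an anchor of $X,Y$. The converse direction --- which I expect to be the main obstacle --- proceeds the other way around: fix an optimal alignment $\caA$ of $X$ onto $Y$ through~$(x,y)$ and a fixed optimal alignment $\caB$ of $X'$ onto $Y'$. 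By (a second application of) the compression argument of \cref{lem:findanchor-algo}, $\caA$ and $\caB$ must intersect at two points $(x_\ell,y_\ell),(x_r,y_r)$ straddling $X$-coordinate~$x$ --- otherwise, composing suitable restrictions of $\caA$ and $\caB$ yields an $\Oh(k)$-cost self-alignment of $X(x\dd j]$ (or of $X(i\dd x]$) matching every unedited character $X[t]$ to a strictly earlier character $X[\tilde t]$ with $\tilde t<t$, which gives an LZ-like factorization of that fragment into $\Oh(k)$ phrases and contradicts $|\LZ(X(x\dd j])|>ck$ (resp.\ $|\LZ(X(i\dd x])|>ck$). Since $(x_\ell,y_\ell),(x,y),(x_r,y_r)$ then appear in this order along~$\caA$ (after handling the degenerate subcase in which the portion of~$\caA$ at $X$-coordinate~$x$ consists of insertions or deletions), and since $(x_\ell,y_\ell)$ and $(x_r,y_r)$, lying on the optimal alignment~$\caB$, are themselves edit anchors of $X',Y'$, decomposing $\ed(X',Y')$ at these two points and at~$(x,y)$ and invoking subadditivity yields $\ed(X(i\dd x],Y(i'\dd y])+\ed(X(x\dd j],Y(y\dd j'])\le\ed(X',Y')$, hence equality; that is, $(x,y)$ is an edit anchor of $X',Y'$. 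Combining the two directions with the complexity bounds above proves the lemma.
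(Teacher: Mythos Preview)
Your proposal is correct and follows essentially the same approach as the paper. The paper packages the bidirectional equivalence ``$(x,y)$ is an edit anchor of $X,Y$ iff it is an edit anchor of $X',Y'$'' into a standalone lemma (\cref{lem:intersect}), proved once via the disjoint-alignments/compression argument (\cref{lem:disjointalignmentLZ}) and then invoked for both \FindAnchor and \IsAnchor; you instead re-derive both directions inline, but the argument is the same. One small technical slip: for the left boundary you write $|\LZ(X(i\dd x])|\le ck$, whereas the paper uses $|\LZ(\rev{X(i\dd x]})|$ so that the quantity is monotone in~$i$ (prefixes of the reversed string) and binary search applies---your phrasing ``exactly as in the proof of \cref{lem:findanchor-algo}'' implicitly fixes this.
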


\newcommand{\solve}{\mathsf{Solve}}

The recursive algorithm for computing $\ed(X,Y)$ is given in \cref{alg:recur}.
The outermost function call is $\solve(X,Y)$ with preconditions $X\neq Y$ and $|X|>0$; the cases of $X=Y$ and $|X|=0$ can be handled at the very beginning. 
As discussed earlier in the technical overview, \cref{alg:recur} may pause the recursive calls it makes after they have spent certain amount of time or queries.
In order to cleanly formalize these conditions, we consider two types of \emph{tokens}, called q-tokens and t-tokens, respectively, that our algorithm \emph{burns} in \cref{line:burn0,line:burn}. 
Recursive calls can be paused (or terminated) if they exceed certain quotas for the number of burnt tokens; see \cref{line:running}.

Let $n = \max\{|X|,|Y|,2\}$ be the global input length, and define a global parameter $r = \lceil 5 \log n\rceil$.
We will use the following function
\[T_q(|X|,|Y|,d):= 10 \sqrt{d\cdot (|X|+|Y|)}\cdot r^{3}\cdot \left (\tfrac{r+2}{r}\right )^{\lceil \log |X| \rceil }\]
to measure the number q-tokens burnt by \cref{alg:recur} and its recursive calls.  Analogously, we measure the number of t-tokens burnt using the following function:
\[T_t(|X|,|Y|,d):= 10 \cdot d^2\cdot r^{9}\cdot \left (\tfrac{r+2}{r}\right )^{\lceil \log |X| \rceil }.\]
\begin{algorithm}[t]
\DontPrintSemicolon
\caption{$\solve(X,Y)$ (preconditions: $X\neq Y$ and $|X|>0$)}
\label{alg:recur}
\If{$|X|= 1$}{
    Burn $(|Y|+1)$ q-tokens and $(|Y|+1)$ t-tokens\;\label{line:burn0}
    \For{$y\gets 1$ \KwSty{to} $|Y|$}{
        \lIf{$X[1]=Y[y]$}{\Return{$|Y|-1$}}\label{line:onea}
    }
    \Return{$\max(1,|Y|)$}\;\label{line:oneb}
}
Initialize anchor $a \gets (\bot,\bot)$\\
Initialize program $A \gets \bot$\\
\For{$i \gets  0,1,2\dots$}{
    Burn $\sqrt{r^{2i+2}(|X|+|Y|)}$ q-tokens and $(r^{2i+2})^2$ t-tokens\label{line:burn}\;
    \If(\tcp*[f]{\cref{lem:keepordiscard-algo}}){\KwSty{not} $\IsAnchor(X,Y,r^{2i+2},a)$}{\label{line:checkanchor}
        $a \gets \FindAnchor(X,Y,r^{2i+2}, \lceil{|X|/2}\rceil)$\tcp*[r]{\cref{lem:findanchor-algo}} \label{line:findanchor}
        Check whether $X(0\dd a_x]=Y(0\dd a_y]$\tcp*[r]{\cref{thm:grover}}\label{line:eq1}
        Check whether $X(a_x\dd |X|]=Y(a_y \dd |Y|]$\tcp*[r]{\cref{thm:grover}}\label{line:eq2}
        Define program $A_i := \big [\textbf{return } \solve(X(0\dd a_x],Y(0\dd a_y]) + \solve(X(a_x\dd |X|],Y(a_y \dd |Y|]) \big ]$, skipping the corresponding recursive call if equality is found in \cref{line:eq1} or~\ref{line:eq2}\label{line:program}\\
        Terminate current $A$, and redefine $A\gets A_i$\label{line:terminate}
    }
    Resume $A$ and run it until it attempts to burn more than $T_q(|X|,|Y|,r^{2i})$ q-tokens or more than $T_t(|X|,|Y|, r^{2i})$ t-tokens since its beginning \label{line:running}\\
    \If{$A$ \upshape{has already finished, with return value} $d$} {
       \lIf{$d< r^{2i+2}$}{\Return{$d$}\label{line:return}}
    }
}
\end{algorithm} 

Our main claim is the following:
\begin{lemma}\label{lem:edit-query-time}
Given strings $X,Y\in \Sigma^{\le n}$ satisfying $|X|>0$ and $\ed(X,Y) =  d\ge 1$, the procedure $\solve(X,Y)$ correctly returns $d$, 
and (including recursive calls) burns at most $T_q(|X|,|Y|,d)$ q-tokens and at most $T_t(|X|,|Y|,d)$ t-tokens.
\end{lemma}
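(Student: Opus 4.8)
The plan is to prove \cref{lem:edit-query-time} by induction on $|X|$ (equivalently, on the recursion depth), combining a correctness argument with a token-counting argument. For the correctness part, I would argue that there exists an iteration index $i^\star$ such that $r^{2i^\star}\ge d$ but $r^{2(i^\star-1)}<d$ (taking $i^\star$ minimal with $r^{2i^\star}\ge d$); at this iteration, the promise $\ed(X,Y)\le r^{2i^\star+2}$ certainly holds, so by \cref{lem:findanchor-algo} the anchor $a$ we hold (either freshly computed in \cref{line:findanchor}, or retained from an earlier iteration and re-validated by \cref{lem:keepordiscard-algo} in \cref{line:checkanchor}) is a genuine edit anchor of $X,Y$. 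Then $X=X(0\dd a_x]\cdot X(a_x\dd|X|]$ and $Y=Y(0\dd a_y]\cdot Y(a_y\dd a_y]$ split the instance into two subinstances whose edit distances $d_1,d_2$ sum to $d$, each strictly smaller than $d$ (since $X\ne Y$ forces at least one edit, and $|X(0\dd a_x]|,|X(a_x\dd|X|]|<|X|$ because $a_x=\lceil|X|/2\rceil$ is strictly interior whenever $|X|\ge 2$). By the inductive hypothesis applied to the two recursive calls, program $A$ eventually finishes with the correct value $d_1+d_2=d$, provided it is not paused prematurely --- and the token budget $T_q(|X|,|Y|,r^{2i})$, $T_t(|X|,|Y|,r^{2i})$ in \cref{line:running} becomes large enough to accommodate the (by-then correct) recursion once $i$ reaches $i^\star$ or a bounded number of iterations later. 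Finally $d<r^{2i^\star+2}=r^{2i+2}$ holds at that iteration, so the test in \cref{line:return} succeeds and $\solve$ returns $d$.

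For the token-counting part, I would first bound the tokens burnt directly by \cref{line:burn} across all iterations $i=0,1,\dots$ up to termination: these form a geometric series dominated by the last term, giving $\Oh(\sqrt{r^{2i^\star+2}(|X|+|Y|)})$ q-tokens and $\Oh(r^{4i^\star+4})$ t-tokens, and since $r^{2i^\star}=\Oh(r^2 d)$ this is within the claimed bound with room to spare in the $r$-powers. Next I would bound the tokens burnt inside the recursive programs $A_i$. The key is that a program started at iteration $i$ is run in \cref{line:running} only until it would exceed $T_q(|X|,|Y|,r^{2i})$ q-tokens (resp.\ $T_t$ t-tokens); summing these caps over $i=0,\dots,i^\star$ again gives a geometric series dominated by the $i=i^\star$ term $T_q(|X|,|Y|,r^{2i^\star})$, and using $r^{2i^\star}\le r^2 d$ together with the definition of $T_q$ (note $T_q(|X|,|Y|,r^2d)=r^3\cdot T_q(|X|,|Y|,d)\cdot(\text{constant})$ up to the $\left(\tfrac{r+2}{r}\right)$-factor bookkeeping) shows this is bounded by $T_q(|X|,|Y|,d)$ once we check the recursion-depth exponent $\left(\tfrac{r+2}{r}\right)^{\lceil\log|X|\rceil}$ absorbs the per-level loss. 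The subtlety is that $A_{i^\star}$ itself is only a \emph{correct} recursion, and by the inductive hypothesis it burns at most $T_q(|X(0\dd a_x]|,|Y(0\dd a_y]|,d_1)+T_q(|X(a_x\dd|X|]|,|Y(a_y\dd|Y|]|,d_2)$ q-tokens; here is where Cauchy--Schwarz enters, bounding $\sqrt{d_1(|X_1|+|Y_1|)}+\sqrt{d_2(|X_2|+|Y_2|)}\le\sqrt{(d_1+d_2)(|X_1|+|Y_1|+|X_2|+|Y_2|)}=\sqrt{d(|X|+|Y|)}$ (using that the two fragment-pairs partition $X$ and $Y$, so the lengths add up, possibly with an $\Oh(1)$ overcount at the split that is harmless), and the depth exponent drops by one from $\lceil\log|X_1|\rceil,\lceil\log|X_2|\rceil\le\lceil\log|X|\rceil-1$ to pay for the $\left(\tfrac{r+2}{r}\right)$-factor. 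For the t-tokens the analogue uses $(d_1)^2+(d_2)^2\le(d_1+d_2)^2=d^2$.

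The main obstacle is orchestrating the ``pause and resume'' bookkeeping so that the wasted work from incorrect anchors (iterations $i<i^\star$, and possibly a few iterations around $i^\star$ before the retained anchor is certified) is only a $1/\polylog n$ fraction of the correct work. Concretely, I need to show: (a) a recursive program $A_i$ built from an incorrect anchor (because $r^{2i+2}<d$) is always paused by \cref{line:running} before it could burn more than $T_q(|X|,|Y|,r^{2i})$ q-tokens, so it contributes at most the $i$-th term of the geometric series; (b) when we reach an iteration where the promise holds, either the currently retained anchor passes \IsAnchor{} and we merely \emph{resume} (not restart) $A$, so no correct work is thrown away, or it fails and we pay one more \FindAnchor{} call whose cost $\Ohtilde(\sqrt{r^{2i+2}n}+r^{4i+4})$ is again a geometric-series term; and (c) the threshold sequence $r^{2i}=(\log n)^{2i}$ --- rather than $2^i$ --- is what makes the ratio between consecutive budget levels a factor of $r^2=\polylog n$, so that even after multiplying the $\lceil\log|X|\rceil$ levels of recursion the cumulative overhead stays $\poly\log n$ rather than $\poly(n)$; this is exactly why $T_q,T_t$ carry the explicit $r^3,r^9$ and $\left(\tfrac{r+2}{r}\right)^{\lceil\log|X|\rceil}$ factors, and I would verify the recurrences $T_q(|X|,|Y|,d)\ge(\text{direct burn})+\sum_{i\le i^\star}T_q(|X|,|Y|,r^{2i})+\big(\text{correct-}A\text{ bound via IH}\big)$ and its t-token analogue hold term by term with the given constants.
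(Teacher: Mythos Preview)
Your approach mirrors the paper's: induction on $|X|$, a three-part token decomposition (direct burns at \cref{line:burn}, terminated wrong programs, and the eventually-correct program via the inductive hypothesis), Cauchy--Schwarz for q-tokens and $d_1^2+d_2^2\le d^2$ for t-tokens, with the depth factor $\left(\tfrac{r+2}{r}\right)^{\lceil\log|X|\rceil}$ absorbing the per-level loss.

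Two bookkeeping points need tightening for the constants to close. First, claim (a) as stated is not the right invariant: a program $A_i$ may survive several iterations (if \IsAnchor{} keeps accepting while the threshold is still below $d$), so its cost is bounded not by $T_q(|X|,|Y|,r^{2i})$ but by the budget at the iteration \emph{just before} it is terminated. The paper instead indexes by termination: a program killed in iteration $i'\in[1\dd i^*]$ was paused in iteration $i'-1$ at budget $T_q(|X|,|Y|,r^{2(i'-1)})$, and with $i^*\le j:=\lfloor\log_{r^2}d\rfloor$ (so $r^j\le\sqrt d$) these sum to at most $\tfrac{1}{r-1}\,T_q(|X|,|Y|,d)$. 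Your summary sum $\sum_{i\le i^\star}T_q(|X|,|Y|,r^{2i})$ is off by roughly a factor $r$ and, added to the $\tfrac{r}{r+2}\,T_q$ from the correct recursion, already exceeds $T_q$. Second, you need to pin down that the correct $A$ actually \emph{finishes} by a specific iteration---the paper shows iteration $j{+}1$ suffices, because the inductive bound $\tfrac{r}{r+2}\,T_q(|X|,|Y|,d)$ lies below the budget $T_q(|X|,|Y|,r^{2(j+1)})$---otherwise the geometric sum for \cref{line:burn} has no finite upper index. (Minor: $d_1,d_2$ need not each be strictly smaller than $d$; one can be $0$, which is why \cref{line:eq1,line:eq2} skip the corresponding call. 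Your induction is on $|X|$, so this is harmless.)
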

\begin{proof}
We first prove the correctness of \cref{alg:recur}.  
Suppose $|X|\ge 2$ (otherwise, the correctness is clear; see \cref{line:onea,line:oneb})
and denote $j=\lfloor{\log_{r^2} d}\rfloor$.
Observe that \cref{alg:recur} always returns the cost of a valid alignment between $X$ and $Y$. We need to show it indeed returns the cost of an optimal alignment.
Due to the check at \cref{line:return}, the algorithm can only return in iteration $i\ge j$.
Starting from iteration $j$, the program $A$ is based on an $r^{2j+2}$-edit anchor $a$ (due to \cref{line:checkanchor,line:findanchor}), which belongs to an optimal alignment of $X,Y$ by \cref{defn:anchor} since $d< r^{2j+2}$. Hence, once program $A$ terminates, it indeed returns the correct answer $d=\ed(X,Y)$ (assuming its recursive calls are correct, by induction).

Next, we shall prove that the algorithm does not burn too many tokens.
If $|X|= 1$, then $|Y|+1\le 3d$ because $d\ge \max(1,|Y|-1)$.
Thus, the number of burnt q-tokens satisfies \[|Y|+1 \le \sqrt{3d\cdot (|Y|+1)}<10\sqrt{d\cdot (|X|+|Y|)} < T_q(|X|,|Y|,d).\]
Similarly, the number of burnt t-tokens is 
\[|Y|+1 \le 3d < 10d^2 < T_t(|X|,|Y|,d).\]
Henceforth, we assume that $|X|\ge 2$ and analyze the number of burnt tokens in three parts.
\begin{itemize}
    \item 
We first consider the program $A$ that is running during iteration $j$.
Note that later iterations $i>j$ will not kill program $A$: due to $d < r^{2j+2}$, the procedure 
$\IsAnchor(X,Y,r^{2i+2},a)$ will not discard the $r^{2j+2}$-edit anchor $a$. So $A$ is the program that eventually returns on \cref{line:return}. 
The program $A$ consists of two recursive calls,
$ \solve(X(0\dd a_x],Y(0\dd a_y]) $ and $ \solve(X(a_x\dd |X|],\allowbreak Y(a_y \dd |Y|])$
and, by induction, they return edit distances $d_1$ and $d_2$, respectively, with $d_1+d_2 = d$.
 If either one of $d_1,d_2$ is zero, then the corresponding recursive call is skipped and does not burn any tokens.
 Let $n_1 =  |X(0\dd a_x]|$ and $n_2 = |X(a_x\dd |X|]|$, which satisfy $n_1=\lceil\frac12 |X|\rceil \ge \lfloor\frac12|X|\rfloor = n_2$ by definition of $a$ at \cref{line:findanchor}.
 Let  $m_1 =  |Y(0\dd a_y]|$ and $m_2 = |Y(a_y \dd |Y|]|$.
By the inductive hypothesis, and due to the Cauchy--Schwarz inequality, the number of q-tokens burnt by the two recursive calls is at most
\begin{align*}
    T_q(n_1,m_1,d_1) + T_q(n_2,m_2,d_2) &\le  
	10\left (\sqrt{d_1(n_1+m_1)} + \sqrt{d_2(n_2+m_2)}\right )\cdot r^{3} \cdot \left (\tfrac{r+2}{r}\right )^{\lceil \log n_1 \rceil } \\
    & \le 
	10 \sqrt{d\cdot (|X|+|Y|)} \cdot r^{3} \cdot \left (\tfrac{r+2}{r}\right )^{\lceil \log |X|\rceil -1}\\
    & = \tfrac{r}{r+2} \cdot T_q(|X|,|Y|,d).
\end{align*}	
The number of burnt t-tokens satisfies
\begin{align*} T_t(n_1,m_1,d_1) + T_t(n_2,m_2,d_2) & \le 10 \cdot (d_1^2 + d_2^2)\cdot r^{9} \cdot \left (\tfrac{r+2}{r}\right )^{\lceil \log n_1 \rceil }\\
    & \le 10\cdot d^2 \cdot r^{9}\cdot \left (\tfrac{r+2}{r}\right )^{\lceil \log |X|\rceil -1} \\
    & =  \tfrac{r}{r+2} \cdot T_t(|X|,|Y|,d).
\end{align*}
Note that these two sums are smaller than  $T_q(|X|,|Y|,r^{2j+2})$ and $T_t(|X|,|Y|,r^{2j+2})$ respectively, which means that $A$ will finish running in or before iteration $j+1$, since $A$ will have burnt at most $T_q(|X|,|Y|,r^{2j+2})$ q-tokens and at most $T_t(|X|,|Y|,r^{2j+2})$ t-tokens in iteration $j+1$ at \cref{line:running}.

\item Now, we analyze the number of tokens burnt on \cref{line:burn}.
For each iteration $i\ge 0$, we burned $\sqrt{r^{2i+2}(|X|+|Y|)}$ q-tokens and $r^{4i+4}$ t-tokens.
Since we only performed iterations $i\in [0\dd {j+1}]$, the total number of q-tokens burnt is at most 
\begin{align*}
 \sum_{i=0}^{j+1} \sqrt{r^{2i+2}(|X|+|Y|)} &=  \sum_{i=0}^{j+1}  r^{i+1}\sqrt{|X|+|Y|}  \\
 & \le \tfrac{1}{r-1} \cdot r^{j+3} \sqrt{|X|+|Y|}\\
 & \le \tfrac{1}{r-1} \cdot \sqrt{d}\cdot \sqrt{|X|+|Y|}\cdot r^3 \cdot \left (\tfrac{r+2}{r}\right )^{\lceil \log |X| \rceil }\\
    & = \tfrac{0.1}{r-1}\cdot  T_q(|X|,|Y|,d).
\end{align*}
The total number of t-tokens burnt, on the other hand, is at most
\begin{align*}\sum_{i=0}^{j+1} (r^{2i+2})^2 &= \sum_{i=0}^{j+1} r^{4i+4}\\
    &\le \tfrac{1}{r^4-1}\cdot r^{4j+12}\\
    &\le \tfrac{1}{r-1}\cdot r^{4j+9}\\
    &\le \tfrac{1}{r-1}\cdot d^2 \cdot r^{9}\cdot  \left (\tfrac{r+2}{r}\right )^{\lceil \log |X| \rceil }\\
    &= \tfrac{0.1}{r-1}\cdot  T_t(|X|,|Y|,d).
    \end{align*}

\item Now, we analyze the number of tokens burnt by earlier programs that were terminated. 
Suppose the correct anchor for program $A$ was computed in iteration $i^*\le j$.
Then, every previous wrong anchor was terminated in some iteration $i\in [1\dd i^*]$, where we found it did not pass the check at \cref{line:checkanchor}, which means that the corresponding wrong program has only burnt at most $T_q(|X|,|Y|, r^{2i-2})$ q-tokens and at most  $T_t(|X|,|Y|, r^{2i-2})$ t-tokens before it was paused at \cref{line:running} in iteration $i-1$.
Summing over all such possible wrong executions, the number of q-tokens is at most
\begin{align*}
     \sum_{i=1}^{i^*} T_q(|X|,|Y|,r^{2i-2}) &\le \sum_{i=1}^{j} 10\sqrt{r^{2i-2}\cdot (|X|+|Y|)}\cdot r^{3} \cdot \left (\tfrac{r+2}{r}\right )^{\lceil \log |X| \rceil }\\
     & < \tfrac{10\cdot r^{j}}{r-1}\cdot \sqrt{|X|+|Y|} \cdot r^{3}\cdot \left (\tfrac{r+2}{r}\right )^{\lceil \log |X| \rceil }\\
     & \le \tfrac{10}{r-1}\cdot \sqrt{d(|X|+|Y|)} \cdot r^{3}\cdot \left (\tfrac{r+2}{r}\right )^{\lceil \log |X| \rceil }\\
     & = \tfrac{1}{r-1}\cdot T_q(|X|,|Y|,d).
\end{align*}
The total number of t-tokens burnt by these terminated calls is at most
\begin{align*} \sum_{i=1}^{i^*} T_t(|X|,|Y|,r^{2(i-1)}) &\le \sum_{i=1}^{j} 10\cdot r^{4(i-1)}\cdot r^{9}\cdot  \left (\tfrac{r+2}{r}\right )^{\lceil \log |X| \rceil }\\
    & \le \tfrac{10}{r^4-1}\cdot r^{4j}\cdot r^{9}\cdot  \left (\tfrac{r+2}{r}\right )^{\lceil \log |X| \rceil }\\
    & \le \tfrac{10}{r-1} \cdot d^2 \cdot r^{9}\cdot \left (\tfrac{r+2}{r}\right )^{\lceil \log |X| \rceil } \\
    & = \tfrac{1}{r-1}\cdot T_t(|X|,|Y|,d).
\end{align*}
\end{itemize}
Finally, summing up the three parts, the total number of q-tokens burnt by \cref{alg:recur} is at most 
\[ T_q(|X|,|Y|,d) \cdot \left( \tfrac{r}{r+2} + \tfrac{0.1}{r-1} + \tfrac{1}{r-1}\right) \le T_q(|X|,|Y|,d),\]
and similarly the number of burnt t-tokens is at most
\[ T_t(|X|,|Y|,d) \cdot \left( \tfrac{r}{r+2} + \tfrac{0.1}{r-1} + \tfrac{1}{r-1}\right) \le T_t(|X|,|Y|,d),\]
where we used $r \ge 5 > \tfrac{14}{3}$.
\end{proof}

Next, we describe a (classical) scheduler that is used in the implementation of \cref{alg:recur} to keep track of the quotas for the number of burnt tokens.
Recall that the recursion of \cref{alg:recur} has $\lceil \log n\rceil$ levels.  When we are at a certain node $v$ of the recursion tree, each ancestor node $p$ holds two counters $q_p,t_p$ that keep track of the remaining tokens that the program corresponding to $p$ is allowed to burn.
When the current node $v$ attempts to burn $T$ t-tokens and $Q$ q-tokens (\cref{line:burn}), we check the quotas of all ancestors $p$ of $v$.
If $T\le t_p$ and $Q\le q_p$ holds for all ancestors $p$ of $v$, then we can safely burn the tokens and decrease the quotas, setting $t_p \gets t_p-T$ and $q_p \gets q_p-Q$ for all ancestors $p$. Otherwise, we choose the nearest ancestor $p$ with $t_p<T$ or $q_p<Q$ and pass control from $v$ to the parent of $p$. We also save a back pointer to $v$ so that we know we should resume at~$v$ if the program corresponding to $p$ is resumed with increased quotas.
This scheduler incurs an $\Oh(\log n)$-time additive overhead at \cref{line:running,line:burn}.

It remains to analyze the complexity of \cref{alg:recur}.
By \cref{lem:findanchor-algo,lem:keepordiscard-algo,thm:grover}, the oracle calls in \cref{line:checkanchor,line:findanchor,line:eq1,line:eq2}
make $\Ohtilde(\sqrt{r^{2i+2}(|X|+|Y|)})$ quantum queries and take $\Ohtilde(\sqrt{r^{2i+2}(|X|+|Y|)}+(r^{2i+2})^2)$ quantum time (including the amplification of success probability).  We charge these queries to q-tokens burnt at \cref{line:burn},
whereas the running time is charged to t-tokens burnt at \cref{line:burn}.
\Cref{line:onea,line:oneb} make $\Oh(1+|Y|)$ quantum queries and run in $\Oh(1+|Y|)$ quantum time,
which we charge to the q-tokens and t-tokens burnt at \cref{line:burn0}.
No other subroutines make any quantum queries.
The classical control instructions (including the scheduler implementation) take $\Oh(\log n)=\Ohtilde(1)$ time per iteration of the main \KwSty{for} loop, which we can also charge to the number of tokens burnt. 
Overall, \cref{lem:edit-query-time} implies that the total query complexity is $\Ohtilde(T_q(|X|,|Y|,d))=\Ohtilde(\sqrt{dn})$,
whereas the time complexity is $\Ohtilde(T_q(|X|,|Y|,d)+T_t(|X|,|Y|,d))=\Ohtilde(\sqrt{dn}+d^2)$.

It remains to explain how to modify \cref{alg:recur} so that a witness sequence of edits is reported along with every distance.
Each recursive call remembers locations of currently processed strings $X$ and $Y$ within the global inputs so that the edits reported use global position numbering.
Technically, each call to $\solve(X,Y)$, along with the answer $d$, reports a linked list of $d$ edits that allow transforming $X$ into $Y$.
If the algorithm terminates at \cref{line:onea}, we report insertions of $Y[1],\ldots,Y[y-1],Y[y+1],\ldots,Y[|Y|]$.
If the algorithm terminates at \cref{line:oneb} with $|Y|=0$, we report a deletion of $X[1]$.
If the algorithm terminates at \cref{line:oneb} with $|Y|>0$, we report a substitution of $X[1]$ for $Y[1]$
and insertions of $Y[2],\ldots,Y[|Y|]$.
The program $A_i$ defined in \cref{line:program} not only adds the distances but also concatenates the lists reported by the recursive calls (skipped calls correspond to empty lists).
If the algorithm terminates at \cref{line:return}, we pass the list returned by $A$ along with the answer $d$.
In all cases, the extra time needed to handle edits is proportional to the time complexity of control instructions.
This completes the proof of \cref{thm:edit-main}.

\subsection{Finding and Testing Anchors}
\label{sec:anchor}
Similar to \cite{KPS21}, we use connections between LZ77 factorization and edit distance alignments.

\newcommand{\hi}{\hat{\imath}}
\newcommand{\hj}{\hat{\jmath}}
\begin{lemma}[Disjoint alignments imply compression]\label{lem:disjointalignmentLZ}
	Consider strings $X,Y\in \Sigma^*$ and alignments $\caA : X(i\dd j]\onto Y$ and $\caA':X(i'\dd j']\onto Y$.  If $\caA \cap \caA' =  \emptyset$, then
    \[ |\LZ(X(\hi \dd \hj])| \le |i-i'| + 2\ed_{\caA}(X(i\dd j],Y) + 2\ed_{\caA'}(X(i'\dd j'],Y) + 1\]
    holds for every fragment $X(\hi \dd \hj]$ of $X$ with $\min\{i,i'\}\le \hi \le \hj \le \max\{j,j'\}$.
\end{lemma}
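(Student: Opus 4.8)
The plan is to build an explicit LZ77-like factorization of $X(\hi\dd\hj]$ with few phrases and invoke the fact that the greedy LZ77 parse is the shortest such factorization. The key object is the ``composed'' alignment $\caB := \caA'^{-1}\circ\caA\colon X(i\dd j]\onto X(i'\dd j']$, the product of $\caA\colon X(i\dd j]\onto Y$ with $\caA'^{-1}\colon Y\onto X(i'\dd j']$; it has cost at most $\ed_{\caA}(X(i\dd j],Y)+\ed_{\caA'}(X(i'\dd j'],Y)=:e$. Intuitively $\caB$ witnesses that $X(i\dd j]$ and $X(i'\dd j']$ are ``nearly equal'' as fragments of $X$; the displacement between the two intervals is $|i-i'|$. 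Because $\caA\cap\caA'=\emptyset$, every diagonal visited by $\caA$ is disjoint from every diagonal visited by $\caA'$, so for each index $t\in(i\dd j]$ that $\caB$ aligns (i.e., that is neither deleted by $\caA$ nor ``reached via an insertion'' on the $\caA'$ side) with an index $\tilde t\in(i'\dd j']$, we have $t\neq\tilde t$, and in fact all such matched pairs lie strictly on one side: WLOG $i'<i$ forces $\tilde t<t$ for \emph{every} matched $t$ (this is where disjointness and the WLOG on $i$ vs.\ $i'$ are used — the alignment $\caB$, being a lattice path, cannot cross the main diagonal if it starts and ends off it on the same side). So every character of $X(i\dd j]$ that $\caB$ matches has a copy of itself strictly earlier in $X$.

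Next I would turn this into a factorization. Walk through $X(\hi\dd\hj]$ left to right. The characters matched by $\caB$ (restricted to the sub-interval $(\hi\dd\hj]\cap(i\dd j]$, extended suitably to cover the whole of $(\hi\dd\hj]$) come in maximal runs; a maximal run of matched characters $X(a\dd b]$ is mapped by $\caB$ to an earlier fragment $X(a-\Delta\dd b-\Delta]$ for a fixed shift $\Delta>0$, hence $X(a\dd b]$ is a single previous factor. Between/around these runs sit the ``bad'' characters: those deleted/substituted/insert-induced by $\caB$ (there are at most $2e$ of them, counting both the $\le e$ edits of $\caB$ and a constant-factor slack for the two alignment endpoints), plus the at most $|i-i'|$ extra characters at the ends coming from the mismatch between the intervals $(i\dd j]$ and $(i'\dd j']$ and from the possibly smaller window $(\hi\dd\hj]$. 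Each bad character is its own length-$1$ phrase. Counting: the number of phrases is at most (number of maximal good runs) $+$ (number of bad characters) $\le (2e+|i-i'|+1) + (2e+|i-i'|)$ up to the precise bookkeeping, and after being careful one lands exactly on the stated bound $|i-i'|+2\ed_{\caA}(X(i\dd j],Y)+2\ed_{\caA'}(X(i'\dd j'],Y)+1$. Since $\LZ(X(\hi\dd\hj])$ is the shortest LZ77-like factorization, its size is at most this.

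The main obstacle I anticipate is the careful bookkeeping of the endpoints and of characters of $X(\hi\dd\hj]$ that lie outside $(i\dd j]$ or that $\caB$ does not cleanly ``align'' (the asymmetry between deletions on the $\caA$ side and insertions reintroduced on the $\caA'$ side of the product). One has to check that: (i) a maximal run of $\caB$-matched characters really does form a previous factor, including the boundary case where the run starts at the very left of $(\hi\dd\hj]$ and its shifted copy might poke left of position $1$ — here is exactly where the additive $|i-i'|$ and the $+1$ absorb the damage, so the shifted copy can be truncated to stay inside $X$; (ii) the total count of non-run characters is at most $|i-i'|+2\ed_{\caA}+2\ed_{\caA'}$, using $\ed_{\caB}\le\ed_{\caA}+\ed_{\caB'}$ with a factor $2$ to convert ``edits of $\caB$'' (which may delete characters of $X(i\dd j]$, contributing nothing to $X(\hi\dd\hj]$, or insert characters, which correspond to skipped positions of $X(i'\dd j']$ that are irrelevant) into ``breakpoints in $X(\hi\dd\hj]$''. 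Once this case analysis is pinned down, the inequality follows immediately from minimality of the greedy parse.
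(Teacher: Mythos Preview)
Your plan is the paper's: compose $\caB=\caA'^{-1}\circ\caA$, use $\caA\cap\caA'=\emptyset$ to conclude that every $(x,x')\in\caB$ has $x\ne x'$ and hence (since $\caB$ is a lattice path) all points lie strictly on one side of the diagonal, then convert $\caB$'s matched runs into previous-factor phrases. The gap is in your boundary handling and your counting.

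First, the ``poke left'' concern is misidentified. The shifted copy can never leave $X$ (you always have $\tilde t>i'\ge 0$); the real danger is that it leaves $X(\hi\dd\hj]$. Since $|\LZ(X(\hi\dd\hj])|$ treats that fragment as a standalone string, a phrase is a \emph{previous factor} only if its earlier occurrence starts at a position $>\hi$. Under your WLOG $i'<i$, the leftmost matched run $X(a\dd b]$ maps to $X(\tilde a\dd\tilde b]$ with $\tilde a$ possibly as small as $i'<\hi$, so ``$X(a\dd b]$ is a single previous factor'' is not automatic. Your sentence ``the shifted copy can be truncated to stay inside $X$'' suggests you have the wrong reference frame. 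Second, your rough count $(2e+|i-i'|+1)+(2e+|i-i'|)$ is $4e+2|i-i'|+1$, double the target; ``after being careful one lands exactly on the stated bound'' is not an argument.

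The paper avoids both problems by not factoring $X(\hi\dd\hj]$ directly. With the opposite WLOG $i<i'$ (so $j<j'$), it picks $\hi'$ with $(\hi,\hi')\in\caB$ and decomposes the \emph{target} side $X(\hi'\dd j']$ into at most $2\ed_\caB(X(\hi\dd j],X(\hi'\dd j'])+1$ phrases; each matched phrase $X(x'\dd x'+s]$ has its source $X(x\dd x+s]$ with $x\ge\hi$, hence is a valid previous factor in $X(\hi\dd j']$. Prepending $\hi'-\hi$ singletons yields an LZ-like factorization of the larger string $X(\hi\dd j']$, and then prefix-monotonicity of $|\LZ(\cdot)|$ passes the bound down to $X(\hi\dd\hj]$. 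The split $(\hi'-\hi)+z\le (i'-i)+\ed_\caB|_{\text{left}}+2\ed_\caB|_{\text{right}}+1\le |i-i'|+2\ed_\caB+1$ is exactly what makes the constants come out right: edits to the left of $(\hi,\hi')$ are counted once (inside the $\hi'-\hi$ singletons), edits to the right are counted twice (inside the $2\ed_\caB+1$ phrase bound).
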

\begin{proof}
Let $\caB = (\caA')^{-1}\circ \caA: X(i\dd j]\onto X(i'\dd j']$ be an alignment obtained as a product of $(\caA')^{-1}$ and $\caA$.
Note that $\ed_\caB(X(i\dd j], X(i'\dd j']) \le \ed_{\caA}(X(i\dd j],Y) + \ed_{\caA'}(X(i'\dd j'],Y)$ and, for every $(x,x')\in \caB$, there is $y\in [0\dd |Y|]$ such that $(x,y) \in \caA$ and $(x',y) \in \caA'$. 
    Since $\caA,\caA'$ are disjoint, we must have $(x,y)\notin \caA'$, and hence $x\neq x'$ for all $(x,x')\in \caB$.
    By symmetry, we assume without loss of generality that $i < i'$; then, $x<x'$ holds for all $(x,x')\in \caB$ and, in particular, $j < j'$.
    We consider two cases:
    \begin{itemize}
        \item If $\hi \ge j$, then $|\LZ(X(\hi \dd \hj])|\le \hj-\hi \le j'-j \le i'-i+\ed_\caB(X(i\dd j], X(i'\dd j'])  \le (i'-i)+ \ed_{\caA}(X(i\dd j],Y) + \ed_{\caA'}(X(i'\dd j'],Y)$.
        \item Otherwise, there a position $\hi'\in [i'\dd j']$ such that $(\hi,\hi')\in \caB$.
    The alignment $\caB$ induces a decomposition of $X(\hi'\dd j']=f'_1\cdots f'_z$ into $z\le 2\cdot \ed_{\caB}(X(\hi\dd j], X(\hi'\dd j']) + 1$ phrases, each of which is either a single character inserted or substituted under $\caB$ or a fragment $X(x'\dd x'+s]$ such that $X(x\dd x+s] \simeq_{\caB} X(x'\dd x'+s]$ for some $(x,x')\in \caB$. 
    Since $x<x'$ for all $(x,x')\in \caB$, this implies an LZ-like factorization $X[\hi+1]\cdots X[\hi']\cdot f'_1\cdots f'_z$ of $X(\hi\dd j']$.
    Hence, $|\LZ(X(\hi \dd \hj])|\le |\LZ(X(\hi \dd j'] )| \le (\hi'-\hi)+z \le (i'-i)+\ed_{\caB}(X(i\dd \hi], X(i'\dd \hi']) + 2\cdot \ed_{\caB}(X(\hi\dd j], X(\hi'\dd j']) + 1 \le (i'-i)+2\ed_\caB(X(i\dd j], X(i'\dd j'])+1 \le (i'-i)+ 2 \ed_{\caA}(X(i\dd j],Y) + 2\ed_{\caA'}(X(i'\dd j'],Y) +1$. 
    \qedhere
    \end{itemize}
\end{proof}

\begin{lemma}\label{lem:intersect}
    Consider strings $X,Y\in \Sigma^{*}$, an integer $k\ge 0$, and a pair $(x,y)\in [0\dd |X|]\times [0\dd |Y|]$.
    If $\ed(X,Y)\le k$, then $(x,y)$ is an edit anchor of $X$ and $Y$ if and only if it is an edit anchor of fragments $X'=X(i\dd j]$ and $Y'=Y(i\dd j+|Y|-|X|]$ defined in terms of the minimum $i\in [0\dd x]$ such that $|\LZ(\rev{X(i\dd x]})|\le 6k+2$ and the maximum $j\in [x\dd |X|]$ such that $| \LZ(X(x\dd j])|\le 6k+2$.
\end{lemma}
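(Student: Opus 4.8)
The plan is to prove the two implications of the equivalence separately, using Lemma~\ref{lem:disjointalignmentLZ} as the main tool. Throughout we may assume $\ed(X,Y)\le k$; write $\delta := |Y|-|X|$ so that $Y' = Y(i\dd j+\delta]$, and note that $\ed(X',Y')\le \ed(X,Y)\le k$ since restricting any optimal alignment of $X$ onto $Y$ to the window $X(i\dd j]$ yields an alignment onto a fragment of $Y$ whose endpoints differ from $i,j+\delta$ by at most $k$ in each coordinate—actually we should be slightly more careful and argue that the induced alignment maps $X(i\dd j]$ onto $Y(i\dd j+\delta]$ up to boundary shifts of size $\le k$; this is standard and I would state it as a preliminary observation.

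For the ``only if'' direction, assume $(x,y)$ is an edit anchor of $X,Y$, witnessed by an optimal alignment $\caA\colon X\onto Y$ passing through $(x,y)$. Restricting $\caA$ to the left window $X(i\dd x]$ and to the right window $X(x\dd j]$ and concatenating gives an alignment of $X'$ onto some $Y(i''\dd j''+\delta]$ of cost $\le k$; I would then argue that $(x,y)$ lies on it and that this alignment can be massaged into an optimal one of $X'$ onto $Y'$ through $(x,y)$, using the fact that the cost of $\caA$ restricted to each window is already optimal for that window (by optimality of $\caA$ and a standard cut-and-paste). The cleanest route is probably to show both directions at once by establishing: (i) any optimal $\caA\colon X\onto Y$ restricts to optimal alignments on the two windows and passes through a point $(x_\ell,y_\ell)\preceq(x,y)\preceq(x_r,y_r)$ of any given optimal $\caA'\colon X'\onto Y'$; (ii) symmetrically for $\caA'$. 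Then an anchor on one side transfers to the other by splicing.

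The crux—and the main obstacle—is the intersection claim: \emph{every} optimal alignment $\caA\colon X\onto Y$ must meet \emph{every} optimal alignment $\caA'\colon X'\onto Y'$ at a point $\preceq(x,y)$ and at a point $\succeq(x,y)$. I would prove this by contradiction on, say, the right side (the left side being symmetric via reversal, which is why the left boundary is defined through $\LZ(\rev{X(i\dd x]})$ and Fact~\ref{fct:lz} is implicitly in play). Suppose $\caA$ and $\caA'$ do not intersect at any point with $X$-coordinate in $[x\dd j]$ (or more precisely to the right of $(x,y)$). Restrict $\caA'$ to $X(x\dd j]$, obtaining $\caA'_r\colon X(x\dd j]\onto Y(y\dd j+\delta]$ of cost $\le k$; restrict $\caA$ to the $Y$-interval $(y\dd j+\delta]$, obtaining $\caA_r\colon X(\tilde x\dd \tilde\jmath]\onto Y(y\dd j+\delta]$ with $|\tilde x-x|,|\tilde\jmath-j|\le k$ and cost $\le k$. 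These two alignments onto the common fragment $Y(y\dd j+\delta]$ are disjoint (their union would be a common point, contradicting the assumption—here I need to check that disjointness of the global alignments on the overlap region really does carry over). Apply Lemma~\ref{lem:disjointalignmentLZ} with the fragment $X(x\dd j]$: this gives $|\LZ(X(x\dd j])|\le |\tilde x-x| + 2\ed_{\caA_r}+2\ed_{\caA'_r}+1 \le k + 2k + 2k + 1 = 5k+1 < 6k+2$, contradicting the maximality of $j$ (which forces $|\LZ(X(x\dd j])| > 6k+2$ unless $j=|X|$; and if $j=|X|$ the right window already covers everything to the right, so $\caA$ and $\caA'$ trivially share the endpoint $(|X|,|Y|)$). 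The symmetric argument on the left, applied to $\rev X$ and using $|\LZ(\rev{X(i\dd x]})|\le 6k+2$ together with minimality of $i$, handles the left intersection point. Once both intersection points exist, the splicing argument from the technical overview—replacing the portion of $\caA$ between $(x_\ell,y_\ell)$ and $(x_r,y_r)$ by the corresponding portion of $\caA'$, and conversely—yields both directions of the equivalence; I would double-check that the cost does not increase, which follows because the spliced-in portion is a sub-alignment of an alignment whose restriction to the window is of minimal cost.
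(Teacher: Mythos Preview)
Your approach is the same as the paper's: prove that any optimal $\caA:X\onto Y$ and any optimal $\caA':X'\onto Y'$ (with $(x,y)$ on at least one of them) must intersect on both sides of $(x,y)$, via \cref{lem:disjointalignmentLZ}, and then splice. Two issues need fixing, though.

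First, the intersection claim as you state it (``every optimal $\caA$ meets every optimal $\caA'$'') is stronger than what your argument shows and stronger than what is needed. Your restriction ``$\caA'_r\colon X(x\dd j]\onto Y(y\dd j+\delta]$'' silently assumes $(x,y)\in\caA'$; without that, $\caA'$ restricted to $X(x\dd j]$ lands on $Y(y'\dd j+\delta]$ for some $y'\ne y$, and then your two restricted alignments no longer share a common $Y$-target, so \cref{lem:disjointalignmentLZ} does not apply directly. The paper fixes this by an explicit two-case analysis on whether $(x,y)\in\caA$ or $(x,y)\in\caA'$: in each case it restricts \emph{both} alignments to the same $Y$-interval $Y(i\dd y]$ (or $Y(y\dd j+\delta]$ on the right) and only then invokes the disjoint-alignments lemma. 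You should do the same; the two cases correspond exactly to the two directions of the equivalence.

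Second, your bound $|\tilde x - x|\le k$ is off. You only know $(\tilde x,y)\in\caA$ and $(x,y)\in\caA'$ (or vice versa), so the triangle inequality through $y$ gives $|\tilde x-x|\le |\tilde x-y|+|y-x|\le \ed_{\caA}(X,Y)+\ed_{\caA'}(X',Y')\le 2k$. The resulting bound is $2k+2k+2k+1=6k+1$, not $5k+1$, which is exactly why the threshold in the definition of $i,j$ is $6k+2$ (so that $6k+1$ contradicts maximality of $j$ unless $j=|X|$, and symmetrically for $i$).
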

\begin{proof}
    Consider optimal alignments $\caA:X\onto Y$ and $\caA' : X'\onto Y' $ such that $(x,y)\in \caA\cup \caA'$.
    The monotonicity of edit distance guarantees $\ed_{\caA'}(X',Y')\le \ed_{\caA}(X,Y) \le k$.

    We will prove the existence of $(x_\ell,y_\ell)\in \caA\cap \caA'$  such that $x_\ell\le x$ and $y_\ell \le y$.
    We proceed with a proof by contradiction and bound $|\LZ(\rev{X(i \dd x]})|$ by considering the following two cases:
  \begin{itemize}
      \item $(x,y) \in \caA$. Suppose that $\caA'$ aligns $X(i \dd x']$ with $Y(i \dd y]$, whereas $\caA$ aligns $X(i' \dd x]$ with $Y(i \dd y]$.
      These alignments are disjoint; otherwise, their intersection point $(x_\ell,y_\ell) \in \caA \cap \caA'$ satisfies $(x_\ell,y_\ell) \le (x,y)$.
      By \cref{lem:disjointalignmentLZ} applied to the alignment of $\rev{X(i \dd x']}$
      and $\rev{Y(i \dd y]}$ and the alignment of $\rev{X(i' \dd x]}$ and $\rev{Y(i \dd y]}$, we have
      \[ |\LZ(\rev{X(i \dd x]})| \le |x-x'| + 2\ed_{\caA}(X(i' \dd x], Y(i \dd y])+2\ed_{\caA'}(X(i \dd x'], Y(i \dd y])+1.\]
      \item $(x,y)\in \caA'$.       Suppose that $\caA'$ aligns $X(i \dd x]$ with $Y(i \dd y]$, whereas $\caA$ aligns $X(i'\dd x']$ with $Y(i \dd y]$.
      These alignments are disjoint; otherwise, their intersection point $(x_\ell,y_\ell) \in \caA \cap \caA'$ satisfies $(x_\ell,y_\ell) \le (x,y)$. By \cref{lem:disjointalignmentLZ} applied to the alignment of $\rev{X(i'\dd x']}$ and $\rev{Y(i \dd y]}$ and the alignment of $\rev{X(i \dd x]}$ and $\rev{Y(i \dd y]}$, we have
      \[ |\LZ(\rev{X(i \dd x]})| \le |x-x'| + 2\ed_{\caA}(X(i'\dd x'], Y(i \dd y])+2\ed_{\caA'}(X(i \dd x], Y(i \dd y])+1.\]
  \end{itemize}
  In both cases, we obtained
  \[ |\LZ(\rev{X(i \dd x]})| \le |x-x'| + 2\ed_\caA(X,Y)+2\ed_{\caA'}(X',Y')+1 \le 6k+1,\]
  where the last inequality follows from $|x-x'| \le |x-y| + |y-x'|\le \ed_\caA(X,Y)+\ed_{\caA'}(X',Y')\le 2k$.
  If $i>0$, this implies $|\LZ(\rev{X(i-1\dd x]})| \le  |\LZ(\rev{X(i \dd x]})| + 1\le (6k+1) + 1  = 6k+2$,
  which contradicts the definition of $i$.
  Consequently, we may assume that $i=0$. In that case, however,  $(0,0)\in \caA\cap \caA'$ is an intersection point satisfying $ 0\le x$ and $0\le y$.
  This completes the existence proof of $(x_\ell,y_\ell)\in \caA\cap \caA'$  such that $x_\ell\le x$ and $y_\ell \le y$.
  A symmetric argument yields $(x_r,y_r)\in \caA\cap \caA'$  such that $x_r \ge x$ and $y_r \ge y$.

  If $(x,y)\in \caA'$, then we can replace the part of $\caA$ between $(x_\ell,y_\ell)$ and $ (x_r,y_r)$ by the corresponding part in $\caA'$ and obtain an optimal alignment of $X,Y$ that goes through $(x,y)$. 
  Hence, if $(x,y)$ is an edit anchor of $X'$ and $Y'$, then it is an edit anchor of $X$ and $Y$.
  Symmetrically, if $(x,y)\in \caA$, then we can replace the part of $\caA'$ between $(x_\ell,y_\ell)$ and $ (x_r,y_r)$ by the corresponding part in $\caA$ and obtain an optimal alignment of $X',Y'$ that goes through $(x,y)$. 
  Hence, if $(x,y)$ is an edit anchor of $X$ and $Y$, then it is an edit anchor of $X'$ and $Y'$.
\end{proof}

Now we prove \cref{lem:findanchor-algo,lem:keepordiscard-algo}.
\find*
\begin{proof}
   Define fragments $X'=X(i\dd j]$ and $Y'=Y(i\dd j+|Y|-|X|]$ as in \cref{lem:intersect}. 
   By monotonicity of $|\LZ(\cdot)|$ with respect to prefixes, we can compute positions $i$ and $j$ using binary search, with \cref{thm:lz} employed to implement the $|\LZ(\cdot)|\le 6k+2$ test on substrings of $X$ and $\rev{X}$. 
   Overall, this step requires $\Ohtilde(\sqrt{kn})$ query complexity and time complexity.
   By \cref{fct:lz}, we have 
   \[ |\LZ(X')| \le |\LZ(X(i \dd x])| + |\LZ(X(x \dd j])| \le |\LZ(\rev{X(i \dd x]})|\cdot \Oh(\log n) + |\LZ(X(x \dd j])| \le \Oh(k\log n).\]
   Thus, using \cref{thm:lz}, we can compute $\LZ(X')$ in $\Ohtilde(\sqrt{kn})$ query complexity and time complexity.
      If $\ed(X',Y')\le k$, then \cref{fct:lz} yields
   \[ |\LZ(Y')| \le (|\LZ(X')|+k)\cdot \Oh(\log n)  \le \Oh(k\log^2 n).\]
   Consequently, we can use  \cref{thm:lz} in $\Ohtilde(\sqrt{kn})$ query complexity and time complexity
   to compute $\LZ(Y')$ or report that $|\LZ(Y')|$ exceeds the $\Oh(k\log^2 n)$ threshold derived above.
   In the latter case, we conclude that $\ed(X,Y)\ge \ed(X',Y') > k$, so $(x,0)$ trivially satisfies the definition of a $k$-edit anchor.
   Otherwise, we use \cref{thm:lzed} to check whether $\ed(X',Y')\le k$ and, if so, retrieve an optimal sequence of edits transforming $X'$ into $Y'$. 
   Since we already know the LZ-factorizations of $X'$ and $Y'$, this step takes $\Ohtilde(k^2)$ additional time complexity and zero query complexity.
   If $\ed(X',Y') > k$, then we return $0$ again. 
   Otherwise, our algorithm scans the list of edits transforming $X'$ into $Y'$
   to derive and return $y\in [0\dd |Y|]$ such that $(x,y)$ belongs to the underlying optimal alignment $\caA' : X'\onto Y'$.
   By \cref{lem:intersect}, if $\ed(X,Y)\le k$,  then $(x,y)$ must be an edit anchor of $X$ and $Y$.
\end{proof}

\keepordiscard*
\begin{proof}
The proof is similar to that of \cref{lem:findanchor-algo}.
First, we find the positions $i,j$ defined in \cref{lem:intersect}.
Next, we retrieve $\LZ(X(i\dd x]),\LZ(X(x\dd j])$, and $\LZ(X')=\LZ(X(i\dd j])$,
as well as $\LZ(Y(i\dd y]), \LZ(Y(y\dd j+|Y|-|X|])$, and $\LZ(Y')=\LZ(Y(i\dd j+|Y|-|X|])$.
If $\ed(X',Y')\le k$, then the sizes of all these LZ factorizations are in $\Oh(k\log^2 n)$.
Consequently, we can use \cref{thm:lz} in $\Ohtilde(\sqrt{kn})$ query complexity and time complexity
to either compute all these LZ factorizations or conclude that $\LZ(X',Y')>k$ (in that case, we return \KwSty{false}).
If $\ed(X',Y')\le k$, then $(x,y)$ is an edit anchor for $(X',Y')$ if and only if $\ed(X(i\dd x],Y(i\dd y])+\ed(X(x\dd j],Y(y\dd j+|Y|-|X|])=\ed(X',Y')\le k$, and our goal is to return \KwSty{true} if and only if this condition holds.
Consequently, we apply \cref{thm:lzed} in $\Ohtilde(k^2)$ additional time complexity (and zero query complexity)
to evaluate the three edit distances involved in our test or discover that some of these distances exceed $k$.

It remains to prove the correctness of our algorithm.
Either output is valid if $\ed(X,Y) > k$. Hence, we assume $\ed(X,Y) \le k$ in the following.
In this case, we must have $\ed(X',Y') \le k$ by monotonicity of edit distance. 
Moreover, by \cref{lem:intersect}, $(x,y)$ is an edit anchor of $X,Y$ if and only if it is an edit anchor of $X'$ and $Y'$. Thus, the algorithm correctly decides whether $(x,y)$ is an edit anchor of $X$ and $Y$.
\end{proof}
\section{Quantum Algorithms for Lempel--Ziv Factorization}\label{sec:lz}

\subsection{Algorithms with Near-Optimal Query Complexity}

This section provides two preliminary solutions with optimal and near-optimal query times. The first has optimal $\Oh(\sqrt{zn})$ query complexity but requires exponential time. The second has a near-optimal $\Ohtilde(\sqrt{zn})$ query complexity but requires $\Ohtilde(n)$ time. The second algorithm introduces ideas that will be expanded on in Section \ref{sec:sublinear_time_alg} for our main algorithm
with $\Ohtilde(\sqrt{zn})$ query and time complexity. 

\subsubsection{Achieving Optimal-Query Complexity in Exponential Time}
\label{sec:oracle_id_algorithm}

A naive approach is to first obtain the input string from the oracle (in the worst case using $n$ oracle queries).
Then, any compressed representation can be computed without further input queries. The first approach discussed here shows how to find the input string using fewer queries, specifically $\Oh(\sqrt{zn})$ queries for binary strings. We will prove this query complexity is optimal in Section \ref{sec:lb}. This algorithm is based on a solution for the problem of identifying an oracle (in our case, an input string) in the minimum number of oracle queries by Kothari~\cite{DBLP:conf/stacs/Kothari14}. 
Kothari's solution builds on a previous `halving' algorithm by Littlestone~\cite{DBLP:journals/ml/Littlestone87}. 

We next describe the basic halving algorithm as applied to our problem.
Assuming that $z$ is known, we enumerate all binary strings of length $n$ with at most $z$ LZ77 factors. Call this set~$\mathcal{S}$. Since an encoding with $z$ factors requires at most $2z\log n$ bits, there are at most $\sum_{i=0}^{2z\log n} 2^i =  2^{2z\log n + 1} - 1 = 2n^{2z} - 1$ such strings in $\mathcal{S}$.
We construct a string $M$ of length $n$ from $\mathcal{S}$, where $M[i] = 1$ if at least half of strings in $\mathcal{S}$ are $1$ at the $i^{th}$ position, and $M[i] = 0$ otherwise.
Note that the construction of $M$ requires time exponential in $z$ but does not require any oracle queries. Grover's search is then used to find a mismatch if one exists between $M$ and the oracle string with $\Oh(\sqrt{n})$ queries. If a mismatch occurs at position $i$, we can then eliminate at least half of the potential strings in $\mathcal{S}$. We repeat this process until no mismatches are found, at which point we have completely recovered the oracle (input string). Known algorithms can then obtain all compressed forms of text. 

Naively applying this approach would result in an algorithm with $\Oh(\sqrt{n} \log |\mathcal{S}|) = \Oh(z \sqrt{n}\log n)$ query complexity. 
Kothari's improvements on this basic halving algorithm give us a quantum algorithm that uses $\Oh(\sqrt{n \log |\mathcal{S}| / \log n}) = \Oh(\sqrt{zn})$ input queries.
We can avoid assuming the knowledge of $z$ by progressively trying different powers of $2$ as our guess of $z$, still resulting in $\sum_{i = 0}^{\log z} \Oh(\sqrt{2^in}) = \Oh(\sqrt{zn})$ queries overall. 
As noted above, this approach is not time-efficient.

\subsubsection{Achieving Near-Optimal Query Complexity in Near-Linear Time}
\label{sec:linear_time}

An algorithm with a similar query complexity and far improved time complexity is possible by using a more specialized approach.
Specifically, one can obtain the non-overlapping LZ77 factorization. 
For non-overlapping LZ77, every factor, say $X[s_i\dd s_i+\ell_i)$, that is not a new symbol must reference a previous occurrence completely contained in $X[1\dd s_i)$. 
This only increases the size of this factorization by at most a logarithmic factor. 
That is, if  $z_{no}$ is the number of factors for the non-overlapping LZ77 factorization, then $z \leq z_{no}\le \Oh(z\log n)$~\cite{DBLP:journals/csur/Navarro21a}.
This factorization can be converted into other compressed forms in near-linear time, as described in Section \ref{sec:other_encodings}.

We obtain the factorization by processing $X$ from left to right as follows: Suppose inductively that we have determined the factors $F_1, F_2,\ldots, F_{i-1}$, and we want to obtain the $i^{th}$ factor.
Let $s_i$ denote the starting index of the $i^{th}$ factor and $\ell_i$ its length.
Assume that we have the prefixes $X[1\dd k]$, for $k \in [1\dd  s_i)$, sorted in co-lexicographic order. 
To find the next factor $X[s_i\dd  j]$, we apply exponential search\footnote{Recall that exponential search checks ascending powers of $2$ until an interval $[2^{x-1}, 2^x]$ for some $x \geq 1$ containing the solution is found, at which point binary search is applied to the interval.} on $j$. 
To evaluate a given $j$ we use binary search on the sorted set of prefixes. 
To compare a prefix $X[1\dd k]$, we find the rightmost mismatch of the substrings $X[s_i\dd  j]$ and $X[k-(j-s_i)\dd k]$. 
If no rightmost mismatch is found, then $X[s_i\dd  j]$ has occurred previously as a substring, and we continue the exponential search on $j$.
Otherwise, we compare the symbol at the rightmost mismatch to identify which half of the sorted set of prefixes to continue the binary search. If $\ell_i$ is the length of $i^{th}$ factor found, this requires $\Oh(\log^2 n \cdot \sqrt{\ell_i})$ queries and time. 

To proceed to the $(i+1)^{th}$ factor, we now must obtain the co-lexicographically sorted order of the $\ell_i$ new prefixes. 
This can be done using a standard linear-time suffix tree construction algorithm.
Specifically, if we consider the suffix tree of the reversed text $\rev{X}$, we are prepending $\ell_i$ symbols to a suffix of $\rev{X}$. These are accessed from either the oracle directly only in the case the new factor is a new symbol, and otherwise from the previously obtained string. Since we are prepending to $\rev{X}$, a right-to-left suffix construction algorithm such as McCreight's~\cite{DBLP:journals/jacm/McCreight76} can be used. 

The query complexity is
$\sum_{i=1}^{z_{no}} \sqrt{\ell_i} \cdot \log^2 n = \Ohtilde(\sum_{i=1}^{z_{no}} \sqrt{\ell_i}).
$ At the same time, we have $\sum_{i=1}^{z_{no}} \ell_i = n$, so the sum is maximized when each $\ell_i = \frac{n}{z_{no}}$ making $\sum_{i=1}^{z_{no}} \sqrt{\ell_i} \leq \sqrt{z_{no} n}$.
Hence, the query complexity is $\Ohtilde(\sqrt{zn})$. The time complexity is $\Oh(n + \log^3 n \cdot \sqrt{z_{no}n})$, which is $\Ohtilde(n)$. 
We will focus for the rest of this section on developing these ideas and utilizing more complex data structures to obtain a sublinear-time algorithm.

\subsection{Main Algorithm: Optimal Query and Time Complexity}
\label{sec:sublinear_time_alg}

On a high level, the algorithm will proceed very much like the near-linear-time algorithm from Section~\ref{sec:linear_time}.
It proceeds from left to right finding the next factor and utilizes a co-lexicographically sorted set of prefixes of $X$.
After the next factor is found, a set of new prefixes of $X$ is added to this sorted set. However, we face two major obstacles: (i) we cannot afford to explicitly maintain a sorted order of all prefixes needed to check all possible previous substrings efficiently; (ii) if we utilize a factorization other than LZ77, like LZ77-End, where fewer potential positions have to be checked, then the monotonicity of being a next factor is lost, i.e., for LZ77-End, $X[s_i\dd j]$ may have occurred as a substring ending at a previous factor, but $X[s_i\dd  j)$ may not have occurred as a substring ending at a previous factor.

To overcome these problems, we introduce a new factorization scheme that extends the LZ-End factorization scheme discussed in Section \ref{sec:prelim}. It allows for more potential places ending locations for each new factor obtained by the algorithm.

\subsubsection{LZ-End+\texorpdfstring{$\tau$}{tau} Factorization} 
Let $\tau \geq 1$ be an integer parameter. 
The LZ-End+$\tau$ factorization of the string $X$ is constructed from left to right.
Initially, $i \gets 1$. 
For $i \geq 1$, if $X[i]$ does not occur in $X[1\dd i)$, then we make $X[i]$ a new factor and set $i \gets i+1$.
Otherwise, let $j$ be the largest index such that $X[i\dd j]$ has an occurrence ending at either the last position of an earlier factor or at a position $k < i$ such that $k \equiv 1 \pmod{\tau}$.
Let $z_{e+\tau}$ denote the number of factors created by the LZ-End+$\tau$ factorization.

Note that there exist strings where $z_e < z_{e+\tau}$.
The smallest binary string example where this is true is $00010011011$, which has an LZ-End factorization with seven factors $0$, $0$, $0$, $1$, $001$, $1$, $011$ 
and an LZ-End+$\tau$ for $\tau = 2$ with eight factors $0$, $0$, $0$, $1$, $001$, $10$, $1$, $1$. 
Loosely speaking, the LZ77-End+$\tau$ algorithm can be `tricked' into taking a longer factor earlier on; in this case, the factor `$10$' which is possible for LZ77-End+$\tau$ but not LZ77-End, and limits future choices.
Fortunately, the same bounds in terms of $z$ established by Kempa and Saha~\cite{DBLP:conf/soda/KempaS22} for $z_e$ also hold for $z_{e+\tau}$.

\begin{lemma}
\label{lem:LZ-End-tau}
Let $z_{e+\tau}$ (resp., $z$) denote the number of factors in the LZ-End+$\tau$ (resp., LZ77) factorization of a given text $X[1\dd n]$. 
Then, $z_{e+\tau} = \Oh(z\log^2 n)$.
\end{lemma}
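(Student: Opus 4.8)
The plan is to mimic the structure of Kempa and Saha's argument (\cref{lem:z_e_and_z}) bounding $z_e = \Oh(z\log^2\frac{n}{z})$, but to carry along the extra anchor positions $k \equiv 1 \pmod{\tau}$ throughout. First I would recall the high-level shape of the Kempa--Saha proof: one shows that any LZ-End factor can be ``charged'' to at most $\Oh(\log^2 n)$ LZ77 phrase boundaries, so that the total count is $\Oh(z \log^2 n)$. Concretely, the source of an LZ-End factor $X[i\dd j]$ is required to \emph{end} at a previous factor boundary, which makes the set of allowed source endpoints small; the combinatorial core is an argument that a long factor forces the creation of $\Omega(\log\frac{n}{z})$ new boundaries that were not there before, amortized across a logarithmic-depth recursion on string length. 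I would check that the only place the LZ-End structure is used is this ``source must end at a previous factor boundary'' constraint, and observe that enlarging the set of valid source endpoints can only \emph{decrease} the number of factors needed at each greedy step — i.e., LZ-End+$\tau$ is ``at least as powerful'' as LZ-End at extending the current factor, so intuitively $z_{e+\tau}$ should be \emph{smaller}. The subtlety, already flagged in the paper via the example $00010011011$, is that greedy parsing is not monotone under enlarging the endpoint set: a locally longer factor can hurt later, so one cannot simply conclude $z_{e+\tau}\le z_e$ termwise.

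So the second step, and the one I expect to be the main obstacle, is handling this non-monotonicity. The cleanest route is to re-run the Kempa--Saha charging argument with the augmented endpoint set baked in from the start, rather than trying to compare to plain LZ-End. That is: define, for the LZ-End+$\tau$ parse, the analogous potential/charging scheme — each LZ-End+$\tau$ factor is charged either to an LZ77 phrase boundary it ``discovers'' or to one of the periodically-spaced anchors $\{k : k\equiv 1 \pmod \tau\}$ — and show that (a) the periodic anchors contribute only $\Oh(n/\tau)$ factors total, which is dominated by $\Oh(z\log^2 n)$ for the relevant $\tau = \Theta(\sqrt{n/z})$ — wait, actually we need the bound for \emph{all} $\tau\ge 1$, so this term must instead be folded into the charging rather than bounded separately — and (b) the remaining factors satisfy the same $\Oh(z\log^2 n)$ bound as in the LZ-End case because the structural lemma (a factor extending past a certain length forces new LZ77 boundaries) only relies on the source occurrence being an \emph{earlier} occurrence, not on where it ends. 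The key insight to verify is that adding endpoints is ``harmless'' for the upper bound: in the recursive/amortized argument, each step of the greedy LZ-End+$\tau$ parse makes a factor at least as long as the corresponding LZ-End choice would from the same position, so the recursion tree used in the charging has \emph{no more} leaves, hence no more factors attributed to the LZ77-boundary-discovery mechanism.

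Concretely, the steps in order: (1) State the Kempa--Saha structural lemma in a form that says: if $X[i\dd j]$ is a factor with $j - i \ge 2\ell$ and its source occurrence starts before $i$, then $X[i\dd j]$ spans at least one LZ77 phrase of $\LZ(X)$ entirely, or some bounded recursion on $[i\dd j]$ of depth $\Oh(\log n)$ applies. (2) Observe that the LZ-End+$\tau$ greedy rule picks $j$ at least as large as the LZ-End greedy rule from the same $i$ (since the valid-endpoint set only grew), so the length-based recursion terminates no later. (3) Run the charging: each LZ-End+$\tau$ factor is charged either to an LZ77 boundary (with multiplicity $\Oh(\log^2 n)$ overall by the depth-$\Oh(\log n)$ recursion, squared for the two-dimensional charging as in Kempa--Saha) or, in the degenerate short-factor case, to one of its own $\Oh(1)$ positions, and those degenerate factors are few because consecutive single-symbol-or-short factors can be bounded against the $z$ distinct-symbol bound. (4) Conclude $z_{e+\tau} = \Oh(z\log^2 n)$. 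The main obstacle remains making step (2) genuinely imply ``no more leaves'' despite the non-monotonicity example — the resolution is that the example shows $z_{e+\tau}$ can exceed $z_e$ by additive terms, but the \emph{asymptotic} $\Oh(z\log^2 n)$ bound is robust because it comes from the charging argument, which never needs monotonicity, only the one-directional ``source is an earlier occurrence'' property shared by both factorizations.
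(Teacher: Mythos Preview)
Your high-level plan---re-run the Kempa--Saha argument directly on the LZ-End+$\tau$ factorization rather than trying to compare to plain LZ-End---is exactly what the paper does. But your description of what that argument actually looks like, and hence of which property you need to verify, is off in a way that matters.

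The Kempa--Saha proof is not a charging-to-LZ77-boundaries argument with a recursion tree. It designates a factor as \emph{special} if its length is at least half the previous factor's, assigns to each special factor of length $\ell$ a collection of $\ell$ substrings at each dyadic length scale, and then proves two distinctness claims: (i) no substring is assigned to more than two special factors, and (ii) the substrings assigned to a single factor are all distinct. Both (i) and (ii) are proved by contradiction, and the contradiction in each case has the specific form: if the claim failed, then some factor $X[s_{j}\dd s_{j}+\ell_{j})$ could have been extended to $X[s_{j}\dd s_{j}+\ell_{j}+\delta)$ with the longer string still having an earlier occurrence \emph{ending at a previous factor boundary}. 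That contradicts greediness.

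This is why your proposed invariant ``source is an earlier occurrence, not where it ends'' is the wrong one: the argument absolutely depends on where the source ends. What makes the proof transfer to LZ-End+$\tau$ is that the contradiction exhibits a longer factor ending at a previous \emph{factor} boundary of the parse being analyzed; since LZ-End+$\tau$ is greedy and its set of valid endpoints \emph{includes} its own previous factor boundaries (the extra $\tau$-periodic anchors only enlarge this set), the same contradiction fires. No comparison to LZ-End, and no separate accounting for the periodic anchors, is needed---your step (2) and the worry about the $\Oh(n/\tau)$ term are both detours. Once you state the Kempa--Saha structure correctly, the verification is a one-line observation per contradiction.
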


\begin{proof}
We outline Kempa and Saha's proof of the bound for LZ-End and why it continues to hold for LZ-End+$\tau$. We refer the reader to \cite{DBLP:conf/soda/KempaS22} for more details. In the proof, a factor is considered \emph{special} if its length is at least half the length of the previous factor. Every special factor is assigned a set of substrings of $X$. In particular, if one of these special factors is of length $\ell$, it is assigned $\ell$ substrings of length $2^k$ for $k \in [0\dd \lceil \log n \rceil + 4]$. The bound then follows by showing that: (i) each distinct substring of length $2^k$ is assigned to at most two factors, and (ii) all substrings assigned to a factor are distinct.
Both (i) and (ii) use a proof by contradiction and work because a longer factor is possible, i.e., a longer substring occurs ending at a factor end. 

For (i), if a substring is assigned to three or more factors, with $X[s_i\dd s_i+\ell_i)$ being the leftmost and $X[s_j\dd s_j+\ell_j)$ the rightmost, then it is shown that, for some $\delta > 0$, there exists a substring $X[s_{j-1}\dd s_{j-1}+\ell_{j-1} + \delta)$ that also ends at a previous factor, contradicting that $X[s_{j-1} \dd s_{j-1}+\ell_{j-1})$ was chosen as a factor. 
This argument is based on the lengths of the substrings and factor $X[s_j\dd \allowbreak {s_j+\ell_j})$ being special. 
These properties continue to hold for LZ-End+$\tau$. Moreover, because our LZ-End+$\tau$ is also greedy and takes the largest factor possible, it could have used $X[s_{j-1}\dd s_{j-1}+\ell_{j-1} + \delta)$ as a factor instead of $X[s_{j-1}\dd s_{j-1}+\ell_{j-1})$.
Hence, we arrive at the same contradiction. 

For (ii), the contradiction is achieved by showing that, if some substring is assigned to $X[s_j\dd \allowbreak {s_j+\ell_j})$ two or more times, then an instance of $X[s_j\dd s_j+\ell_j + \delta)$ for some $\delta > 0$ exists ending at a previous factor. 
This argument is based on the lengths of the substrings and the repeated substring causing periodicity. It continues to hold for LZ-End+$\tau$. Again, because LZ-End+$\tau$ is also greedy and could use $X[s_j\dd s_j+\ell_j + \delta)$ as a factor instead of $X[s_{j}\dd s_{j}+\ell_{j})$, we arrive at the same contradiction.
\end{proof}

Next, we describe how new LZ77-End+$\tau$ factors of $X$ are obtained by using the concept of the \emph{$\tau$-far property} and a dynamic longest common extension (LCE) data structure. Following this, we describe how the co-lexicographically sorted prefixes required by the algorithm are maintained.

\subsubsection{Maintaining the Colexicographic Ordering of Prefixes}

\begin{figure}
    \centering
    \includegraphics[width=.5\textwidth]{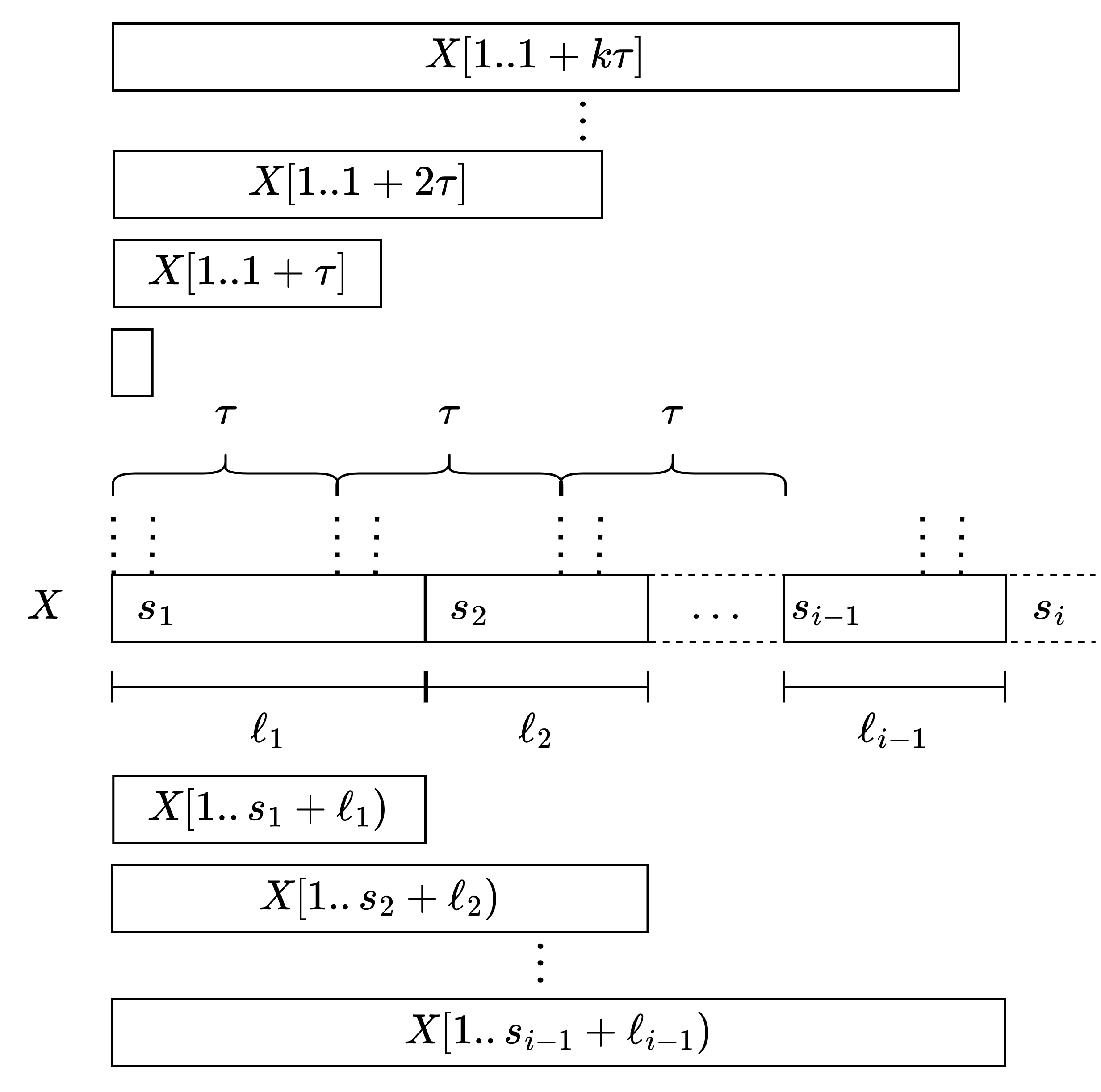}
    \caption{The colexicographic order of the prefixes  $X[1\dd s_1 + \ell_1 )$, $X[1\dd s_2 + \ell_2)$, $\hdots$, $X[1\dd s_{i-1} + \ell_{i-1})$ (shown below $X$) and $X[1\dd 1],X[1\dd \tau+1],\ldots, X[1\dd  k\tau+1]$ where $k$ is the largest natural number such that $1 + k\tau  < s_i$ (shown above $X$) are known prior to iteration $i$.}
    \label{fig:factors}
\end{figure}

The first factor $F_1$ is always $X[1]$. Assume inductively that the factors $F_1,F_2,\ldots,F_{i-1}$ have already been determined. 
Recall that, for factors $F_j$ of the form $(\alpha)$ with $\alpha \in \Sigma$, we also store $(s_j, 1)$, where $s_j$ is the starting position of the $j^{th}$ factor in $X$.
We assume inductively that we have the colexicographically sorted order of prefixes of 
\begin{align*}
\mathcal{P}_{i-1} &\coloneqq \{X[1\dd  s_j + \ell_j - 1] \mid  (s_j, \ell_j) = F_j, 1 \leq j \leq i-1\}\\ 
&\cup \{X[1\dd  j] \mid 1\leq j \leq s_{i-1} + \ell_{i-1}-1,~~ j \equiv 1 \pmod{\tau}\}.
\end{align*}
See Figure \ref{fig:factors} for an illustration of the prefixes contained in $\mathcal{P}_i$. For each of them, we store the ending position of the prefix.

The following section shows how to obtain the factor $F_i$. 
For now, suppose we just determined the $i^{th}$ factor starting position $s_i$.
After the factor length $\ell_i$ is found, we need to determine where to insert the prefixes $X[1\dd  s_i + \ell_i - 1]$ and $X[1\dd j]$ for $j \in [s_i\dd  s_i + \ell_1 - 1]$ such that $j \equiv 1 \pmod{\tau}$, in the colexicographically sorted order of $\mathcal{P}_{i-1}$  to create $\mathcal{P}_i$. To do this, we use the dynamic longest common extension (LCE) data structure of Nishimoto et al.~\cite{DBLP:conf/mfcs/NishimotoIIBT16} (see Lemma \ref{lem:dynamic_lce}).

\begin{lemma}[Dynamic LCE data structure~\cite{DBLP:conf/mfcs/NishimotoIIBT16}]
\label{lem:dynamic_lce}
An LCE query on a text $S[1\dd m]$ consists of two indices $i$ and $j$ and returns the largest $\ell$ such that $S[i\dd i+\ell) = S[j\dd j+\ell)$. 
There exists a data structure that requires $\Oh(m)$ time to construct, supports LCE queries in $\Ohtilde(1)$ time, and supports insertion of either a substring of $S$ or a single character into $S$ at an arbitrary position in $\Ohtilde(1)$ time\footnote{Polylogarithmic factors here are with respect to the final string length after all insertions.}. 

\end{lemma}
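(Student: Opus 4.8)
The plan is to use the \emph{signature encoding} of strings based on locally consistent parsing, in the style of Mehlhorn--Sundar--Uhrig and Alstrup--Brodal--Rauhe; this is precisely the toolkit underlying the data structure of Nishimoto et al. Concretely, I would maintain a hierarchical grammar-like representation of $S$ of height $\Oh(\log m)$: level $0$ is $S$ itself, and each subsequent level is obtained from the previous one by (i) run-length encoding every maximal run of a repeated symbol into a single new symbol and then (ii) partitioning the remaining string into blocks of two or three symbols using a deterministic alphabet reduction (Cole--Vishkin-style coloring) that decides each block boundary by inspecting only an $\Oh(\log^* m)$-length window. Every distinct block (or run) that arises is assigned a unique integer \emph{signature} via a dictionary, so a level-$k$ symbol stands for a substring of $S$ and the top level collapses to a single signature for all of $S$. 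The crucial \emph{local consistency} property is that block boundaries at every level depend only on an $\Oh(\log^* m)$-size neighborhood, so any two occurrences of the same substring are parsed identically except within $\Oh(\log m)$ boundary blocks in total; this is what makes the height logarithmic (the run-length step is essential here, since otherwise $a^m$ would give linear height) and, simultaneously, updates cheap.

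For the auxiliary bookkeeping I would store, for each signature, its expansion length and the pair/run it came from (so one can descend the parse tree in $\Oh(1)$ time per step), hash tables mapping pairs and runs to their signatures (so one can reparse), and a balanced search tree over the top-level signature sequence supporting split, concatenate, and positional access in $\Oh(\log m)$ time. Construction from scratch is a bottom-up pass: level $k+1$ is computed from level $k$ in time linear in the level-$k$ length, and since the removal of runs shrinks each level by a constant factor, the total length over all levels is $\Oh(m)$, giving an $\Oh(m)$-time build.

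To answer an $\LCE$ query at positions $i$ and $j$, I would compare the signature encodings of the suffixes $S[i\dd m]$ and $S[j\dd m]$ level by level from the top down. At each level the two encodings share a (possibly long) common prefix of signatures; using the parse-tree structure and the stored expansion lengths one can locate the first position where they diverge, and the exact length already matched, without scanning the shared part symbol by symbol, then descend one level into the diverging pair of blocks and repeat. By local consistency the ``fringe'' to be examined at each of the $\Oh(\log m)$ levels has size $\Ohtilde(1)$, so the query runs in $\Ohtilde(1)$ total time.

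Insertions are handled by localizing the change. To insert a single character at a position $p$ (or, for a substring of $S$, the analogous operation with two seams), I would split the top-level sequence at $p$, splice in the new material---for a substring, this means copying the relevant sub-derivation, which touches only $\Ohtilde(1)$ signatures per level---and then reparse bottom-up the $\Ohtilde(1)$-size window around each of the $\Oh(1)$ new seams at every level, minting fresh signatures via the hash tables as needed. Local consistency guarantees that the reparsing cannot cascade beyond an $\Oh(\log^* m)$-size window per level, so only $\Ohtilde(1)$ signatures change per level and the total update cost is $\Ohtilde(1)$; the balanced tree and hash tables are updated within the same budget. The main obstacle, and the technical heart of the argument, is exactly this simultaneous guarantee: one must design the level-to-level parsing so that it has $\Oh(\log m)$ height \emph{and} is determined by an $\Oh(\log^* m)$-size window even in the presence of the run-length-encoding step, and then prove that an edit to the underlying string perturbs only $\Ohtilde(1)$ blocks per level. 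Carrying out the deterministic alphabet reduction carefully and showing that runs do not interact badly with it---so that block boundaries really do stabilize away from the seams---is the part that requires the most work.
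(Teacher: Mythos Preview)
The paper does not prove this lemma; it is quoted as a black box from Nishimoto et al.\ \cite{DBLP:conf/mfcs/NishimotoIIBT16}, so there is no ``paper's own proof'' to compare against. Your sketch accurately summarizes the signature-encoding / locally-consistent-parsing machinery that underlies that work (run-length encoding alternating with deterministic alphabet reduction, $\Oh(\log m)$ height, local consistency bounding the reparsing window around a seam), and is a faithful high-level account of how such a structure achieves $\Ohtilde(1)$ queries and updates.
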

The main idea is to use the above dynamic LCE structure over the reverse of the prefix of $X$ found thus far. 
We initialize the dynamic LCE data structure with the first LZ-End+$\tau$ factor of~$X$, which is a single character. 
For every factor found after that, we prepend the reversed factor to the current reversed prefix and update the data structure, all in $\Ohtilde(1)$ time. In particular, if the $i^{th}$ factor of $X$ found is a new character, we prepend that character to our dynamic LCE structure for $S \coloneqq \rev{X[1\dd s_i)}$.
If the $i^{th}$ factor found is $X[s_i\dd  s_i + \ell_i - 1] = X[x\dd y]$, for $x,y \in [1\dd  s_{i})$, then we prepend the substring $\rev{X[x\dd y]} = S[s_i-y\dd  s_i-x]$
to string representation of our dynamic LCE structure.
Once the reversed $i^{th}$ factor is prepended to the reversed prefix in the dynamic LCE structure, to compare the colexicographic order of the new prefixes in $\mathcal{P}_i$, we find the LCE of the two reversed prefixes being compared and compare the symbol in the position after their furthest match.
Applying this comparison technique and binary search on $\mathcal{P}_{i-1}$, we determine where each prefix in $\mathcal{P}_i \setminus \mathcal{P}_{i-1}$ should be inserted in the sorted order in polylogarithmic time.

\subsubsection{Finding the Next LZ-End+\texorpdfstring{$\tau$}{tau} Factor}
\label{sec:next_factor}

We now show how to obtain the new factor $F_i = (s_i,\ell_i)$.
Firstly, $s_i = s_{i-1} + \ell_{i-1}$. 
We say $X[s_i\dd h]$ is a \emph{potential factor} if either $h=s_i$ and $X[s_i\dd h] \in \Sigma $ is the leftmost occurrence of a symbol in $X$ or $X[s_i\dd h] = X[x\dd y]$, where $y < s_i$ and $y$ is the end of a previous factor or $y \equiv 1 \pmod{\tau}$.
We say the \emph{$\tau$-far property} holds for an index $j \geq s_i$ if there exists $h$ such that $j-\tau \leq h \leq j$ and $X[s_i\dd h]$ is a potential factor.

\begin{figure}
    \centering
    \begin{minipage}{.6\textwidth}
    \includegraphics[width=\textwidth]{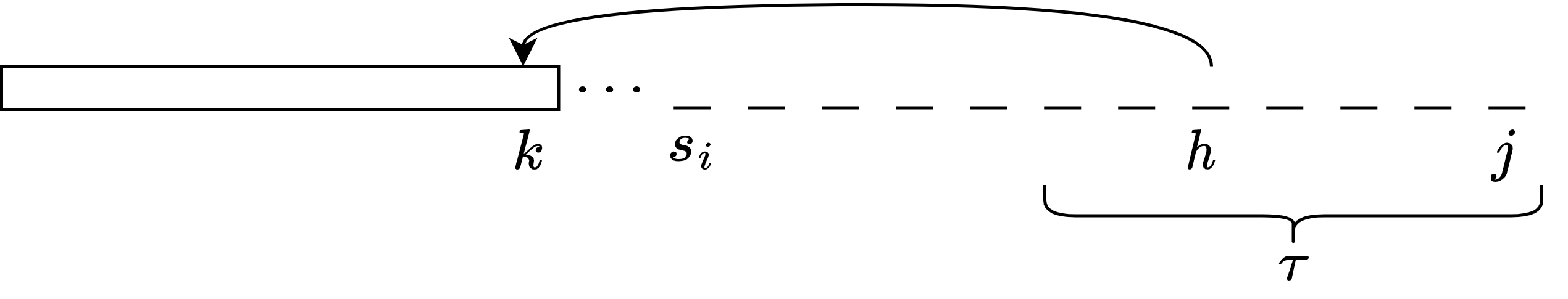}
    \end{minipage}
    
    \vspace{1em}
    \begin{minipage}{.75\textwidth}
    \includegraphics[width=\textwidth]{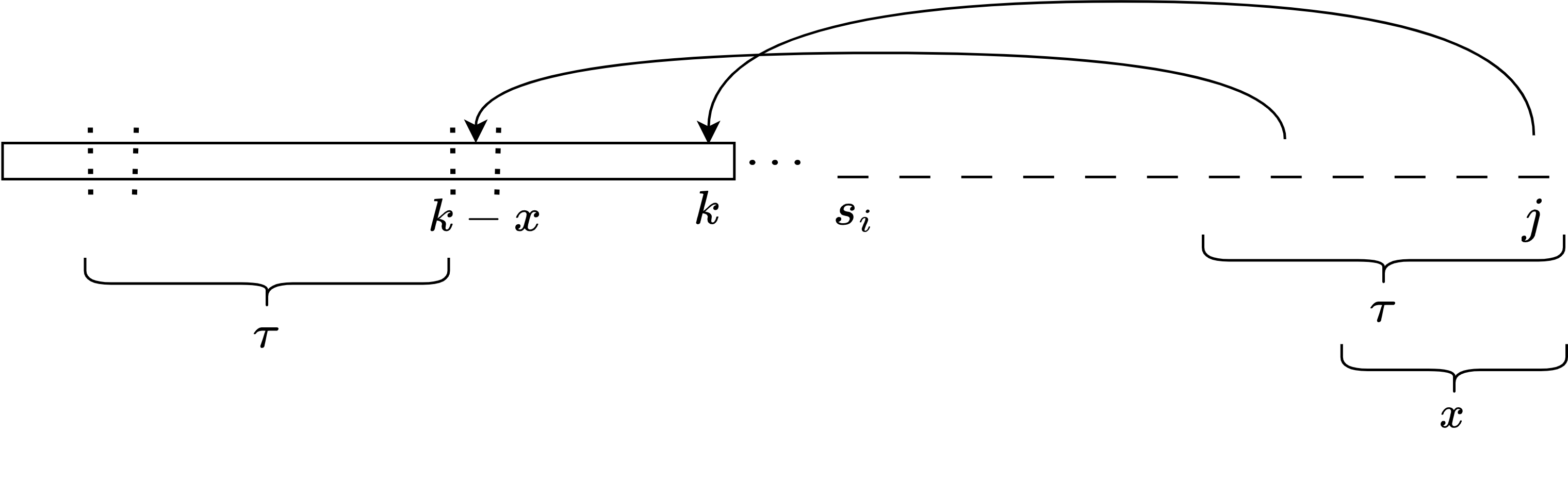}
    \end{minipage}
    
    \caption{The two cases given the proof of Lemma \ref{lem:tau_far_property}. 
    On the top is Case 1, where $X[s_i\dd  j]$ is not a potential factor.
    On the bottom is Case 2, where $X[s_i\dd  j]$ is a potential factor. Here, we are implying $k-x \equiv 1 \pmod{\tau}$.}
    \label{fig:monotonicity_proof}
\end{figure}

\begin{lemma}[Monotonicity of $\tau$-far property]
\label{lem:tau_far_property}
When finding a new factor starting at position $s_i$, if the $\tau$-far property holds for $j > s_i$, then it holds for $j-1$.
\end{lemma}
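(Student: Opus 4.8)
The plan is to split according to whether the fragment $X[s_i\dd j]$ is itself a potential factor — exactly the two cases drawn in \cref{fig:monotonicity_proof}. Fix an index $h \in [\max(s_i, j-\tau)\dd j]$ witnessing the $\tau$-far property at $j$, so that $X[s_i\dd h]$ is a potential factor. If $X[s_i\dd j]$ is \emph{not} a potential factor, then $h \ne j$, hence $h \le j-1$; since also $h \ge j-\tau > (j-1)-\tau$ and $h \ge s_i$, the same $h$ lies in $[\max(s_i,(j-1)-\tau)\dd j-1]$ and already witnesses the $\tau$-far property at $j-1$. So the real work is the complementary case.

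Suppose now $X[s_i\dd j]$ \emph{is} a potential factor. Since $j > s_i$, this fragment has length at least two, so it cannot be a leftmost single-symbol occurrence; hence $X[s_i\dd j] = X[x\dd y]$ for an occurrence ending at some $y < s_i$ that is a previous factor boundary or satisfies $y \equiv 1 \pmod \tau$. The goal is to exhibit a shorter prefix $X[s_i\dd h']$ that is still a potential factor with $h'$ within $\tau$ of $j-1$. When the factor is long, i.e., $j > s_i + \tau$, I take the unique $y^\ast \in \{y-\tau, \dots, y-1\}$ with $y^\ast \equiv 1 \pmod \tau$ (these positions are positive since $y = x + (j-s_i) \ge \tau + 2$ here), put $\delta := y - y^\ast \in [1\dd\tau]$ and $h' := j - \delta$; then $X[s_i\dd h'] = X[x\dd y^\ast]$ has an occurrence ending at $y^\ast < s_i$ with $y^\ast \equiv 1 \pmod \tau$, so it is a potential factor, and the bounds $1 \le \delta \le \tau$ together with $j > s_i + \tau$ place $h'$ inside $[\max(s_i,(j-1)-\tau)\dd j-1]$.

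When instead the factor is short, i.e., $j \le s_i + \tau$, I fall back to $h' := s_i$ and use that $X[s_i\dd s_i] = X[s_i]$ is \emph{always} a potential factor: either $X[s_i]$ is the first occurrence of its symbol, or it occurred earlier, in which case — by the LZ-End+$\tau$ rule that first occurrences are emitted as singleton factors — its earliest occurrence ends at a factor boundary $p < s_i$, so $X[p\dd p]$ witnesses it. Since $j > s_i$ gives $s_i \le j-1$, and $(j-1)-\tau \le s_i-1 < s_i$, the index $s_i$ lies in $[\max(s_i,(j-1)-\tau)\dd j-1]$, completing the case.

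The step I expect to be the crux — and the only point where the specific definition of LZ-End+$\tau$ matters rather than pure interval arithmetic — is the short-factor regime: one must rule out that a potential factor starting at $s_i$ could be ``stranded'' above a gap containing no shorter potential factor, and the remedy is the observation that the single character $X[s_i]$ is itself always a potential factor, giving a guaranteed fallback endpoint. Everything else is routine verification of the constraints $\max(s_i,(j-1)-\tau) \le h' \le j-1$ in each branch.
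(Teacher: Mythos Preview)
Your proof is correct and follows essentially the same approach as the paper: the same case split on whether $X[s_i\dd j]$ is itself a potential factor, and within Case~2 the same sub-split into a ``long'' regime (where you slide the earlier occurrence leftward to the nearest position congruent to $1 \pmod \tau$) and a ``short'' regime (where you fall back to the single character $X[s_i]$, which is always a potential factor). The only cosmetic difference is that you draw the long/short boundary at $j-s_i>\tau$ whereas the paper uses $j-s_i+1>\tau$; both thresholds work.
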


\begin{proof}
There are two cases; see Figure \ref{fig:monotonicity_proof}.
Case 1: If $X[s_i\dd  j]$ is not a factor, since the $\tau$-far property holds for $j$, there exists an $h\in [j-\tau\dd  j)$ and $k < s_i$ such that $X[s_i\dd h] = X[k-(h-s_i)\dd  k]$ is a potential factor. 
Then, this $h$ demonstrates that the $\tau$-far property holds for $j-1$. 
Case 2: Suppose instead that $X[s_i\dd  j]$ is a potential factor and matches some $X[k-(j-s_i)\dd k]$, where $k$ is the last position in a previous factor or $k < s_i$ and $k \equiv 1 \pmod{\tau}$. 
If $j-s_i+1 > \tau$, then there exists some $x \in [1\dd  \tau]$ such that $k-(j-s_i) \leq k-x < k$ and $k-x \equiv 1 \pmod{\tau}$; hence, $X[s_i\dd  j-x] = X[k-(j-s_i)\dd  k-x]$, making $X[s_i\dd  j-x]$ a potential factor.
Since $j-1 - \tau \leq j-x \leq j-1$, the $\tau$-far property holds for $j-1$. If instead $j-s_i+1 \leq \tau$, then $j-\tau \leq s_i \leq j$ and $X[s_i\dd s_i]$ is always a potential factor since either it is the first occurrence of a symbol or we can refer to the factor created by the first occurrence of $X[s_i]$. This proves that the property still holds for $j-1$.
\end{proof}

By Lemma \ref{lem:tau_far_property}, monotonicity holds for the $\tau$-far property when trying to find the next factor starting at position $s_i$.
Thus, to find the largest $j$ such that the $\tau$-far property holds, we can now use exponential search. 
At its core, we need to determine whether the $\tau$-far property holds for a given $j > s_i$. Once this largest $j$ is determined, the largest $h \in [\max(s_i, j-\tau) \dd j]$ such that $X[s_i, h]$ is a potential factor must be determined as well.

We show a progression of algorithms to accomplish the above task. 
Firstly, we make some straightforward, yet crucial, observations.
Let $S$ be any string. 
Since $\mathcal{P}_{i-1}$ is colexicographically sorted, all prefixes that have the same string (say $S$) as a suffix can be represented as a range of indices. 
 This range is empty when $S$ is not a suffix of any prefix in $\mathcal{P}_{i-1}$.
Moreover, this range can always be computed in $\Ohtilde(|S|)$ time using binary search.  However, if $S$ has an occurrence within the  $X[1\dd s_{i})$ (i.e., the prefix seen thus far) and is specified by the start and end position of that occurrence, we can use LCE queries and improve the time for finding the range to $\Ohtilde(1)$.

\paragraph{Next factor in $\Ohtilde(\tau + \ell_i)$ time:}
\label{sec:linear_sol_LCE}
For $h \in [\max(s_i, j-\tau)\dd  j]$, let $k_h \in [0\dd  h-s_i+1]$ be the largest value such that $X[h-k_h+1\dd  h]$ is a suffix of a prefix in $\mathcal{P}_{i-1}$. 
We initialize $h = j$.
Since the prefixes in $\mathcal{P}_{i-1}$ are co-lexicographically sorted, we can find  $k_h$ in $\Ohtilde(k_h)$ time by using binary search on $\mathcal{P}_{i-1}$.
To do so, symbols are prepended one by one
and binary search is used to check if the corresponding sorted index range of $\mathcal{P}_{i-1}$ is non-empty.

Next, we compute $k_h$ for $h \in [\max(s_i, j-\tau)\dd j)$ in the descending order $h$. 
We keep track of $h' := \argmin_{y \in [h+1 \dd  j]} (y-k_y)$.
If $h'-k_{h'} +1 \leq h$, then $X[h'-k_{h'} + 1 \dd h]$ has an occurrence in $X[1\dd s_i)$, and we can now use LCE queries to determine the range of $X[h'-k_{h'} + 1 \dd h]$ in $\mathcal{P}_{i-1}$. 
If this range is empty, we conclude that $k_h < k_{h'}- (h'-h)$ and LCE can be used to find $k_h$ in $\Ohtilde(1)$ time.
Otherwise, we proceed by prepending symbols one by one until $k_h$ is found.

The time per $h \in [\max(s_i, j-\tau) \dd  j]$ is $\Ohtilde(1)$ for LCE queries, in addition to $\Ohtilde(x_h)$ where $x_h$ is the number of symbols we prepended for $h$. Since we always use the smallest $h'-k_{h'}$ value seen thus far, $\sum_{h = j-\tau}^j x_h \leq j-s_i = \Oh(\ell_i)$.
This makes it so checking if the $\tau$-far property holds for $j$ takes $\Ohtilde(\ell_i)$ time. The algorithm also identifies the rightmost $h \in [\max(s_i, j-\tau) \dd j]$ such that $X[s_i\dd  h]$ is a potential factor (if one exists). This only provides at best a near-linear time algorithm.

\paragraph{Next factor in $\Ohtilde(\sqrt{\tau\ell_i})$ time:}

Instead of prepending characters individually and using binary search after exhausting the reach of the LCE queries, we can instead find the rightmost mismatch and then use binary search on $\mathcal{P}_{i-1}$. 
Specifically, suppose that, for a given $h \in [\max(s_i,j-\tau)\dd j]$, we apply the LCE query and identify a non-empty range of prefixes in $\mathcal{P}_{i-1}$ with $X[w\dd h]$ as a suffix.
On this set of prefixes, we continue the search from $w-1$ downward using exponential search and identifying whether a mismatch occurs with the right-most mismatch algorithm. 

For $h \in [j-\tau\dd  j]$, let $x_h$ now be the number of characters searched using exponential search and the right-most mismatch algorithm. As before $\sum_{h=j-\tau}^j x_h = \Ohtilde(j-s_i)$. The total time required for this is logarithmic factors from $\sum_{h=j-\tau}^j \sqrt{x_j} = \Ohtilde(\sqrt{\tau \ell_i})$. This will give us a sub-linear time algorithm if we choose $\tau$ appropriately; however, it will not be sufficient to obtain our goal.

\paragraph{Next factor in $\Ohtilde(\tau + \sqrt{\ell_i})$ time:}
Here we do not apply the rightmost-mismatch search for every $h \in [\max(s_i,j-\tau)\dd j]$.
Instead, for each $h$, we identify a set of prefixes in $\mathcal{P}_{i-1}$ such that $X[s_i\dd  h]$ shares a suffix of length at least $d_h = h-\max(s_i, j-\tau)+1$. This set is represented by the range of indices, $[s_h, e_h]$, in the sorted $\mathcal{P}_{i-1}$ corresponding to prefixes sharing this suffix of length $d_h = h-\max(s_i, j-\tau)+1$. By using the same LCE technique and prepending and stopping at index $\max(s_i, j-\tau)$, this can be accomplished in $\Ohtilde(\tau)$ time. After this, we have a set of ranges in $\mathcal{P}_{i-1}$. Note that for a given $h$, 
if we delete the last $d_h$ characters in each
prefix represented in $[s_h\dd e_h]$, they remain co-lexicographically sorted. 
We want to search for $X[s_i\dd j-\tau)$ as a suffix on these ranges, each with their appropriate suffix removed. 
Since the rightmost-mismatch algorithm is costly, we can first merge these ranges (each with their appropriate suffix removed), and then use binary search.
However, merging these sorted ranges would be too costly.
Instead, we can take advantage of the following lemma to avoid this cost.

\begin{lemma}[\cite{shiwangshiwang}]
\label{lem:k_stat}
Given $\tau$ sorted arrays $A_1,\ldots, A_\tau$ of $n$ elements in total,
the $x^{th}$ largest element in the array formed by merging them can be found using $\Oh(\tau \log \tau \cdot \log (n/\tau))$ comparisons.
\end{lemma}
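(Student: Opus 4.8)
The plan is to reduce to finding the $k$-th \emph{smallest} element of the merged array (the $x$-th largest is the $(n-x+1)$-th smallest) and then run the natural multiway generalization of the classical ``median of two sorted arrays'' procedure. We maintain for each array a contiguous \emph{active block} $A_i[\ell_i\dd r_i]$ together with a counter $c$ recording how many elements have already been discarded as provably smaller than the answer; the invariant is that the sought element is the $(x-c)$-th smallest among the multiset of currently active elements, and we also track $S:=\sum_i(r_i-\ell_i+1)$, the number of active elements. Each round shrinks $S$ by a constant factor using only $\Oh(\tau\log\tau)$ comparisons, so after $\Oh(\log(n/\tau))$ rounds we reach $S=\Oh(\tau)$, at which point a base case sorts all remaining active elements with $\Oh(\tau\log\tau)$ comparisons and returns the appropriate order statistic. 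Correctness is then immediate by induction on the rounds from the invariant, and the round count yields the claimed $\Oh(\tau\log\tau\cdot\log(n/\tau))$ total.

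In one round, for each nonempty block let $s_i:=r_i-\ell_i+1$ and let $m_i$ be its middle element, which has $\lfloor(s_i-1)/2\rfloor$ active elements of $A_i$ strictly below it and $\lceil(s_i-1)/2\rceil$ strictly above. I would sort the at most $\tau$ values $m_i$ (this is where the $\log\tau$ factor comes from, costing $\Oh(\tau\log\tau)$ comparisons) and take the weighted median $p$ of the $m_i$ with respect to the weights $s_i$. A short calculation combining the weighted-median inequalities with the fact that each $m_i$ halves its own block shows that the rank of $p$ within the active multiset lies in $[\tfrac14 S,\tfrac34 S]$; hence, once we learn on which side of $p$ the answer lies, the ``far'' side contains at least $\tfrac14 S$ active elements, each provably past the answer, and we discard them by shrinking the blocks of the arrays on that side to the appropriate half (pure index arithmetic, no comparisons), updating $c$ by the number discarded from below. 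This keeps the invariant and reduces $S$ by a factor of at least $3/4$.

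The delicate point, and the step I expect to be the main obstacle, is deciding which side of $p$ the answer lies on within the $\Oh(\tau\log\tau)$ per-round budget. Computing the exact rank of $p$ by a fresh binary search for $p$ in every block costs $\Oh(\sum_i\log s_i)=\Oh(\tau\log(S/\tau))$ comparisons, which summed over all rounds contributes a $\tau\log^2(n/\tau)$ term instead of $\tau\log\tau\cdot\log(n/\tau)$. The resolution I would import from~\cite{shiwangshiwang} (in the spirit of Frederickson--Johnson's selection in sorted matrices) is to avoid computing ranks from scratch: rather than the single middle element, choose in each block a small set of \emph{splitter} positions whose contributions to the global rank are known a priori up to an additive $\Oh(S/\tau)$, so that the sorted list of $\Oh(\tau)$ splitters already localizes the answer to a window of $\Oh(S/\tau)$ elements using only $\Oh(\tau\log\tau)$ comparisons, and discarding everything outside that window still removes a constant fraction of $S$. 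Verifying that these a-priori rank bounds are valid and that the per-round reduction is indeed by a constant factor is the one part I would need to carry out following the reference; the remaining ingredients — the invariant, the $[\tfrac14 S,\tfrac34 S]$ rank bound for the weighted median, the $\Oh(\log(n/\tau))$ round count, and the straightforward summation of the per-round costs — are routine.
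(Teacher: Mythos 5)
The paper never proves this statement---it is quoted from the external reference, so there is no internal argument to compare against; your proposal has to stand on its own. As written, it does not: you yourself isolate the step on which the stated bound hinges (deciding, within $\Oh(\tau\log\tau)$ comparisons per round, on which side of the pivot the sought element lies), observe correctly that the natural implementation via fresh binary searches costs $\Oh\bigl(\sum_i \log s_i\bigr)=\Oh(\tau\log(S/\tau))$ per round and hence accumulates to a $\tau\log^2(n/\tau)$ term that exceeds $\Oh(\tau\log\tau\cdot\log(n/\tau))$ when $\tau$ is small compared to $n/\tau$, and then defer the repair (splitters with a priori known in-block ranks, Frederickson--Johnson style) to the very reference being proved. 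That deferred step is the entire content of the lemma in the regime where its bound differs from the weaker one your main-line argument gives: one must specify the splitter positions, derive from the sorted splitter list global rank bounds with additive error $\Oh(S/\tau)$ per block, show that the undecided window still lets you discard a constant fraction of the $S$ active elements while knowing \emph{exactly} how many discarded elements lie below the answer (so that the target rank can be updated), and handle ties. None of this is routine bookkeeping, and none of it is carried out, so the proposal has a genuine gap at its crucial point. The parts you do argue---the reduction from $x$-th largest to $(n-x+1)$-th smallest, the invariant, the fact that the weighted median of the block medians has rank in $[\tfrac14 S,\tfrac34 S]$, the $\Oh(\log(n/\tau))$ round count, and the cost summation---are fine.

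It is worth noting that for the way the paper actually uses the lemma (selecting a prefix of a given rank among $\tau$ sorted ranges in $\Ohtilde(\tau)$ time, with polylogarithmic factors absorbed), your simpler variant with exact rank computation per round would already suffice, since $\Oh(\tau\log^2 n)$ is still $\Ohtilde(\tau)$. But as a proof of the lemma as stated, with the bound $\Oh(\tau\log\tau\cdot\log(n/\tau))$, the argument is incomplete exactly where you flagged it.
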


Using the LCE data structure to compare any to prefixes, the $x^{th}$ largest element in the merged array can be found in $\Ohtilde(\tau)$ time. 
Using Lemma \ref{lem:k_stat}, we can find whichever rank prefix in the subset of $\mathcal{P}_{i-1}$ we are concerned with, then find the rightmost mismatch and compare it to $X[s_i\dd j-\tau)$. Doing so, the total time needed for obtaining the next factor is $\Ohtilde(\tau + \sqrt{\ell_i})$.

\subsubsection{Time and Query Complexity}

Taken over the entire string, the time complexity of finding the factors and updating the sorted order of the newly added prefixes is up to logarithmic factors bound by
\[
  \tfrac{n}{\tau} + z_{e + \tau} + \sum_{i=1}^{z_{e+\tau}}(\tau + \sqrt{\ell_i})
\leq 
 \tfrac{n}{\tau} + z_{e+\tau}+ \tau z_{e+\tau} + \sqrt{z_{e+\tau} n}  = \Oh(\tfrac{n}{\tau}+\tau z_{e+\tau} + \sqrt{z_{e+\tau}n} ),
\]
where the inequality follows from $\sum \ell_i = n$.  Combined with Lemma \ref{lem:LZ-End-tau}, which bounds $z_{e+\tau}$ to be logarithmic factors from $z$, and a logarithmic number of repetitions of each call to Grover's algorithm, the total time complexity is $\Ohtilde(\sqrt{z n} + \tau z + \frac{n}{\tau})$. 
To minimize the time complexity, we should set $\tau = \sqrt{n/z}$, bringing the total time to the desired $\Ohtilde(\sqrt{z n})$.

Note that we do not know $z$ in advance to set $\tau$. However, the desired time complexity can be obtained by increasing our guess of $z$ as follows: 
Let $z_{guess}$ be initially $1$,  set $\tau_{guess} = \lceil \sqrt{n/z_{guess}}\rceil$, and run the above algorithm until either the entire factorization of the string $X$ is obtained or the number of factors encountered is greater than $z_{guess}$. 
For a given $z_{guess}$, the time complexity is bound by $\Ohtilde(\sqrt{z_{guess} n} + \tau_{guess}z_{guess} +  \frac{n}{\tau_{guess}})$, which is $\Ohtilde(\sqrt{z_{guess} n})$.
If a complete factorization of $X$ is not obtained, we set $z_{guess} \gets 2 \cdot z_{guess}$, similarly update $\tau_{guess}$, and repeat our algorithm for the new $\tau_{guess}$. 
The total time taken over all guesses is logarithmic factors from $\sqrt{n}\sum_{i=1}^{\lceil \log z \rceil} (2^i)^{\frac{1}{2}}$ which, again, is $\Ohtilde(\sqrt{zn})$. 

The following lemma summarizes our result on LZ-End+$\tau$ factorization.
\begin{lemma}
Given a text $X$ of length $n$ having $z$ LZ77 factors, there exists a quantum algorithm that 
obtains the LZ-End+$\tau$ factorization of $X$ in $\Ohtilde(\sqrt{zn})$ time and input queries.
\end{lemma}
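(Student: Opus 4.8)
The plan is to implement the left-to-right greedy construction of the LZ-End+$\tau$ factorization described above, maintaining two auxiliary structures as we advance: the dynamic LCE data structure of \cref{lem:dynamic_lce} built over $\rev{X[1\dd s_i)}$, and the colexicographically sorted list of the prefix set $\mathcal{P}_{i-1}$ (prefixes ending at a previous factor boundary or at a position $\equiv 1 \pmod \tau$), with the ending position stored alongside each prefix. Given these, computing the $i$-th factor $F_i=(s_i,\ell_i)$ reduces to locating the largest $j\ge s_i$ for which the $\tau$-far property holds and then, inside the window $[\max(s_i,j-\tau)\dd j]$, the rightmost $h$ for which $X[s_i\dd h]$ is a potential factor, together with the earlier occurrence that witnesses it. Since \cref{lem:tau_far_property} shows the $\tau$-far property is monotone in $j$, exponential search on $j$ is valid, so it suffices to test the property for a single $j$ and, once the maximal $j$ is fixed, to recover $h$ with the same machinery.

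I would implement the per-$j$ test in $\Ohtilde(\tau+\sqrt{\ell_i})$ time as follows. Using LCE queries on the current dynamic LCE structure, for every $h\in[\max(s_i,j-\tau)\dd j]$ I compute the index range $[s_h\dd e_h]$ of prefixes in the sorted $\mathcal{P}_{i-1}$ that have $X[\max(s_i,j-\tau)\dd h]$ as a suffix; processing the window from $\max(s_i,j-\tau)$ rightward and reusing earlier matches through LCE queries, this costs $\Ohtilde(\tau)$ in total. Deleting the common suffix of length $d_h=h-\max(s_i,j-\tau)+1$ from the prefixes in $[s_h\dd e_h]$ leaves them colexicographically sorted, so the remaining task is to decide whether $X[s_i\dd j-\tau)$ occurs as a suffix of some truncated prefix across these $\le\tau+1$ sorted ranges. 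Explicitly merging the ranges is too expensive, so I would instead invoke \cref{lem:k_stat}, which (using the LCE structure to compare prefixes in $\Ohtilde(1)$ time) extracts an order statistic of the conceptual merged array in $\Ohtilde(\tau)$ time; combined with binary search and a Grover-based rightmost-mismatch search over the relevant truncated prefix (\cref{thm:grover}, $\Ohtilde(\sqrt{\ell_i})$ per comparison), this decides the $\tau$-far property for $j$ and, when it holds, returns the witnessing occurrence. The total per factor is $\Ohtilde(\tau+\sqrt{\ell_i})$ time and queries; every oracle query arises from the Grover routine inside the mismatch search (plus at most one query per new-symbol factor), so the query bound matches the time bound up to polylog factors.

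After $\ell_i$ is determined, I update the structures: prepend $\rev{X[s_i\dd s_i+\ell_i)}$ — a single character or a substring of $\rev{X[1\dd s_i)}$, both supported in $\Ohtilde(1)$ time by \cref{lem:dynamic_lce} — to the dynamic LCE structure, and insert the $\Oh(1+\ell_i/\tau)$ new prefixes of $\mathcal{P}_i\setminus\mathcal{P}_{i-1}$ into the sorted list via LCE-based colexicographic comparisons and binary search, at $\Ohtilde(1)$ per inserted prefix. Summing over all $z_{e+\tau}$ factors and using $\sum_i\ell_i=n$, the time and query complexity is $\Ohtilde\big(\tfrac{n}{\tau}+z_{e+\tau}+\tau z_{e+\tau}+\sqrt{z_{e+\tau}\,n}\big)=\Ohtilde\big(\tfrac{n}{\tau}+\tau z_{e+\tau}+\sqrt{z_{e+\tau}\,n}\big)$. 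By \cref{lem:LZ-End-tau}, $z_{e+\tau}=\Oh(z\log^2 n)=\Ohtilde(z)$, so this is $\Ohtilde\big(\tfrac{n}{\tau}+\tau z+\sqrt{zn}\big)$; choosing $\tau=\lceil\sqrt{n/z}\rceil$ yields $\Ohtilde(\sqrt{zn})$. Since $z$ is unknown, I would run the procedure with a guess $z_{\mathrm{guess}}$ (and $\tau_{\mathrm{guess}}=\lceil\sqrt{n/z_{\mathrm{guess}}}\rceil$), aborting once more than $z_{\mathrm{guess}}$ factors are produced and doubling $z_{\mathrm{guess}}$ on failure; the geometric sum $\sqrt{n}\sum_{i=1}^{\lceil\log z\rceil}2^{i/2}$ is still $\Ohtilde(\sqrt{zn})$, and a logarithmic number of repetitions of each Grover call makes the overall failure probability negligible.

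The main obstacle is the per-$j$ test of the $\tau$-far property within the target budget. The naive implementation that runs a separate Grover-based rightmost-mismatch search for each of the $\tau+1$ candidate endpoints $h$ costs $\Ohtilde(\sum_h\sqrt{x_h})=\Ohtilde(\sqrt{\tau\ell_i})$ (with $\sum_h x_h=\Oh(\ell_i)$), which is too slow once summed over all factors and $\tau$ is optimized. Reducing this to $\Ohtilde(\tau+\sqrt{\ell_i})$ forces the two-layer structure: an $\Ohtilde(\tau)$ LCE-only phase that peels off the length-$d_h$ suffixes and produces $\le\tau+1$ sorted ranges, followed by a single $\Ohtilde(\sqrt{\ell_i})$ mismatch search that operates across all those ranges simultaneously via the order-statistic selection of \cref{lem:k_stat}. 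Getting the bookkeeping of these truncated ranges and their witnesses right — so that the recovered occurrence is a legitimate LZ-End+$\tau$ reference (ending at a factor boundary or at a position $\equiv 1 \pmod\tau$) — is the delicate part; everything else follows routinely from the stated lemmas.
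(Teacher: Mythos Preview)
Your proposal is correct and follows essentially the same approach as the paper: the same two auxiliary structures, the same $\tau$-far exponential search, the same $\Ohtilde(\tau)$ LCE-based computation of the per-$h$ ranges followed by order-statistic selection via \cref{lem:k_stat} with a Grover rightmost-mismatch search, the identical cost analysis $\Ohtilde(n/\tau+\tau z_{e+\tau}+\sqrt{z_{e+\tau}n})$, and the same doubling scheme for unknown $z$. Even your discussion of why the naive $\Ohtilde(\sqrt{\tau\ell_i})$ per-window approach is too slow and must be replaced by the two-layer $\Ohtilde(\tau+\sqrt{\ell_i})$ test mirrors the paper's own progression of algorithms.
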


\subsection{Obtaining the LZ77, SLP, RL-BWT Encodings}
\label{sec:other_encodings}

To obtain the other compressed encodings, we utilize the following result by Kempa and Kociumaka, stated here as Lemma \ref{lem:ipm_query}. 
We need the following definitions: a factor $F_i = (s_i, \ell_i)$ is called \emph{previous factor} if $X[s_i\dd  s_i + \ell_i) = X[j\dd  j+\ell_i)$ for some $j < s_i$. 
We say a factorization $X = F_1, \hdots, F_f$ of a string is \emph{LZ77-like} if each factor $F_i$ is non-empty and  $|F_i| > 1$ implies $F_i$ is a previous factor. 
Note that LZ-End+$\tau$ is LZ77-like with $f = \Oh(z\log^2 n)$ as shown in Lemma \ref{lem:LZ-End-tau}.

\begin{lemma}[\cite{DBLP:journals/cacm/KempaK22} Thm.~6.11] 
\label{lem:ipm_query}
Given an LZ77-like factorization of a string $X[1\dd n]$ into $f$ factors, we can in $\Oh(f \log^4 n)$ time construct a data structure that, for any pattern $P$ represented by its arbitrary occurrence in $X$, returns the leftmost occurrence of $P$ in $X$ in $\Oh(\log^3 n)$ time.
\end{lemma}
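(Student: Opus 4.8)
The plan is to invoke the data structure of Kempa and Kociumaka~\cite{DBLP:journals/cacm/KempaK22} directly; for completeness, let me sketch the strategy underlying it. The first step is to turn the given LZ77-like factorization of size $f$ into a balanced, locally consistent grammar (e.g., a run-length straight-line program produced by recompression) of size $g=\Oh(f\log\tfrac{n}{f})=\Oh(f\log n)$, built in $\Oh(f\polylog n)$ time. Such a grammar has two features we rely on: given any fragment $X\fragment{i}{j}$ of $X$, we can compute, in $\Oh(\log n)$ time, its \emph{canonical parse} into $\Oh(\log n)$ maximal grammar symbols; and, using this, we can answer longest-common-extension queries between two fragments of $X$ in $\Oh(\log n)$ time (so the $f$ ``special'' strings below can be sorted by comparisons).

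The core observation is that the leftmost occurrence $X\fragmentco{a}{a+m}$ of a pattern $P$ of length $m\ge 2$ must \emph{cross a phrase boundary}: if it were contained in the interior of a single phrase $F_\ell$ with $|F_\ell|>1$, then, since $F_\ell$ is a previous factor, the occurrence would be copied from a strictly earlier occurrence, contradicting minimality (single-character patterns are handled separately). Moreover, by local consistency of the grammar, the offset at which the boundary falls inside the occurrence must coincide with one of the $\Oh(\log n)$ cut points of the canonical parse of $P$. Thus, writing $P=Q_1Q_2$ at such a cut point, it suffices to find, for each of the $\Oh(\log n)$ cuts, the smallest phrase boundary $b$ such that $X\fragmentco{1}{b}$ ends with $Q_1$ and $X\fragmentco{b}{n+1}$ begins with $Q_2$; the corresponding occurrence starts at $b-|Q_1|$, and the answer is the minimum over all cuts.

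Each such sub-query is a two-dimensional range-minimum query: associate with every phrase boundary $b$ the point whose coordinates are the co-lexicographic rank of the prefix $X\fragmentco{1}{b}$ and the lexicographic rank of the suffix $X\fragmentco{b}{n+1}$, among the $\Oh(f)$ relevant prefixes and suffixes, weighted by $b$; the sub-query then asks for the point of minimum weight inside the rectangle given by the rank interval of prefixes ending with $Q_1$ and the rank interval of suffixes starting with $Q_2$. Sorting the $\Oh(f)$ prefixes and suffixes costs $\Oh(f)$ LCE comparisons of $\Oh(\log n)$ time each; a range tree over the resulting grid is built in $\Ohtilde(f)$ time and answers rectangle-minimum in $\Oh(\log^2 f)$ time; and translating each half $Q_1,Q_2$ (given as a fragment of $X$) into its rank interval is a co-lexicographic/lexicographic binary search using $\Oh(\log f)$ LCE comparisons. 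Multiplying $\Oh(\log n)$ cuts by $\Oh(\log^2 n)$ per cut yields the query time $\Oh(\log^3 n)$, and the preprocessing sums to $\Oh(f\log^4 n)$. The technically demanding part is precisely the anchoring claim --- that only $\Oh(\log n)$ crossing offsets need to be examined --- together with realizing the rank-interval translation within these time bounds; both are carried out in~\cite{DBLP:journals/cacm/KempaK22}, which we invoke.
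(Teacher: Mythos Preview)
The paper does not prove this lemma at all; it simply cites it as Theorem~6.11 of~\cite{DBLP:journals/cacm/KempaK22} and uses it as a black box. Your proposal likewise invokes the cited result directly, so the approaches are identical, and the additional sketch you provide (grammar-based anchoring at $\Oh(\log n)$ cut points plus 2D range-minimum over phrase boundaries) is a reasonable summary of the cited construction and goes beyond what the paper itself offers.
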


Starting with the LZ-End+$\tau$ factorization obtained in Section \ref{sec:sublinear_time_alg}, we construct the data structure from Lemma \ref{lem:ipm_query}. To obtain the LZ77 factorization, we again work from left to right and apply exponential search to obtain the next factor. In particular, if the start of our $i^{th}$ factor is $s_i$ and $X[s_i\dd s_i+\ell_i)$ if the leftmost occurrence of the substring is at position $j < s_i$, then we continue the search by increasing $\ell$. Since $\Oh(\log^3 n)$ time is used per query, we get that $\Oh(\log^4 n)$ time is used to obtain each new factor. Therefore, once the data structure from Lemma \ref{lem:ipm_query} is constructed, the required time to obtain the LZ77 factorization is $\Oh(z\log^4 n)$. The total time complexity of constructing all LZ77 factorization starting from the oracle for $X$ is 
$
\Ohtilde(\sqrt{z_{e+\tau} n} + z)
$, which is $\Ohtilde(\sqrt{z n})$, as summarized below. 

\lz*

To obtain the RL-BWT of the text we directly apply an algorithm by Kempa and Kociumaka.
In particular, they provide a Las-Vegas randomized algorithm that, given the LZ77 factorization of a text $X$ of length $n$, computes its RL-BWT in $\Oh(z \log^8 n)$ time (see Theorem~5.35 in \cite{DBLP:journals/cacm/KempaK22}).
Combined with $r = \Oh(z\log^2 n)$~\cite{DBLP:journals/cacm/KempaK22}, we obtain the following result.

\rlbwt*

Obtaining a balanced CFG of size $\Ohtilde(z)$ is similarly an application of previous results, and can be derived by using either the original LZ77 to balanced grammar conversion algorithm of Rytter~\cite{DBLP:journals/tcs/Rytter03}, or more recent results for converting LZ77 encodings to grammars~\cite{kempa2021fast}, and even balanced run-length straight-line programs~\cite{DBLP:journals/cacm/KempaK22}. 
\section{Compressed Text Indexing and Applications}
\label{sec:SA_index}

We start this section having obtained the RL-BWT and the $\Ohtilde(z)$-space LCE data structure for the input text. There are two main stages to the remaining algorithm for obtaining a suffix array index. 
The first is to obtain a less efficient index with fast construction in time $\Ohtilde(\sqrt{rn})$, which supports $\SA$ and $\ISA$ queries in time $\Ohtilde(\sqrt{n/r})$.
We accomplish this by applying a form of prefix doubling and alphabet replacement.
These techniques allow us to `shortcut' the LF-mapping described in Section \ref{sec:prelim}. 
Using this shortcutted LF-mapping, we then sample the suffix array values every $\tau$ text indices apart, where $\tau = \sqrt{n/r}$,
similar to the construction of the original FM-index. 
Once this less efficient index construction is complete, we move on to build the other indexes in~\cite{DBLP:journals/jacm/GagieNP20}.

\subsection{Computing LF\texorpdfstring{$^\tau$}{tau} and Suffix Array Samples}
\label{sec:lf_tau}

Recall that the $\LF$-mapping of an index $i$ of the BWT is defined as $\LF[i] = \ISA[\SA[i] - 1]$. 
The RL-BWT can be equipped rank-and-select structures in $\Ohtilde(r)$ time to support computation of the $\LF$-mapping of a given index in $\Ohtilde(1)$ time.

For a given BWT run corresponding to the interval $[s\dd e]$, we have for $i \in [s\dd e)$ that $\LF[i+1] - \LF[i] = 1$, i.e., intervals contained in BWT runs are mapped on to intervals by the LF-mapping. If we applied the LF-mapping again to each $i \in [\LF[s]\dd  \LF[e]]$, the BWT-runs occurring $[\LF[s]\dd  \LF[e]]$ may split the $[\LF[s]\dd  \LF[e]]$ interval. We define the \emph{pull-back} of a mapping $\LF^2$ as $[s\dd e] \mapsto [s_1\dd  e_1], [s_2\dd  e_2],\hdots, [s_k\dd  e_k]$ that satisfies $s_1 = s$, $s_{i+1} = e_i + 1$ for $i \in [1\dd  k)$, $e_k = e$, $\BWT[\LF[s_{i+1}]] \neq \BWT[\LF[e_{i}]]$, and $j \in [s_i\dd  e_i)$, $i\in[1\dd k]$ implies $\BWT[\LF[j]] = \BWT[\LF[j+1]]$.
We also assign to each interval $[s_i\dd  e_i]$ created by the $\LF^2$ pull-back: (i) a string of length two (specifically, $[s_i\dd  e_i]$ is assigned the string $\BWT[\LF[s_i]] \circ \BWT[s_i]$) (ii) the indices that $s_i$ and $e_i$ map to, that is $\LF^2[s_i]$, $\LF^2[e_i]$; see Figure \ref{fig:lf_tau}.

\begin{figure}
\centering
\includegraphics[width=.7\textwidth]{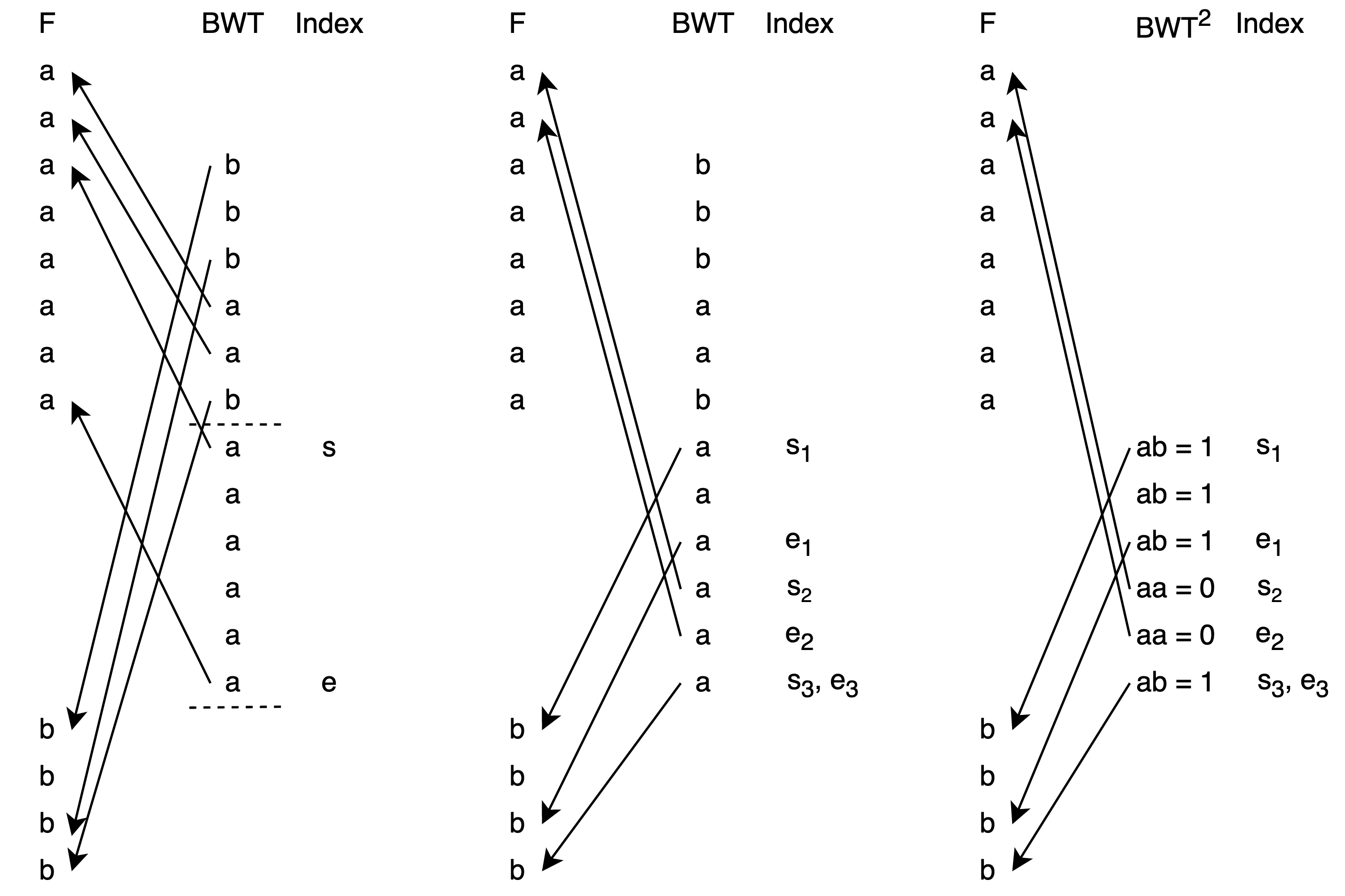}
\caption{
(Left) The initial LF-mapping from the interval $[s\dd e]$. (Middle) The intervals created for $[s\dd e]$ by the $\LF^2$ pull-back. (Right) Alphabet replacement is applied to each interval to create $\BWT^2$. Note that the contents of the F-column are not important and are not used. 
}
\label{fig:lf_tau}
\end{figure}

Observe that the LF-mapping maps distinct BWT-runs onto disjoint intervals.
Hence, each BWT run boundary appears in exactly one interval $[\LF[s]\dd  \LF[e]]$, where $[s\dd e]$ is a BWT run.
As a consequence, if we apply $\LF^2$ pull-back to all BWT run intervals and split each BWT run interval according to its $\LF^2$ pull-back, the number of intervals at most doubles.   

For a given $\tau$ that is a power of two, we next describe how to apply this pull-back technique and alphabet replacement to precompute mappings for each BWT run. 
These precomputed mappings make it so that, given any $i \in [1\dd n]$, we can compute $\LF^\tau[i]$ in $\Ohtilde(1)$ time. The time and space needed to precompute these mappings are $\Ohtilde(\tau r)$. 

We start as above and compute the $\LF^2$ pull-back for every BWT-run interval. We then replace each distinct string of length two with a new symbol. This could be accomplished, for example, by sorting and replacing each string by its rank. However, it should be noted that the order of these new symbols is not important. 
This process assigns each index in $[1\dd n]$ a new symbol. 
We denote this assignment as $\BWT^2$ and observe that the run-length encoded $\BWT^2$ is found by iterating through each $\LF^2$ pull-back. 

We now repeat this entire process for the runs in $\BWT^2[j]$. 
Doing so gives us, for each interval $[s\dd e]$ corresponding to a $\BWT^2$ run, a set of intervals $[s\dd e] \mapsto [s_1\dd  e_1], [s_2\dd  e_2],\hdots, [s_k\dd  e_k]$ that satisfies $s_1 = s$, $s_{i+1} = e_i + 1$ for $i \in [1\dd  k)$, $e_k = e$, $\BWT[\LF^2[s_{i+1}]] \neq \BWT[\LF^2[e_{i}]]$, and $j \in [s_i\dd  e_i)$, $i\in[1\dd k]$ implies $\BWT[\LF^2[j]] = \BWT[\LF^2[j+1]]$. We call this the $\LF^4$ pull-back. Next, we again apply alphabet replacement (that is, replacing a pair of symbols with a single symbol in $[1\dd n]$), defining $\BWT^4$ accordingly. The next iteration will compute the $\LF^8$ pull-back and $\BWT^8$. Repeating $\log \tau$ times, we get a set of intervals corresponding to $\LF^\tau$ pull-back. 

Recall that each pull-back step at most doubles the number of intervals. Hence, by continuing this process $\log \tau$ times, the number of intervals created by corresponding $\LF^\tau$ pull-back is $2^{\log \tau} r = \tau r$. For a given index $i$, to compute $\LF^\tau[i]$ we look at the $\LF^\tau$ pull-back. 
Suppose that $i \in [s'\dd  e']$, where $[s'\dd  e']$ is an interval computed in the $\LF^\tau$ pull-back. 
We look at the mapped onto interval $[\LF^\tau[s']\dd  \LF^\tau[e']]$, which we have stored as well, and take $\LF^\tau[i] = \LF^\tau[s'] + (i-s')$.

We are now ready to obtain our suffix array samples. 
We start with the position for the lexicographically smallest suffix (which, by concatenating a special symbol $\$$ to $X$, we can assume is the rightmost suffix).
Utilizing LF$^\tau$, we compute and store $\frac{n}{\tau}$ suffix array values evenly spaced by text position and their corresponding positions in the RL-BWT. 
This makes the $\SA$ value of any position in the BWT obtainable in $\tau$ applications of $\LF^1$ and computable in $\Ohtilde(\tau)$ time. 
Inverse suffix array, $\ISA$, queries can be supported with additional logarithmic factor overhead by using the LCE data structure (simply binary search over $\SA$ values). 
In summary, we have the following.

\begin{lemma} \label{lem:slow_index}
In time $ \Ohtilde(\sqrt{rn})$, we can obtain an index that answers $\SA$ and $\ISA$ queries 
in time $\Ohtilde(\sqrt{n/r})$ time.
\end{lemma}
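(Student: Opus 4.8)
The plan is to carry out exactly the $\LF^\tau$ construction outlined above, with the threshold fixed to $\tau := 2^{\lceil \tfrac12 \log(n/r)\rceil} = \Theta(\sqrt{n/r})$, a power of two. I start from the RL-BWT of $X$ produced by \cref{cor:alg_rlbwt}, augmented in $\Ohtilde(r)$ time with rank/select support so that a single $\LF$ step costs $\Ohtilde(1)$ time, together with the $\Ohtilde(z)=\Ohtilde(r)$-space LCE structure of~\cite{I17}, which by \cref{thm:lz} is available after $\Ohtilde(\sqrt{rn})$ time. First I would precompute the $\LF^\tau$ pull-back by prefix doubling: maintaining a list of intervals that initially consists of the $r$ BWT runs, I repeatedly (i) compute the $\LF^2$ pull-back of each current interval using single $\LF$ evaluations and rank/select, (ii) attach to each produced sub-interval $[s\dd e]$ the length-two string $\BWT[\LF[s]]\circ\BWT[s]$ it crosses together with $\LF^2[s]$ and $\LF^2[e]$, and (iii) replace all distinct attached strings by ranks in $[1\dd n]$ via sorting, thereby obtaining $\BWT^2,\BWT^4,\dots,\BWT^\tau$ after $\log\tau$ rounds. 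Because the $\LF$-mapping sends distinct BWT runs to pairwise disjoint intervals, each boundary between consecutive runs of the next-level string lies in a unique current interval, so a pull-back round increases the interval count by at most the current number of runs---i.e.\ at most doubles it---and after $\log\tau$ rounds there are $\Oh(\tau r)$ intervals; the whole precomputation, including the sorting used for alphabet replacement, then costs $\Ohtilde(\tau r)$ time and space. To evaluate $\LF^\tau[i]$ I locate by predecessor search the pull-back interval $[s'\dd e']\ni i$ and return $\LF^\tau[i] = \LF^\tau[s'] + (i - s')$; this is correct because on every level-$k$ interval all of $\BWT[j], \BWT[\LF[j]],\dots,\BWT[\LF^{2^k-1}[j]]$ are constant, which forces $\LF^{2^k}$ to act as a translation there, and it takes $\Ohtilde(1)$ time.

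Next I would build the suffix-array samples. Starting from the BWT position of the lexicographically smallest suffix---the one-character sentinel suffix $X[n\dd n]$, which has rank~$1$---applying $\LF^\tau$ a total of $\lfloor n/\tau\rfloor$ times walks the BWT positions of the suffixes $X[n\dd n], X[n-\tau\dd n], X[n-2\tau\dd n],\dots$, so recording at each step the pair consisting of the current BWT position and the known $\SA$-value yields $\lfloor n/\tau\rfloor$ $\SA$-samples spaced by $\tau$ text positions, in $\Ohtilde(n/\tau)$ time; I store them indexed by BWT position. To answer $\SA[i]$ I apply single $\LF$ steps to $i$ until the current BWT position is one of the samples; each step decreases the underlying text position by one and the samples hit a fixed residue class modulo $\tau$, so at most $\tau$ steps suffice, giving $\SA[i]$ in $\Ohtilde(\tau)$ time. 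To answer $\ISA[p]$ I binary-search over BWT ranks for the rank $i$ with $\SA[i]=p$: for each candidate $i$ I compute $\SA[i]$ as above and compare $X[\SA[i]\dd n]$ with $X[p\dd n]$ lexicographically using one LCE query, for a total of $\Ohtilde(\tau)$ time.

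Summing the costs gives construction time $\Ohtilde(\sqrt{rn}) + \Ohtilde(\tau r) + \Ohtilde(n/\tau)$, which is $\Ohtilde(\sqrt{rn})$ for $\tau = \Theta(\sqrt{n/r})$, while both $\SA$ and $\ISA$ queries run in $\Ohtilde(\tau) = \Ohtilde(\sqrt{n/r})$ time, matching the stated bounds. The step I expect to be the main obstacle is the amortized accounting for the pull-back: verifying the invariant that each doubling round at most doubles the interval count---which rests precisely on the disjointness of the $\LF$-images of distinct runs---and, hand in hand with it, checking that after each alphabet replacement every intermediate $\BWT^{2^k}$ still has $\Oh(\tau r)$ runs and its alphabet stays in $[1\dd n]$, so that ranking is cheap and the stored endpoint/symbol information suffices to recover $\LF^{2^k}$ as a translation on each interval. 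Everything else is routine bookkeeping of text positions and LCE comparisons.
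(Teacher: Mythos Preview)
Your proposal is correct and follows essentially the same approach as the paper: compute the $\LF^\tau$ pull-back by $\log\tau$ rounds of prefix doubling with alphabet replacement (using the disjointness of $\LF$-images of distinct runs to bound the interval count by $\Oh(\tau r)$), then take $\lfloor n/\tau\rfloor$ evenly-spaced $\SA$ samples by repeated application of $\LF^\tau$, answer $\SA$ queries with at most $\tau$ single $\LF$ steps, and answer $\ISA$ queries by binary search with LCE comparisons. The only technical point you flag as an obstacle---that each pull-back round at most doubles the number of intervals---is exactly the observation the paper isolates, and your justification for it matches the paper's.
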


For a given range $[s\dd e]$,
the longest common prefix, $\LCP$, of all suffixes $X[\SA[i]\dd n]$, $i \in [s\dd e]$ 
is $\LCE(\SA[s],\SA[e])$. Therefore, such queries can also be supported in time $\Ohtilde(\tau)$.

\subsection{Constructing the Index for Suffix Array and Inverse Suffix Array Queries}
The r-index ($\Oh(r)$ space version) for locating and counting pattern occurrences can be constructed by sampling suffix array values at the boundaries of BWT runs~\cite{DBLP:conf/soda/GagieNP18}. Requiring $\Ohtilde(r)$ queries on the structure in Lemma~\ref{lem:slow_index}, this takes $\Ohtilde(\sqrt{rn})$ time. 
Next, we describe how to construct the suffix array index in~\cite{DBLP:journals/jacm/GagieNP20} also in $\Ohtilde(\sqrt{rn})$ time. 
We start with the notion of the differential suffix array ($\DSA[i] = \SA[i] - \SA[i-1]$ for all $i > 1$) and a related lemma:

\begin{lemma}[\cite{DBLP:journals/jacm/GagieNP20}]
\label{lem:q_pointers}
Let $[i-1\dd i+s]$ be within a BWT run, for some $1 < i \leq n$ and $0 \leq s \leq n-i$. 
Then, there exists $q \neq i$ such that $\DSA[q\dd q+s] = \DSA[i\dd i+s]$ and $[q-1\dd q+s]$ contains the first position of a BWT run.
\end{lemma}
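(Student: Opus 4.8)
The plan is to prove the lemma by iterating the $\LF$-mapping, starting from the interval $[i-1\dd i+s]$. The key observation is that, since $\BWT[i-1]=\BWT[i]=\cdots=\BWT[i+s]$, the $\LF$-mapping sends $[i-1\dd i+s]$ to a \emph{consecutive} interval and, crucially, preserves the values of $\DSA$ along it; thus the image is already a valid candidate for $q$. If this image contains a BWT run boundary, we are done; otherwise we shall show that the image is again contained in a single BWT run, so the step can be repeated, and that this process must terminate.

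First I would record the two elementary facts about runs that drive everything. A BWT run of length at least two cannot contain the unique position $j$ with $\SA[j]=1$ (equivalently $\BWT[j]=\$$), since $\$$ occurs only once; hence $\SA[j]\ge 2$ for all $j\in[i-1\dd i+s]$ and $\SA[\LF[j]]=\SA[j]-1$ is well defined there. Second, $\LF$ increases by one inside a run: $\LF[j+1]=\LF[j]+1$ whenever $\BWT[j]=\BWT[j+1]$ (the standard order-preservation of $\LF$ on equal characters; I would justify it in one line). Combining the two, for every $t\in[0\dd s]$ we have $\LF[i+t]=\LF[i]+t$ and
\[\DSA[\LF[i+t]]=\SA[\LF[i+t]]-\SA[\LF[i+t]-1]=\big(\SA[i+t]-1\big)-\big(\SA[i+t-1]-1\big)=\DSA[i+t].\]
Hence, with $q_1:=\LF[i]$, we get $\DSA[q_1\dd q_1+s]=\DSA[i\dd i+s]$ and $[q_1-1\dd q_1+s]=[\LF[i-1]\dd\LF[i+s]]$; moreover $q_1\ne i$ because $\LF$ has no fixed point ($\LF[j]=j$ would force $\SA[j]=\SA[j]-1$).

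Next I would set up the iteration. Put $q_0:=i$, $q_{m+1}:=\LF[q_m]$, and $I_m:=[q_m-1\dd q_m+s]$. I claim that, as long as none of $I_1,\dots,I_{m}$ contains a BWT run boundary, each $I_{m'}$ with $m'\le m$ is a length-$(s+2)$ consecutive interval contained in a single BWT run, with $\SA[q_{m'}]=\SA[i]-m'$ and $\DSA[q_{m'}\dd q_{m'}+s]=\DSA[i\dd i+s]$. The case $m'=0$ is the hypothesis. For the step, ``$I_{m'}$ contains no run boundary'' forces $\BWT$ to be constant on $[q_{m'}-2\dd q_{m'}+s]$ (using that position $1$ is always a run boundary, so $q_{m'}\ge 3$), hence $I_{m'}$ lies inside a run, and its positions avoid the $\$$-position (otherwise $I_{m'}$ would contain the length-one $\$$-run, a run boundary); then the two elementary facts apply verbatim with $i$ replaced by $q_{m'}$, giving the consecutive image $I_{m'+1}=[q_{m'+1}-1\dd q_{m'+1}+s]$, the decrement $\SA[q_{m'+1}]=\SA[q_{m'}]-1$, and the propagation of the $\DSA$-equality. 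Since $\SA$-values are at least $1$, the identity $\SA[q_m]=\SA[i]-m$ cannot hold for all $m$; therefore some $I_m$ with $m\ge 1$ contains a BWT run boundary. Taking the least such $m$ and setting $q:=q_m$ completes the proof: $[q-1\dd q+s]=I_m$ contains a run boundary, $\DSA[q\dd q+s]=\DSA[i\dd i+s]$, and $q\ne i$ because $\SA[q]=\SA[i]-m<\SA[i]$.

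The main obstacle is making termination and the condition $q\ne i$ work together: a single application of $\LF$ need not land on an interval with a run boundary, and iterating $\LF$ could a priori cycle back to $i$. Both are handled by the monovariant $\SA[q_m]=\SA[i]-m$, which is strictly decreasing (so it cannot return to $\SA[i]$, giving $q_m\ne i$ for every $m\ge 1$) and bounded below by $1$ (so the iteration stops within $\SA[i]-1$ steps; note that reaching $\SA$-value $1$ means reaching the $\$$-position, which is itself a run boundary). The remaining ingredients---``an interval with no run boundary lies inside a run'' and the bookkeeping at position $1$ and at the $\$$-symbol---are routine and I would dispose of them in a couple of lines.
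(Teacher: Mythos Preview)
Your proof is correct. The paper itself does not prove this lemma---it is quoted from~\cite{DBLP:journals/jacm/GagieNP20} without proof---so there is no in-paper argument to compare against. That said, the sentence immediately following the lemma (``the LF-mapping applied to a portion of the $\DSA$ completely contained within a BWT run preserves the $\DSA$ values'') is exactly the mechanism you exploit, and your iteration of $\LF$ together with the strictly decreasing monovariant $\SA[q_m]=\SA[i]-m$ is the standard way to establish the result in~\cite{DBLP:journals/jacm/GagieNP20}; your handling of termination, of $q\ne i$, and of the edge cases at position~$1$ and at the $\$$-position are all sound.
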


Lemma~\ref{lem:q_pointers} implies that the LF-mapping applied to a portion of the $\DSA$ completely contained within a BWT run preserves the $\DSA$ values, making the $\DSA$ highly compressible. 
With this observation, Gagie et al. obtained their index, which consists of $\Oh(\log \frac{n}{r})$ levels, with $\Oh(r)$ nodes each.
The nodes on a given level maintain pointers to nodes on the next level based on the $q$ values from Lemma \ref{lem:q_pointers}. 
Along with $\DSA$ values, $\SA$ values, and `offsets' kept for each node, these pointers are sufficient for efficiently recovering any suffix array value.
We refer the reader to \cite{DBLP:journals/jacm/GagieNP20} for further details.

A key operation to construct this data structure is finding these pointers, or $q$ values, from Lemma \ref{lem:q_pointers}. Following this, the remaining values needed per node are easily obtained from $\SA$ and $\ISA$ queries using Lemma~\ref{lem:slow_index}. As the next lemma demonstrates, for an arbitrary range $[s\dd e]$ we can obtain such a pointer in $\Ohtilde(\tau)$ time using the previously computed values from Section \ref{sec:lf_tau}. 

\begin{lemma}\label{lem:LF_k}
Given that $\SA[i]$ for arbitrary $i$ can be computed in $\Ohtilde(\tau)$ time and (reversed) LCE queries in $\Ohtilde(1)$ time, for a given range $[s\dd e]$, we can find $k$, $s'$, $e'$ such that $k \geq 0$ is the smallest value where $\LF^k([s\dd e]) = [s'\dd  e']$ and $[s'\dd e']$ contains the start of BWT-run.
\end{lemma}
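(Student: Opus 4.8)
The plan is to find $k$ by following the LF-orbit of the interval $[s\dd e]$, leaping $\tau$ LF-steps at a time with the help of the precomputed $\LF^\tau$ pull-back from \cref{sec:lf_tau} and then locating the exact step with single applications of $\LF$ (available in $\Ohtilde(1)$ time via rank/select on the RL-BWT) and a reversed LCE query.

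First I would dispose of $k=0$: by rank/select on the RL-BWT, test whether $[s\dd e]$ already contains the start of a BWT run (i.e.\ $s$ is a run start or $(s\dd e]$ contains a run boundary); if so, return $k=0$ and $[s'\dd e']=[s\dd e]$. Otherwise $[s\dd e]$ lies strictly inside a single run, so $\LF$ acts on it as a shift by a fixed offset; moreover $[\LF^{j}(s)\dd\LF^{j}(e)]$ stays the correct consecutive image of $[s\dd e]$ for every $j$ up to the first index $k$ at which this image meets a run start, because for each $j<k$ the interval $[\LF^{j}(s)\dd\LF^{j}(e)]$ is strictly inside a run with a non-run-start left endpoint and is therefore again shifted as a block by $\LF$. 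This already makes $k$ a well-defined stopping time. It also shows that for every $j\le k$ the $e-s+1$ suffixes in $[\LF^{j}(s)\dd\LF^{j}(e)]$ all begin with the same length-$j$ string that the clean steps prepended; in particular $X[\SA[s]-j]=X[\SA[e]-j]$ for all $j\le k$, so $k$ cannot exceed the length of the longest common suffix of $X[1\dd\SA[s])$ and $X[1\dd\SA[e])$ — a value obtained in $\Ohtilde(1)$ time from one reversed LCE query on $\rev{X}$, after computing $\SA[s]$ and $\SA[e]$ in $\Ohtilde(\tau)$ time.

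The search itself proceeds in blocks of $\tau$ LF-steps. Given the current image interval, I locate the $\LF^\tau$ pull-back piece(s) it overlaps. If it lies inside a single piece $P$ and its left endpoint is not the left end of $P$, then by construction of the pull-back no run start occurs anywhere in the image during the next $\tau$ steps (the recursive definition guarantees all $\tau$ intermediate steps stay inside runs, and the shift structure keeps the left endpoint off every run boundary), so I leap to $\LF^\tau$ of the interval (read off in $\Ohtilde(1)$ time from the stored pull-back data) and continue. As soon as the interval fails to lie inside a single piece, or its left endpoint coincides with a piece boundary, the first run start is guaranteed to occur within the current block of $\tau$ steps; I then pin down the exact step $k$ inside that block either by single LF-steps (at most $\tau$ of them, each $\Ohtilde(1)$, testing run starts via RL-BWT rank/select and short-circuited by the reversed-LCE bound above) or by binary search over the $\Oh(\log\tau)$ finer pull-backs $\LF^{2^{i}}$. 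Finally I output $k$ together with $s'=\LF^{k}(s)$ and $e'=\LF^{k}(e)$. The total work is $\Ohtilde(\tau)$ plus $\Ohtilde(1)$ per $\tau$-block, i.e.\ $\Ohtilde(\tau + k/\tau)$, which is $\Ohtilde(\tau)$ in the regime $k=\Oh(n/r)=\Oh(\tau^{2})$ relevant to the construction.

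The step I expect to be the main obstacle is certifying correctness of the leaps — showing that no run start is ever jumped over. The $\LF^\tau$ pull-back was engineered to detect run starts occurring \emph{strictly inside} the image; ruling out a run start that appears only at the \emph{left endpoint} of some intermediate interval, and ruling out that $[\LF^{m}(s)\dd\LF^{m}(e)]$ looks spuriously clean for some $m>k$, require the shift-invariance of $\LF^{j}$ restricted to a single piece (so that intermediate left endpoints stay strictly interior to their runs) together with the reversed-LCE characterization of the clean prefix of the orbit. Turning these observations into a search that provably returns the minimal $k$ while touching only $\Ohtilde(\tau)$ precomputed cells is the technical heart of the lemma.
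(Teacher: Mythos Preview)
Your approach is genuinely different from the paper's and carries a real gap in the running-time analysis.

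The paper does not simulate the $\LF$-orbit at all. Instead it uses the identity $\LF^{k}(i)=\ISA[\SA[i]-k]$: for any candidate $k$, compute $s'=\ISA[\SA[s]-k]$ and $e'=\ISA[\SA[e]-k]$ (two $\SA$ and two $\ISA$ queries, each $\Ohtilde(\tau)$), then test whether $e'-s'=e-s$ and $\LCP([s'\dd e'])\ge k$. These two conditions hold iff no run boundary was hit during the first $k$ steps, so a single exponential search on $k$ (with $\Oh(\log n)$ probes) finds the minimal $k$. The total cost is $\Ohtilde(\tau)$ \emph{unconditionally}, for any value of~$k$.

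Your block-by-block simulation costs $\Ohtilde(\tau+k/\tau)$, and you close the argument by asserting $k=\Oh(n/r)=\Oh(\tau^{2})$ ``in the regime relevant to the construction.'' That assertion is unsupported: the $r$ run starts need not be evenly spaced along the $\LF$-cycle, so for an arbitrary interval $[s\dd e]$ the first hitting time $k$ can be as large as $\Theta(n)$, making your bound $\Ohtilde(n/\tau)=\Ohtilde(\sqrt{rn})$ per call rather than $\Ohtilde(\tau)$. Even for the specific intervals arising in the \cite{DBLP:journals/jacm/GagieNP20} construction you would need to prove such a bound, and you have not. The obstacle you flagged (correctness of the leaps) is secondary; the missing idea is that $\SA$/$\ISA$ let you evaluate $\LF^{k}$ directly for any $k$, turning the problem into a monotone predicate amenable to exponential search and eliminating any dependence on~$k$.
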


\begin{proof}
We assume $k \geq 1$; otherwise, we determine from the RL-BWT that $[s\dd e]$ contains a run. 
We use exponential search on $k$. For a given $k$, we compute $s' = \ISA[\SA[s]-k]$ and $e' = \ISA[\SA[e]-k]$. We check whether $e'-s' =e -s$ and whether $\LCP([s'\dd  e']) \geq k$.
If both of these conditions hold, then no run boundary has been encountered yet. The largest $k$, $s'$, and $e'$ for which these conditions hold is returned.
Utilizing Lemma~\ref{lem:slow_index}, this takes $\Ohtilde(\tau)$ time per $\SA$ or $\ISA$ query, resulting in $\Ohtilde(\tau)$ time overall.
\end{proof}

In summary, the construction
of the suffix array index
requires a total of $\Ohtilde(r)$
queries on the structure in Lemma~\ref{lem:slow_index} and 
calls to the algorithm described in the proof of Lemma \ref{lem:LF_k}.
Inverse suffix array queries are supported with the help of LCE data structure as before.
This completes the proof of Theorem~\ref{thm:index}.

\index*

\subsection{Applications}
\label{sec:applications}
Having constructed suffix array index for $X$, we can solve a number of other problems in sub-linear time:

\paragraph{Longest Common Substring:} Given two strings $S_1$ and $S_2$, we let $z_{1,2}$ be the number of LZ77 factors of $S_1\$S_2$. In $\Ohtilde(\sqrt{z_{1,2}(|S_1| + |S_2|)})$ time, we construct the LZ77 parse and LCE-data structure for the $S_1\$S_2$. 

\begin{lemma}
\label{lem:lcs}
If $X$ is the longest common substring of $S_1$ and $S_2$, then there exists $i_1$,$j_1$, $i_2$, $j_2$ such that $S_1[i_1\dd j_1] = X$ and $S_2[i_2\dd j_2] = X$. For the $\BWT$ and $\ISA$ of $S_1\$S_2$, we have $|\ISA[i_1] - \ISA[|S_1|+1+i_2]| = 1$. Moreover, for this instance, $\BWT[\ISA[i_1]] \neq \BWT[\ISA[i_2]]$.
\end{lemma}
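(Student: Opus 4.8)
The plan is to invoke the classical suffix-array characterization of the longest common substring on the concatenation $X' := S_1\$S_2\$'$ (adding the terminal sentinel $\$'$ as in the preliminaries, with $\$,\$'$ distinct, smaller than every symbol of $\Sigma$, and occurring only at positions $|S_1|+1$ and $|X'|$), and then to extract the $\BWT$ inequality from the non-existence of a common substring of length $|X|+1$. Call a position of $X'$ \emph{red} if it lies in $[1\dd|S_1|]$ and \emph{blue} if it lies in $[|S_1|+2\dd|X'|-1]$ (the two sentinel positions are neither). Since any substring common to $S_1$ and $S_2$ uses only symbols of $\Sigma$, we have $X\in\Sigma^{L}$ with $L:=|X|\ge 1$.

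First I would fix arbitrary occurrences $S_1[a_1\dd a_1+L-1]=X$ and $S_2[a_2\dd a_2+L-1]=X$, and set $p_1:=a_1$ (red) and $p_2:=|S_1|+1+a_2$ (blue); the suffixes of $X'$ starting at $p_1$ and at $p_2$ both have $X$ as a prefix, so in particular neither is the $\$$-suffix nor the $\$'$-suffix. Assuming WLOG that the suffix at $p_1$ is lexicographically smaller, so $\ISA[p_1]<\ISA[p_2]$, the key observation is that a string lying lexicographically between two strings both starting with $X$ must itself start with $X$ (inspect the first position where it would disagree with $X$, or the case where it is a proper prefix of $X$). Consequently every suffix with SA-rank in $[\ISA[p_1]\dd\ISA[p_2]]$ starts with $X$ and is therefore red or blue; as this rank interval starts red and ends blue, it contains a consecutive pair $(\SA[t],\SA[t+1])$ with $\SA[t]$ red and $\SA[t+1]$ blue.

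Next I would check that the common prefix of these two suffixes is exactly $X$: it has length $\ge L$ since both suffixes begin with $X$, and length $\le L$ because a common prefix of a red suffix and a blue suffix cannot contain $\$$ (which separates the two parts) nor $\$'$, hence is a genuine substring of both $S_1$ and $S_2$, which is at most $|X|$ long by maximality. So $\SA[t]$ and $\SA[t+1]-(|S_1|+1)$ are starting positions of occurrences of $X$ in $S_1$ and in $S_2$; taking these as $(i_1,j_1)$ and $(i_2,j_2)$ gives $|\ISA[i_1]-\ISA[|S_1|+1+i_2]|=1$. Finally, for the $\BWT$ inequality, suppose $\BWT[t]=\BWT[t+1]=c$. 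Because $\$$ and $\$'$ each occur once, $\BWT[k]$ equals a sentinel for exactly one rank $k$, so $c\in\Sigma$ and $\SA[t]-1$, $\SA[t+1]-1$ are a red and a blue position; the suffixes there are $c$ prepended to the suffixes at $\SA[t]$ and $\SA[t+1]$, so both begin with $cX$, making $cX$ a common substring of $S_1$ and $S_2$ of length $L+1$ — a contradiction. Hence $\BWT[\ISA[i_1]]\neq\BWT[\ISA[|S_1|+1+i_2]]$, i.e.\ the two adjacent ranks carry distinct preceding characters.

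The conceptual core is entirely standard: the longest common substring corresponds to an adjacent, oppositely-colored pair of suffixes, and such a pair sits at a $\BWT$ run boundary (which is what later limits the number of candidate positions to $\Oh(r)$). I expect the only delicate point to be the sentinel bookkeeping — verifying that the rank interval $[\ISA[p_1]\dd\ISA[p_2]]$ contains no sentinel suffix, and that $\BWT[t]$, $\BWT[t+1]$ are ordinary characters rather than $\$$ or $\$'$ — since it is exactly this that makes the strict inequality $\BWT[\cdot]\neq\BWT[\cdot]$ go through.
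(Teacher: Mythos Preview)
Your proof is correct and follows essentially the same approach as the paper's: both locate a consecutive pair of differently-colored suffix-array positions inside the rank interval spanned by occurrences of $X$, and both derive the $\BWT$ inequality from the fact that equality would left-extend the longest common substring. Your version is slightly more direct (the paper argues the first part by contradiction, minimizing the rank gap) and more careful about sentinel bookkeeping, but the underlying argument is the same.
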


\begin{proof}
Suppose all occurrences of $X$ satisfy $|\ISA[i_1] - \ISA[|S_1|+1+i_2]| > 1$, and let $i_1$ and $i_2$ be such that $|\ISA[i_1] - \ISA[|S_1|+1+i_2]|$ is minimized. 
Suppose w.l.o.g. that $\ISA[i_1] < \ISA[|S_1|+1+i_2]$. 
Then, there must exist indices $x$ and $y$ such that $\ISA[i_1] \leq x \leq y \leq \ISA[|S_1|+1+i_2]$ such that $y-x = 1$ and $\SA[x],\SA[y]$ are indices into ranges for different strings, i.e., $\SA[x] \in [1\dd  |S_1|]$ and $\SA[y] \in [|S_1|+2\dd  |S_1|+|S_2|+1]$ or $\SA[x] \in [|S_1|+2\dd  |S_1|+|S_2| + 1]$ and $\SA[y] \in [1\dd  |S_1|]$, and $\LCE(\SA[x], \SA[y]) \geq \LCE(\SA[i_1], \SA[i_2])$. 
Hence, the prefix $X$ is shared by $(S_1\$S_2)[x\dd ]$ and $(S_1\$S_2)[y\dd ]$, a contradiction.
At the same time, if $\BWT[i_1] = \BWT[|S_1|+1+i_2]$, then the longest common substring could be extended to the left, a contradiction.
\end{proof}

Based on Lemma \ref{lem:lcs}, we can find the longest common string of $S_1$ and $S_2$ by checking the runs in BWT of $S_1\$S_2$, checking if adjacent $\SA$ values correspond to suffixes of different strings, and through the LCE queries, taking the pair with the longest shared prefix.

Note that $\Omega(\sqrt{|S_1| + |S_2|})$ time is necessary for the problem when $z = \Theta(1)$, as 
determining whether $S_1 = 0^{n}$ or whether $S_1$ contains a single $1$ requires $\Omega(\sqrt{n})$ time. The reduction simply sets $S_2 = `1\textrm'$.

\paragraph{Maximal Unique Matches:} Given two strings $S_1$ and $S_2$, we can identify all maximal unique matches in $\Ohtilde(r)$ additional time after constructing the RL-BWT and our index for $S_1\$S_2$. To do so we iterate through all run boundaries in the RL-BWT. We wish to identify all indices $i$ where: 
\begin{itemize}
\item $\BWT[i] \neq \BWT[i+1]$; 
\item $\SA[i] \in [1\dd |S_1|]$ and $\SA[i+1] \in [|S_1|+2\dd  |S_1|+|S_2|]$, or $\SA[i] \in [|S_1|+2\dd  |S_1|+|S_2|]$ and $\SA[i+1] \in [1\dd  |S_1|]$; 
\item and either 

\begin{itemize}
    \item $i = 1$ and $\LCE(\SA[i], \SA[i+1]) > \LCE(\SA[i+1], \SA[i+2])$,
    
    \item $i > 1$ and $i+1 < n$ and \[
    \LCE(\SA[i], \SA[i+1]) > \max(\LCE(\SA[i-1], \SA[i]), \LCE(\SA[i+1], \SA[i+2])),\] or
    
    \item $i > 1$ and $i + 1 = n$ and $\LCE(\SA[i], \SA[i+1]) > \LCE(\SA[i-1], \SA[i])$.
\end{itemize}
\end{itemize}

\paragraph{Lyndon Factorization:} 
The Lyndon factors of a string are determined by the values in $\ISA$ that are smaller than any previous value, i.e., $i \in [1\dd n]$ s.t. $\ISA[i] < \ISA[j]$ for all $j < i$. After constructing the $\SA$ index above in $\Ohtilde(\sqrt{zn})$, each $\ISA$ value can be queried in $\Ohtilde(1)$ time as well. 
Let $f$ denote the total number of Lyndon factors and let $i_j$ be the index where the $j^{th}$ Lyndon factor starts.
To find all Lyndon factors, we proceed from left to right keeping track of the minimum $\ISA$ value encountered thus far. 
Initially, this is $\ISA[i_1] = \ISA[1]$. 
We use exponential search on the rightmost boundary of the subarray being searched and Grover's search to identify the left-most index  $x \in [i_j+1\dd n]$ such that $\ISA[x] < \ISA[i_j]$. If one is encountered, we set $i_{j+1} = x$ and continue. 
Thanks to the exponential search, the total time taken is logarithmic factors from
\[
\sum_{j=1}^f \sqrt{i_{j+1} - i_j} \leq \sqrt{fn}.
\]
At the same time, the number of Lyndon factors $f$ is always bound by $\Ohtilde(z)$~\cite{DBLP:conf/stacs/KarkkainenKNPS17,DBLP:conf/cpm/UrabeNIBT19}. This yields the desired time complexity of $\Ohtilde(\sqrt{zn})$.

\paragraph{Q-gram Frequencies:} 
We need to identify the nodes $v$ of the suffix tree at string depth $q$ and the number of leaves in their respective subtrees.
A string matching the root-to-$v$ path, where $v$ has string depth $q$, is a q-gram and the size of the subtree its frequency.
To do this, we start with $i = 1$. Our goal is to find the largest $j$ such that $\LCE(\SA[i], \SA[j]) \geq q$. This can be found using exponential search on $j$ starting with $j = i$. Once this largest $j$ is found, the size of the subtree is equal to $j-i+1$. The q-gram itself is $X[\SA[i]\dd \SA[i]+q)$. 
We then set $i = j+1$ and repeat this process until the $j$ found through exponential search equals $n$. 
The overall time after index construction is $\Ohtilde(occ)$, where $occ$ is the number of q-grams. 

The $\Ohtilde(\sqrt{zn})$ time complexity is near optimal as a straightforward reduction from the Threshold problem discussed in \cref{sec:lb} with $q = 1$ and binary strings indicates that $\Omega(\sqrt{zn})$ input queries are required.

\vspace{1em}
These are likely only a small sample of the problems that can be solved using the proposed techniques. 
It is also worth restating that the compressed forms of the text can be obtained in $\Oh(\sqrt{zn})$ input queries, hence all string problems can be solved with that many input queries, albeit, perhaps with greater time needed.
\section{Lower Bounds}
\label{sec:lb}

\subsection{Hardness for Extreme \texorpdfstring{$z$}{z} and \texorpdfstring{$r$}{r}}

We first consider the case of $z, r = \Theta(1)$.
Let $X$ be a binary string and $S$ be the set of indices $i$ in $X$ such that $X[i] = 1$.
One can easily show, through an application of the adversarial method of Ambainis, that $\Omega(\sqrt{n})$ queries are required to determine if $|S| = 1$, even given the promise that $|S| = 0$ or $|S| = 1$~\cite{DBLP:conf/stoc/Ambainis00}. In the case where $|S| = 0$, we have $z = 2$ (and $r = 1$), and in the case where $|S| = 1$, we have $3 \leq z \leq 5$ and $2 \leq r \leq 3$. 
Note also that, even if we only obtained a suffix array index and not the compressed encodings, this would allow us to determine if there exists a $1$ in $X$ with a single additional query.

To show hardness for $z=r=n$ (and a large alphabet), we consider the problem of determining the $\left(\frac{n}{2}\right)^{th}$ largest value output from an oracle $f$ that outputs distinct values for each index.
This unordered searching problem has a known $\Omega(n)$ lower bound~\cite{DBLP:conf/stoc/NayakW99}.
For the string representation $X = \bigcirc_{i=1}^n f(i)$, if we could obtain an LZ77 encoding of $X$ using $o(z)$, or equivalently $o(n)$, oracle queries, then we could decompress the result and return the median element to solve the unordered searching problem. Similarly, if we could obtain the RL-BWT of $X$ in $o(r)$ queries, we could decompress it to obtain the median element in $o(n)$ queries.
Note also that even if we only had a suffix array index and not the compressed encoding, with a single additional query we could find the median element.

The above observations yield the following results.

\begin{theorem}
\label{thm:lb_no_qram}
Obtaining the LZ77 factorization of a text $X[1..n]$ with $z$ LZ77 factors, or RL-BWT with $r$ runs, requires $\Omega(\sqrt{zn})$ queries ($\Omega(\sqrt{rn})$ queries resp.) when $z, r = \Theta(1)$ or $z,r = \Theta(n)$.
\end{theorem}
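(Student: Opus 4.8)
The plan is to establish both bounds by reduction from two classical query lower bounds, embedding each hard instance into a string whose compressibility parameters fall into the stated regime. For the regime $z,r=\Theta(1)$, I would reduce from the gapped search problem: given oracle access to $B\in\{0,1\}^n$ under the promise that either $B=\texttt{0}^n$ or $B$ contains exactly one $\texttt{1}$, decide which holds; the adversary method of Ambainis~\cite{DBLP:conf/stoc/Ambainis00} shows this needs $\Omega(\sqrt n)$ queries. Taking $X:=B$, the oracle $O_X$ is exactly the oracle for $B$, so no query overhead is incurred. When $B=\texttt{0}^n$ we have $z=2$ and $r=1$, whereas when $B$ has a single $\texttt{1}$ we have $3\le z\le 5$ and $2\le r\le 3$; in either case $\sqrt{zn}=\Theta(\sqrt n)=\sqrt{rn}$. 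An algorithm computing $\LZ(X)$ (respectively, the RL-BWT of $X$) in $o(\sqrt n)$ queries would therefore let us distinguish the two cases just by inspecting its output---a contradiction---which gives the claimed $\Omega(\sqrt{zn})$ (resp.\ $\Omega(\sqrt{rn})$) bound in this regime.

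For the regime $z,r=\Theta(n)$, I would reduce from the unordered search / order-statistic problem: given oracle access to an injective $f\colon[1\dd n]\to\Sigma$, outputting the index attaining the $\lceil n/2\rceil$-th largest value requires $\Omega(n)$ queries~\cite{DBLP:conf/stoc/NayakW99}. Encode $f$ as the string $X=f(1)f(2)\cdots f(n)$ over $\Sigma$ (followed by the sentinel $\$$), so that one query to $O_X$ costs one query to $f$. Since all characters of $X$ are distinct, every LZ77 phrase is a fresh single symbol and $z=n$; and since the suffixes of $X$ start with pairwise distinct characters, consecutive BWT symbols coincide only in a negligible fraction of positions, giving $r=\Theta(n)$. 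Here $\sqrt{zn}=\sqrt{rn}=n$. From the output $\LZ(X)$ (or the RL-BWT) one reconstructs $X$, hence $f$, hence the sought order statistic, without any further queries; an $o(n)$-query construction algorithm would thus break the $\Omega(n)$ bound.

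The one place requiring genuine care is verifying these compressibility counts: for the gapped-search string this amounts to writing down $\LZ(\texttt{0}^a\texttt{1}\texttt{0}^b)$ and the BWT of $\texttt{0}^a\texttt{1}\texttt{0}^b$ explicitly, while for the all-distinct string one must check that equal adjacent BWT symbols $X[\SA[i]-1]=X[\SA[i+1]-1]$ force the corresponding suffixes to share a preceding character, which is impossible among $n$ distinct symbols (modulo the sentinel), so $r=\Theta(n)$ indeed. Finally, as noted just before the statement, each reduction still goes through if the algorithm only produces a suffix-array index rather than an explicit encoding, since one additional query then recovers the relevant bit of $B$ or value of $f$. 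Combining the two regimes yields \cref{thm:lb_no_qram}.
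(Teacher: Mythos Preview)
Your proposal is correct and follows essentially the same approach as the paper: for $z,r=\Theta(1)$ you reduce from the promise version of search (via Ambainis' adversary method), and for $z,r=\Theta(n)$ you reduce from the $\Omega(n)$ lower bound for median-finding over distinct values, encoding the oracle as a string of pairwise distinct characters. One minor remark: your justification that $r=\Theta(n)$ in the all-distinct case can be stated more directly---since every character of $X$ is distinct, the BWT (being a permutation of these characters) has no two equal adjacent symbols at all, so $r$ equals the string length exactly; the ``negligible fraction'' phrasing understates this.
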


\begin{theorem}
Constructing a data structure that supports $o(\sqrt{zn})$ time 
 suffix array queries of a text $X[1..n]$ with $z$ LZ77 factors, or RL-BWT with $r$ runs, requires  $\Omega(\sqrt{zn})$ ($\Omega(\sqrt{rn})$ resp.) input queries when $z, r = \Theta(1)$ or $z,r = \Theta(n)$.
\end{theorem}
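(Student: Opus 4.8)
The plan is to derive both bounds directly from the two hard instances already used in the proof of \cref{thm:lb_no_qram}, observing that for those reductions a suffix array index is no weaker than the explicit compressed encodings considered there: in each case a single $\SA$ query followed by one additional oracle query suffices to solve the underlying query-hard problem.

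First I would treat the regime $z,r=\Theta(1)$. As in \cref{thm:lb_no_qram}, take the binary family in which $X$ is either $\texttt{0}^n$ or contains exactly one $\texttt{1}$ (so $z,r=\Theta(1)$ in both cases); distinguishing these requires $\Omega(\sqrt n)=\Omega(\sqrt{zn})=\Omega(\sqrt{rn})$ oracle queries by Ambainis's adversary method~\cite{DBLP:conf/stoc/Ambainis00}. Suppose, for contradiction, that a data structure answering suffix array queries for $X$ could be built with $q=o(\sqrt{zn})$ oracle queries. The lexicographically largest suffix of $X$ begins with $\texttt{1}$ precisely when $X$ contains a $\texttt{1}$, and begins with $\texttt{0}$ otherwise. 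Hence one $\SA$ query returning the starting position $\pi$ of that suffix, followed by one oracle query reading $X[\pi]$, decides the problem using $q+O(1)=o(\sqrt{zn})$ oracle queries in total, contradicting the $\Omega(\sqrt{zn})$ lower bound above; the $\Omega(\sqrt{rn})$ statement is identical since $r=\Theta(1)$ here as well.

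Next I would treat the regime $z,r=\Theta(n)$ over a large alphabet. As in \cref{thm:lb_no_qram}, reuse the instance $X=\bigcirc_{i=1}^{n}f(i)$ for an oracle $f$ with pairwise distinct outputs, for which reporting the $\lceil n/2\rceil$-th largest value of $f$ requires $\Omega(n)=\Omega(\sqrt{zn})$ oracle queries by the unordered-search lower bound~\cite{DBLP:conf/stoc/NayakW99}. Since the first characters of all suffixes of $X$ are pairwise distinct, the lexicographic order of the suffixes coincides with the order of their first characters, so the $i$-th entry of $\SA$ is the position of the $i$-th smallest value of $f$; in particular a single $\SA$ query locates a median value. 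Thus, if such an index could be constructed with $o(n)$ oracle queries, one $\SA$ query followed by one oracle query at the returned position would output the median using $o(n)$ oracle queries overall, again a contradiction; the RL-BWT variant (with $r=\Theta(n)$) is the same.

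There is no substantive obstacle beyond the bookkeeping already handled in \cref{thm:lb_no_qram}: one must only note that the $O(1)$ post-processing (the index queries plus the single extra oracle query) adds merely an additive constant to the oracle-query count, which is immediate because queries to the index consume no oracle queries. In particular, the argument never uses the $o(\sqrt{zn})$ bound on the index's query time---only the correctness of its suffix array queries.
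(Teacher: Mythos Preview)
Your proposal is correct and follows essentially the same approach as the paper. The paper's own argument is even terser: it simply remarks after each hard instance that ``even if we only obtained a suffix array index and not the compressed encodings, this would allow us to determine [the answer] with a single additional query,'' without spelling out which $\SA$ entry to query; your choice of $\SA[n]$ in the $\Theta(1)$ case and the median rank in the $\Theta(n)$ case makes this explicit. One small caveat on your closing remark: the hypothesis that $\SA$ queries take $o(\sqrt{zn})$ time is not entirely idle---it guards against an index that itself issues oracle queries at query time, so the total oracle cost remains $o(\sqrt{zn})$ even in that broader model; under your (reasonable) assumption that the index is purely classical after construction, you are right that it is unnecessary.
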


\subsection{Parameterized Hardness of Obtaining the LZ77 Factorization and RL-BWT}
\label{sec:hardness_parse}

In this section, we aim to show that $\Omega(\sqrt{zn})$ ($\Omega(\sqrt{rn})$) input oracle queries are still required to obtain the LZ77 (RL-BWT resp.) encoding, even when $z$ ($r$ resp.) is restricted to a small range of possible values that are not necessarily constant or close to $n$.
We provide a reduction from the Threshold Problem, which is defined as follows:

\begin{prob}[Threshold Problem]
Given an oracle $f:[1\dd n]\rightarrow \{0,1\}$ and integer $t \ge 0$, determine if there exist at least $t$ inputs $i$ such that $f(i) = 1$, i.e., if  $S = |\{i \in [1\dd n] \mid f(i) = 1\}|$ satisfies $|S| \leq t$.
\end{prob}

Known lower bounds on the quantum query complexity state that, for $0\leq t < \frac{n}{2}$, at least $\Omega(\sqrt{(t+1)n})$ queries to the input oracle are required to solve the Threshold Problem~\cite{DBLP:journals/eatcs/HoyerS05,Pat92}.
Here, we treat $t$ as a polynomial function of $n$, i.e., $t = n^\xi$ where $\xi \in (0,1)$ is a constant.

One obstacle in using the Threshold problem to establish hardness results is that we cannot make assumptions concerning the size of the set $|S|$, and we do not assume knowledge of $z$ for the problem of finding the LZ77 factorization. 
The following Lemma helps to relate the size of the set $S$ and the number of LZ77 factors, $z$, in the binary string representation of $f$.

\begin{lemma}
\label{lem:z_leq_t}
If the string representation $X = \bigcirc_{i=1}^n f(i)$ has $|S| \geq 0$ ones, then the LZ77 factorization size of $X$ is $z \leq 3|S| + 2$.
\end{lemma}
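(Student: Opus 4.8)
The plan is to exhibit one explicit LZ77-like factorization of $X$ with at most $3|S|+2$ phrases, and then invoke the fact recalled in \cref{sec:prelim} that the greedy LZ77 parse is the \emph{shortest} LZ77-like factorization; hence $z=|\LZ(X)|$ is bounded by the number of phrases of any LZ77-like factorization we choose to write down.

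First I would describe $X$ around its $1$-positions: putting $m:=|S|$, we have $X=0^{a_0}\,1\,0^{a_1}\,1\cdots 1\,0^{a_m}$ for some integers $a_0,\dots,a_m\ge 0$ with $\sum_{i=0}^{m}a_i=n-m$. The elementary observation driving the construction is that a maximal zero block $0^{a}$ occupying positions $q,\dots,q+a-1$ can always be cut into at most two phrases: first the single character $X[q]=0$, which is permitted because the ``previous factor'' requirement only constrains phrases of length $>1$; and then, if $a\ge 2$, the phrase consisting of the $a-1$ zeros at positions $q+1,\dots,q+a-1$, which is a previous factor since the $a-1$ zeros at positions $q,\dots,q+a-2$ form an equal, earlier occurrence (overlap is allowed: the definition of a previous factor only requires the earlier starting index to lie in $[1\dd i)$). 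Likewise, each stand-alone character $1$ is taken as its own length-$1$ phrase.

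With this in hand, I assemble the factorization left to right: the leading block $0^{a_0}$ costs at most $2$ phrases, and for each $i=1,\dots,m$ the block $1\,0^{a_i}$ costs one phrase for the $1$ together with at most two phrases for $0^{a_i}$, i.e.\ at most $3$ phrases. Summing over the $m$ blocks gives at most $2+3m=3|S|+2$ phrases, and therefore $z\le 3|S|+2$.

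I do not anticipate a genuine obstacle here. The only points requiring care are checking that the definition of an LZ77-like factorization indeed (i) imposes no condition on length-$1$ phrases and (ii) permits overlapping previous occurrences—both of which hold as stated—and handling the degenerate cases $a_i=0$ (the block is empty and contributes nothing), $a_0=0$, and $m=0$, all of which the bound $3|S|+2$ absorbs trivially.
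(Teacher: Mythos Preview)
Your proposal is correct and follows essentially the same approach as the paper's proof: both argue that each maximal run of zeros contributes at most two phrases and each $1$ contributes at most one, yielding the bound $3|S|+2$. Your version is somewhat more explicit---constructing a concrete LZ77-like factorization and invoking the minimality of the greedy parse---whereas the paper argues this more tersely about the actual LZ77 factorization, but the underlying idea is identical.
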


\begin{proof}
Each run of $0$'s contributes at most two factors to the factorization and each $1$ symbol contributes at most one factor; hence, each $0^x1^y$ substring, $x \geq 0$, $y \geq 1$ accounts for at most $3$ factors.
Additional two factors account for a possible suffix of all $0$'s.
\end{proof}

For finding the RL-BWT, we can establish a similar result.
\begin{lemma}
\label{lem:rlbwt_runs}
If the string representation $X = \bigcirc_{i=1}^n f(i)$ has $|S| \geq 0$ ones, then the number of runs in BWT of $X$ is $r \leq 2|S|+1$.
\end{lemma}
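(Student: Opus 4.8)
The plan is to exploit a single structural fact, exactly paralleling the observation behind Lemma~\ref{lem:z_leq_t}: the Burrows--Wheeler transform of a string is a permutation of that string's characters. Consequently $\BWT(X)$ is itself a binary string in which \texttt{1} occurs exactly $|S|$ times and \texttt{0} occurs exactly $n-|S|$ times. This already pins down the bound.

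Concretely, I would first decompose $\BWT(X)$ into its maximal runs, writing $r_0$ and $r_1$ for the number of runs consisting of \texttt{0}'s and of \texttt{1}'s, respectively, so that $r = r_0 + r_1$. In a binary string consecutive runs must carry opposite symbols, hence the \texttt{0}-runs and \texttt{1}-runs alternate along $\BWT(X)$; an alternating sequence of two kinds of blocks has nearly balanced counts, so $r_0 \le r_1 + 1$. Next, since $\BWT(X)$ contains only $|S|$ copies of \texttt{1}, it cannot contain more than $|S|$ runs of \texttt{1}'s, i.e.\ $r_1 \le |S|$. Combining the two inequalities gives $r = r_0 + r_1 \le 2r_1 + 1 \le 2|S| + 1$, as claimed. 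The degenerate case $|S| = 0$ is immediate, since then $X = \texttt{0}^{n}$, so $\BWT(X) = \texttt{0}^{n}$ and $r = 1$.

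I do not anticipate any real obstacle here: once the symbol frequencies are transferred from $X$ to $\BWT(X)$, the statement is a two-line counting argument, and it is the BWT-counterpart of Lemma~\ref{lem:z_leq_t}. The only point meriting a sentence of care is the choice of BWT convention: for the sentinel-terminated variant $\BWT(X\$)$, the symbol $\$$ forms a singleton run and may split one run of the underlying binary BWT, so the bound degrades to $r \le 2|S| + 3$; this additive constant is harmless for the intended application, namely a reduction from the Threshold Problem (mirroring the LZ77 argument via Lemma~\ref{lem:z_leq_t}) showing that recovering the RL-BWT of $X$ requires $\Omega(\sqrt{rn})$ oracle queries even when $r$ is confined to a narrow polynomial window.
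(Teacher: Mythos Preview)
Your proof is correct and follows essentially the same approach as the paper: both arguments rest on the observation that the BWT is a permutation of $X$, so it is a binary string with exactly $|S|$ ones, and any such string has at most $2|S|+1$ maximal runs since runs of \texttt{0} and \texttt{1} must alternate. Your explicit decomposition into $r_0$ and $r_1$ is slightly more formal than the paper's phrasing, and your remark about the sentinel convention is a careful addition, but the core argument is identical.
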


\begin{proof}
Consider any permutation of a binary string with $|S|$ ones and $n-|S|$ zeroes.
To maximize the number of runs, we alternate between ones and zeroes. 
This creates at most two runs per every one. 
Accounting for the suffix of zeroes, this creates at most $2|S| + 1$ runs. 
Finally, note that the BWT is a permutation of $X$.
\end{proof}

Let the instance of the Threshold Problem with $f:[1\dd n] \rightarrow \{0,1\}$ and $t$ be given. Assume for the sake of contradiction that there exists an algorithm $A$ for finding the LZ77 factorization of $X$ in $q(n,z) = o(\sqrt{zn})$ queries for strings satisfying $t^{1-\varepsilon} \le z \le t$ for some constant $\varepsilon >0$. 
As an example, suppose that there exists an algorithm for finding the LZ77 factorization in $q(n,z)$ time whenever $n^{\frac{1}{2} - \varepsilon} \le z \le n^{\frac{1}{2}}$ for some constant $\varepsilon > 0$. 

Based on $n$, $t$, and $q(n,t)$, we create a query threshold $\kappa(n,t) \in \omega(q(n,t)) \cap o(\sqrt{(t+1)n})$. In particular, we can take $\kappa(n,t) = \sqrt{q(n,t)} \cdot \left((t+1)n\right)^\frac{1}{4}$, which has $\lim_{n\rightarrow \infty} \kappa(n,t) / \sqrt{(t+1)n} = 0$ and $\lim_{n\rightarrow \infty} q(n,t) / \kappa(n,t) = 0$. To solve the Threshold Problem with the LZ77 Factorization algorithm, do as follows:

\begin{enumerate}

\item Run the $\Ohtilde(\sqrt{zn})$-query algorithm from Section \ref{sec:lz}, but halt if the number of input queries reaches $\kappa(n,t)$.
If we obtain a complete encoding without halting, we output the solution; otherwise, we continue to Step 2.
This solves the Threshold Problem instance in the case where $z \leq t^{1-\varepsilon}$ in $o(\sqrt{(t+1)n})$ input queries.

\item If we did not obtain an encoding in Step 1, we next run the algorithm $A$ but halt early if the number of input queries exceeds $\kappa(n,t)$. 
If we halt with a completed encoding, we output the solution based on the encoding of $X$. 
If we do not halt with a completed encoding, we output that $|S| > t$.
\end{enumerate}

In the case where $|S| \le t$, we have $z \leq 3|S| + 2 \leq 3t + 2$, and our algorithm solves the Threshold Problem in Step 1 using $\Ohtilde(\sqrt{zn})$ queries if $z \le t^{1-\varepsilon}$, which is $o(\sqrt{(t+1)n})$, or in Step 2 using $q(n,z) = o(\sqrt{(t+1)n})$ queries if $z \in [t^{1-\varepsilon}\dd t]$.
In the case where $|S| > t$, the number of queries is still bounded by $\kappa(n,t) = o(\sqrt{(t+1)n})$ in Steps 1 and 2, and we output that $|S| > t$. 
In either case, we solve the Threshold Problem with $o(\sqrt{(t+1)n})$ input queries. As such, the assumption that algorithm $A$ exists contradicts the known lower bounds.

Using Lemma \ref{lem:rlbwt_runs}, a near-identical argument holds for computing the RL-BWT of $X$ with $z$ replaced by $r$, showing that $\Omega(\sqrt{rn})$ queries are required. The above proves the following.

\begin{theorem}\label{thm:lb_}
No quantum algorithm exists for computing the LZ77 factorization, or RL-BWT, using $o(\sqrt{zn})$ queries $(o(\sqrt{rn})$ queries resp.$)$ for all texts with $z \in [t^{1-\varepsilon}\dd t]$ $(r \in [t^{1-\varepsilon}\dd t]$ resp.$)$ where $t = n^\xi$, $\xi \in (0,1)$, and $\varepsilon > 0$ is any constant. This holds for alphabets of size at least two.
\end{theorem}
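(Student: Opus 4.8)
The plan is to reduce from the Threshold Problem, whose quantum query complexity is $\Omega(\sqrt{(t+1)n})$ for $0\le t<n/2$~\cite{DBLP:journals/eatcs/HoyerS05,Pat92}. Given an instance $(f,t)$ with $t=n^\xi$, I encode $f$ as the binary string $X=\bigcirc_{i=1}^n f(i)$ (rescaling $t$ by an absolute constant if needed so that $|S|\le t$ already implies $z\le t$, resp.\ $r\le t$). The key point is that \cref{lem:z_leq_t} and \cref{lem:rlbwt_runs} make the reduction parameter-aware: writing $|S|:=|\{i:f(i)=1\}|$, we have $z:=|\LZ(X)|\le 3|S|+2$ and the number of BWT runs satisfies $r\le 2|S|+1$. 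Hence $|S|\le t$ forces $z,r=\Oh(t)$, and from any LZ77 factorization (or RL-BWT) of $X$ one recovers $X$ by decompression and answers the Threshold instance with one extra query.

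Now suppose for contradiction that a quantum algorithm $A$ computes $\LZ(X)$ using $q(n,z)=o(\sqrt{zn})$ queries on every input with $z\in[t^{1-\varepsilon}\dd t]$. I combine $A$ with the generic $\Ohtilde(\sqrt{zn})$-query algorithm of \cref{thm:lz} under a common query budget $\kappa(n,t)$ chosen so that $\kappa=\omega(q(n,t))$, $\kappa=o(\sqrt{(t+1)n})$, and $\kappa$ still exceeds the (polylog-inflated) query cost $\Ohtilde(\sqrt{t^{1-\varepsilon}n})$ of the \cref{thm:lz} algorithm on inputs with $z\le t^{1-\varepsilon}$; such a $\kappa$ exists---e.g.\ $\kappa=\max\{\sqrt{q(n,t)}\cdot((t+1)n)^{1/4},\ \Ohtilde(\sqrt{t^{1-\varepsilon}n})\}$---since the polynomial gap $t^\varepsilon=n^{\xi\varepsilon}$ swamps every polylogarithmic factor. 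The solver runs in two phases: (i) run the \cref{thm:lz} algorithm, aborting as soon as $\kappa(n,t)$ queries are used; if it finishes, decompress and answer. (ii) Otherwise run $A$ with the same abort threshold; if it finishes, decompress and answer, and if it aborts, output $|S|>t$.

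For correctness, split on $|S|$. If $|S|\le t$ then $z\le t$, so either $z\le t^{1-\varepsilon}$, in which case phase (i) completes within budget (by the choice of $\kappa$) and answers correctly, or $z\in[t^{1-\varepsilon}\dd t]$, in which case phase (ii)'s honest run of $A$ completes within $q(n,z)\le q(n,t)<\kappa(n,t)$ queries and answers correctly. If $|S|>t$, then neither phase can have produced a complete LZ77 factorization of $X$---a complete encoding would reveal $|S|>t$ after decompression---so both phases abort within $\kappa(n,t)$ queries and the solver correctly outputs $|S|>t$. In every case the Threshold Problem is solved using $\kappa(n,t)=o(\sqrt{(t+1)n})$ queries, contradicting the lower bound; hence no such $A$ exists, which is exactly the asserted $\Omega(\sqrt{zn})$ bound for all texts with $z\in[t^{1-\varepsilon}\dd t]$. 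The RL-BWT statement follows by the identical argument with \cref{lem:rlbwt_runs} in place of \cref{lem:z_leq_t} and $r$ in place of $z$ throughout; the construction is binary, so the ``alphabet of size at least two'' clause needs nothing further.

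The step I expect to require the most care is the bookkeeping around the budget $\kappa(n,t)$: it must be small enough to asymptotically beat $\sqrt{(t+1)n}$ yet large enough that both the \cref{thm:lz} algorithm on $z\le t^{1-\varepsilon}$ instances and the hypothetical $A$ on $z\in[t^{1-\varepsilon}\dd t]$ instances run to completion under it, and all of the $\Ohtilde(\cdot)$ polylog slack---from \cref{thm:lz} itself and from the $\Oh(\log n)$-fold amplification of Grover calls---has to be absorbed into the polynomial window between $t^{1-\varepsilon}$ and $t$. It is precisely the hypothesis $t=n^\xi$ with a fixed exponent $\xi\in(0,1)$, rather than $t$ polylogarithmic, that provides that room, so this assumption is doing genuine work and is worth flagging.
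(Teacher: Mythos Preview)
Your proposal is correct and follows essentially the same two-phase reduction from the Threshold Problem that the paper uses, with the same choice of $\kappa(n,t)$ (your $\max$ with the $\Ohtilde(\sqrt{t^{1-\varepsilon}n})$ term just makes explicit what the paper leaves implicit). One sentence to tighten: in the $|S|>t$ case the phases \emph{may} still complete (e.g.\ $X=1^n$ has $|S|=n$ but $z=2$), so the clean argument is simply that if a phase completes you decompress and answer correctly, and if both abort you output ``$|S|>t$'' correctly---your parenthetical already hints at this, but the claim ``neither phase can have produced a complete factorization'' is not actually justified and should be dropped.
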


\subsection{Parameterized Lower Bounds for Computing the Value \texorpdfstring{$z$}{z}}
\label{sec:hardness_z_value}

We next turn our attention to the problem of determining the number of factors, $z$, in the LZ77-factorization.
This is potentially an easier problem than actually computing the factorization. 
Unlike the proof for the hardness of computing the actual LZ77 factorization that uses a binary alphabet, this proof uses a larger integer alphabet.

Given inputs $f:[1\dd n] \rightarrow \{0,1\}$ and $t \geq 0$ to the Threshold Problem, we construct an input oracle for a string $X$ for the problem of determining the size of the LZ77 factorization.
We first define the function $s: \{0,1\}\times [1\dd n] \rightarrow [0\dd n]$ as:
\[
s(f(i),i) = 
\begin{cases}
0 & \text{if }f(i) = 0,\\
i & \text{if }f(i) = 1.
\end{cases}
\]
Our reduction will create an oracle for a string $X= 0^{2n} \circ \$ \circ  \left(\bigcirc_{i=1}^n 0 \circ s(f(i),i) \right) \circ 0$.
Formally, we define the oracle $X:[1 \dd 4n+2] \rightarrow \{0, \$\} \cup [1\dd n]$ where
\[
X[i] = \begin{cases}
0 & \text{if }1 \leq i \leq 2n \text{ or } i \text{ is even},\\
\$ & \text{if } i = 2n+1, \\
s\left(f(\tfrac{i - (2n + 1)}{2}), \tfrac{i - (2n + 1)}{2}\right) & \text{if } i > 2n + 1 \text{ and } i \text{ is odd}.
\end{cases}
\]
Observe that every symbol in $X$ can be computed in constant time given access to the oracle $f$. 
The correctness of the reduction will follow from Lemma \ref{lem:thresh_eq_z}.

\begin{lemma}
\label{lem:thresh_eq_z}
The LZ77 factorization of $X$ constructed above has $z = 2|S| + 4$.
\end{lemma}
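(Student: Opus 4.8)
The plan is to directly compute the LZ77 factorization of
\[
X = 0^{2n} \circ \$ \circ \Bigl(\bigcirc_{i=1}^n 0 \circ s(f(i),i)\Bigr)\circ 0
\]
by parsing it greedily left to right and counting phrases, tracking separately the contribution of the leading block $0^{2n}\$$, the contribution of each pair $0\circ s(f(i),i)$, and the trailing $0$. First I would handle the prefix $0^{2n}\$$: the first phrase is $(0,0)$, the second phrase is the previous factor $0^{2n-1}$ (an occurrence starting at position $1$), and the third phrase is $(\$,0)$ since $\$$ is a fresh symbol. So the prefix $0^{2n}\$$ contributes exactly $3$ phrases, and crucially the parse boundary after it lands exactly at position $2n+1$, the position of $\$$.

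Next I would analyze the ``body'' $\bigcirc_{i=1}^n 0\circ s(f(i),i)$ position by position, arguing that the greedy parse always places a phrase boundary at every even position (i.e. after each ``$0$'' slot) and at every odd position in the body. The key observations: (i) the character $0$ is a previous factor (it occurred in $0^{2n}$), so whenever the parser is poised at a body position holding $0$, the greedy phrase is the longest previous factor starting there; (ii) if $f(i)=0$, then $s(f(i),i)=0$, so the two characters contributed by index $i$ are $00$, and starting from the even position before them the longest previous factor extends into the $00\cdots$ run — but I must check it cannot swallow too much and thus the count per such $i$ is controlled (it should contribute $1$ phrase, matching the run of zeros); (iii) if $f(i)=1$, then $s(f(i),i)=i$, a value that appears \emph{nowhere earlier} in $X$ (the values $1,\dots,n$ are distinct and each appears at most once, at its own odd body position), so the character $i$ forces a fresh single-character phrase $(i,0)$, and the preceding $0$ also gets parsed as part of/adjacent to the zero run. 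The upshot I would prove is: each index $i$ with $f(i)=0$ contributes $1$ phrase; each index $i$ with $f(i)=1$ contributes $2$ phrases (one for the separating $0$, which must break because the long zero-run phrase cannot cross the ``$i$'' symbol, and one for the fresh symbol $i$); and the final trailing $0$ is absorbed into the last zero-run phrase, contributing $0$ extra. Summing: $3$ (prefix) $+ (n-|S|)\cdot 1 + |S|\cdot 2 + 0 = n + |S| + 3$.

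Here I notice a discrepancy with the claimed bound $z = 2|S|+4$, so the real content of the proof — and the step I expect to be the main obstacle — is getting the bookkeeping of the zero-runs exactly right. The subtlety is that consecutive $0$'s from different indices with $f=0$ merge into one long run, so they do \emph{not} each cost a phrase; rather, a maximal run of zeros in the body of length $L$ is itself parsed with only $\Oh(1)$ phrases (likely just one, since it is a previous factor of $0^{2n}$, possibly split once when the greedy longest previous factor is bounded by how far the earlier occurrence reaches). So the correct accounting is: the body decomposes into maximal zero-runs separated by the ``$i$''-symbols for $i\in S$; there are $|S|$ such symbols, hence at most $|S|+1$ zero-runs; each zero-run costs $\Oh(1)$ phrases and each ``$i$''-symbol costs exactly $1$ phrase. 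I would pin down the exact constants: each maximal body zero-run is a \emph{single} previous-factor phrase (because $0^{2n}$ is long enough that any body zero-run, which has length $<2n$, is a suffix of some earlier-occurring block of zeros — actually it occurs starting at position $1$), giving $|S|+1$ zero-run phrases, plus $|S|$ symbol phrases, plus the trailing $0$ folded into the last run, for a body total of $2|S|+1$. Together with the $3$ prefix phrases, this gives $z = 2|S|+4$, matching the claim. The one place requiring genuine care is verifying that a body zero-run never forces a \emph{second} phrase (i.e. that greedy never ``runs out'' of the earlier occurrence mid-run) and that the ``$0$'' immediately preceding an ``$i$''-symbol is correctly counted as part of the zero-run rather than as its own phrase — both follow from the length of the $0^{2n}$ prefix dominating everything in the body, but I would spell out the inequality explicitly.
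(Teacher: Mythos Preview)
Your self-corrected argument is correct and is essentially the paper's own proof: three phrases for $0^{2n}\$$, then the body decomposes into $|S|+1$ maximal zero-runs (each a single previous-factor phrase because its length is at most $2n$, hence it occurs in the leading $0^{2n}$ block) interleaved with the $|S|$ fresh symbols $i\in S$, giving $3+(|S|+1)+|S|=2|S|+4$. Your first accounting ($n+|S|+3$) was indeed wrong for exactly the reason you identified, and the fix you describe---merging consecutive zeros across indices with $f(i)=0$ into one run---is precisely what the paper does; the inequality you flag as ``requiring genuine care'' (that each body zero-run has length at most $2n$ and hence is a single phrase) is the one sentence the paper spends on the body analysis.
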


\begin{proof}
The prefix $0^{2n} \circ \$$ always requires exactly 3 factors: $0$, $0^{2n-1}$, $\$$.
Any subsequent run of $0$'s  only requires one factor. 
This is because any run of $0$'s following this prefix is of length at most $2n$.
Next, note that the $0$ following $\$$ is the start of a new factor. Following this $0$ symbol, we claim that every $i$, where $f(i) = 1$, contributes exactly two new factors. This is due to a new factor being created for the leftmost occurrence of the symbol $i$ followed by a new factor for the run of $0$'s beginning after the symbol $i$. Observe that the first $0$ following the $\$$ is necessary for the lemma; otherwise, the cases where $f(1)= 1$ would result in a different number of factors.
\end{proof}

Similar to Section \ref{sec:hardness_parse}, assume for the sake of contradiction that we have an algorithm that solves the problem of determining the value $z$ in $q(n,z) = o(\sqrt{zn})$ for $z = [t^{1-\varepsilon}\dd t]$. We again define the same query threshold $\kappa(n,t) \in \omega(q(n,t)) \cap o(\sqrt{(t+1)n})$ as given earlier. Then, to solve the instance of the Threshold Problem, the two-step algorithm from Section \ref{sec:hardness_parse} is applied to the oracle for $X$, resulting in a solution using $o(\sqrt{(t+1)n})$ input queries. This demonstrates the following.

\begin{theorem}\label{thm:lb_value}
No quantum algorithm exists for computing the number of LZ77 factors of a text $X[1\dd n]$ that uses only $o(\sqrt{zn})$ queries for all texts with $z \in [t^{1-\varepsilon}\dd t]$, where $t = n^\xi$, $\xi \in (0,1)$, and $\varepsilon > 0$ is any constant.
\end{theorem}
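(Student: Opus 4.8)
The plan is to finish the reduction from the Threshold Problem that was already prepared above, using the string $X = 0^{2n}\circ\$\circ\big(\bigcirc_{i=1}^n 0\circ s(f(i),i)\big)\circ 0$ together with the exact count $z = 2|S|+4$ from Lemma~\ref{lem:thresh_eq_z}. Since every symbol of $X$ is computable with $\Oh(1)$ queries to the Threshold oracle $f$, and since $|S|$ is recoverable from $z$ via $|S| = (z-4)/2$, any procedure that outputs the number of LZ77 factors of $X$ also decides whether $|S|\le t_{\mathrm{Thr}}$. So, fixing a Threshold instance $(f, t_{\mathrm{Thr}})$ with $t_{\mathrm{Thr}} = n^\xi$, I would suppose toward a contradiction that a quantum algorithm $A$ computes $z$ using $q(n,z) = o(\sqrt{zn})$ queries for all inputs with $z \in [t^{1-\varepsilon}\dd t]$, setting the algorithm's parameter to $t := 2t_{\mathrm{Thr}}+4 = \Theta(n^\xi)$ so that the $|S|$-window $[\Theta(t_{\mathrm{Thr}}^{1-\varepsilon}),\, t_{\mathrm{Thr}}]$ is mapped exactly onto the $z$-window on which $A$ is promised correct.

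Next I would run the two-stage procedure of Section~\ref{sec:hardness_parse} on the oracle for $X$, with an intermediate query budget $\kappa(n,t) := \sqrt{q(n,t)}\cdot\big((t+1)n\big)^{1/4}$, which satisfies $\kappa(n,t)\in\omega(q(n,t))\cap o(\sqrt{(t+1)n})$. Stage~1 runs the $\Ohtilde(\sqrt{zn})$-query algorithm of \cref{thm:lz} (which reveals the whole LZ77 factorization, hence $z$), aborting if it reaches the budget; when $z \le t^{1-\varepsilon}$ it uses only $\Ohtilde(\sqrt{t^{1-\varepsilon}n}) = o(\sqrt{(t+1)n})$ queries, so it finishes and we output the answer. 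Stage~2, reached only when $z > t^{1-\varepsilon}$, runs $A$ under the promise $z\in[t^{1-\varepsilon}\dd t]$, again aborting at the budget; if it finishes it returns $z$ and hence the answer, and if it aborts we conclude $z > t$, i.e.\ $|S| > t_{\mathrm{Thr}}$. A short case analysis gives correctness: if $|S|\le t_{\mathrm{Thr}}$ then $z \le t$, so exactly one of the two stages returns the correct value within $o(\sqrt{(t+1)n})$ queries; if $|S| > t_{\mathrm{Thr}}$ then neither stage exceeds the budget, and we correctly report $|S| > t_{\mathrm{Thr}}$ after $\Oh(\kappa(n,t)) = o(\sqrt{(t+1)n})$ queries. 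This contradicts the $\Omega(\sqrt{(t_{\mathrm{Thr}}+1)n})$ lower bound for the Threshold Problem~\cite{DBLP:journals/eatcs/HoyerS05,Pat92}, which proves the theorem.

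Since Lemma~\ref{lem:thresh_eq_z} and the construction are already in hand, there is no single hard step; the care points are all in the plumbing. The main one is making sure the affine rescaling $z = 2|S|+4$ really lines the intended $|S|$-window up with the $z$-window on which $A$ is assumed correct, and that the constant-factor slack in $t$ leaves $\sqrt{(t+1)n}$ unchanged up to constants. The second is calibrating the two halting thresholds so that every input whose $z$ falls \emph{below} the window is resolved already in Stage~1 while the combined query count stays $o(\sqrt{(t+1)n})$ (if one is worried that $\kappa(n,t)$ could be smaller than $\Ohtilde(\sqrt{t^{1-\varepsilon}n})$ for a pathologically small $q$, one may simply cap Stage~1 at $\sqrt{(t+1)n}/\log^{c} n$ instead). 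The third, essentially trivial, point is that aborting $A$ mid-run is legitimate in the query model, since we only count queries and an aborted run merely produces no output.
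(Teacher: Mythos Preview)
Your proposal is correct and follows essentially the same route as the paper: reduce from Threshold via the string $X$ and Lemma~\ref{lem:thresh_eq_z}, then reuse the two-stage halting procedure of Section~\ref{sec:hardness_parse} with the intermediate budget $\kappa(n,t)$. Your explicit rescaling $t := 2t_{\mathrm{Thr}}+4$ is in fact tidier than the paper, which silently identifies the Threshold parameter with the upper end of the $z$-window and absorbs the constant factors into the $\Theta(\cdot)$.
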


\paragraph{Acknowledgement.}
S. Thankachan is partially supported by the U.S. National Science Foundation (NSF) awards CCF-2315822 and CCF-2316691.
	\bibliographystyle{alphaurl} 
	\bibliography{main}
\end{document}